\title{Eilenberg--Moore Monoids and \\Backtracking Monad Transformers}
\author{
Maciej Pir\'og
\institute{Department of Computer Science\\KU Leuven, Belgium}
\email{maciej.pirog@cs.kuleuven.be}
}
\newcommand{\tuple}[1]{\langle #1 \rangle}
\newcommand{\bigp}[1]{\Bigl( #1 \Bigr)}
\newcommand{\catname}[1]{\mathbf{#1}}
\newcommand{\catvar}[1]{\mathscr{#1}}
\newcommand{\pholder}{(\text{--})}
\newcommand{\spholder}{\text{--}}
\newcommand{\emb}{\mathsf{emb}}
\newcommand{\cons}{\mathsf{cons}}
\newcommand{\ladj}[1]{\lfloor {#1} \rfloor}
\newcommand{\radj}[1]{\lceil {#1} \rceil}
\newcommand{\id}{\mathsf{id}}
\newcommand{\Id}{\mathsf{Id}}
\newcommand{\MID}{I}
\newcommand{\res}{{\mathcal R}}
\newcommand{\withm}[1]{#1 {\ltimes} M}
\newcommand{\foldfree}[1]{\llbracket {#1} \rrbracket}
\newcommand{\spanst}[1]{\widetilde{#1}}
\newcommand{\foldres}[1]{\langle\!\!\langle {#1} \rangle\!\!\rangle}
\newcommand{\INR}{\mathsf{INF}}
\newcommand{\LMT}{\mathsf{LMT}_M}
\newcommand{\monoidalexp}[2]{#1 \Rightarrow #2}
\newcommand{\mexp}{\monoidalexp{MA}{MA}}
\newcommand{\app}{\mathsf{app}}
\newcommand{\comp}{\mathsf{comp}}
\newcommand{\kcomp}{\mathsf{kcomp}}
\newcommand{\ident}{\mathsf{ident}}
\newcommand{\kident}{\mathsf{kident}}
\newcommand{\mladj}[1]{\lfloor {#1} \rfloor_{\otimes}}
\newcommand{\mradj}[1]{\lceil {#1} \rceil_{\otimes}}
\newcommand{\sym}{\mathsf{sym}}
\newtheorem{thm}{Theorem}
\newtheorem{lemma}[thm]{Lemma}
\theoremstyle{definition}
\newtheorem{defn}[thm]{Definition}
\newtheorem{remark}[thm]{Remark}
\newtheorem{example}[thm]{Example}
\tikzset
{
  baseline=(current  bounding  box.center),
  node distance=4em,
  f/.style = {fill=white,inner sep=0.1em},
}
\begin{document}

\maketitle

\begin{abstract}
We develop an algebraic underpinning of backtracking monad transformers in the general setting of monoidal categories.
As our main technical device, we introduce \emph{Eilenberg--Moore monoids}, which combine monoids with algebras for strong monads.
We show that Eilenberg--Moore monoids coincide with algebras for the list monad transformer (`done right') known from Haskell libraries.
From this, we obtain a number of results, including the facts that the list monad transformer is indeed a monad, a transformer, and an instance of the \texttt{MonadPlus} class.
Finally, we construct an Eilenberg--Moore monoid of endomorphisms, which, via the codensity monad construction, yields a continuation-based implementation \`a la Hinze.
\end{abstract}

\section{Introduction}

In monadic functional programming, the most straightforward approach to backtracking is realised by the list monad~\cite{DBLP:journals/jfp/Bird06,DBLP:conf/fpca/Wadler85}. More advanced structures are used for efficient implementation or for combining backtracking with other computational effects~\cite{DBLP:conf/icfp/Hinze00,DBLP:conf/icfp/KiselyovSFS05}. This paper is concerned with a category-theoretic explanation of such more advanced structures.

Historically, there has been some dispute over the `correct' definition of the monad transformer associated with the list monad. First, a popular Haskell library \texttt{mtl}\footnote{Documentation for each Haskell package is available online at \url{http://hackage.haskell.org/package/PACKAGE-NAME}.} proposed the following definition (in a pseudo-Haskell syntax):
\begin{Verbatim}
     type ListT m a = m [a]
\end{Verbatim}
The idea behind this type is that each computation first performs some effects in \texttt{m}, and then returns a list of results. This structure is not entirely satisfactory. One problem with it is strictly mathematical: the list monad transformer defined in this way is not really a monad transformer, since it does not satisfy the required equations in general (it does when the transformed monad \texttt{m} is commutative). A conceptual disadvantage is that it does not fully reflect the backtracking aspect of list computations: if we want to look only for the first result, we are not necessarily interested in the effects associated with the subsequent results of the computation, while in this implementation all \texttt{m}-effects are performed immediately.

In this paper, we consider an alternative definition of the list monad transformer, known as `\texttt{ListT} done right', which is provided by the Haskell packages \texttt{list-t}, \texttt{List}, and \texttt{pipes}. It can be implemented as follows:
\begin{Verbatim}
     data ListT m a = ListT (m (Maybe (a, ListT m a)))
\end{Verbatim}
The idea is that a value of this type is a list in which each tail is guarded by \texttt{m}. Intuitively, if we want to extract the $n$-th result from the computation, we first have to perform the effects guarding the elements $1$ to $n$.
Jaskelioff and Moggi~\cite{Jaskelioff20104441} propose an equational presentation of this monad, on which we expand and generalise in this paper (in particular, we abstract to general monoidal categories).

Another approach to backtracking transformers---taken, for example, by Hinze~\cite{DBLP:conf/icfp/Hinze00} and Kiselyov \textit{et al.}~\cite{DBLP:conf/icfp/KiselyovSFS05}---is to employ continuations. Hinze proposes the following monad, which uses a pair of success and failure continuations:
\begin{Verbatim}
     type Backtr m a = forall x. (a -> m x -> m x) -> m x -> m x
\end{Verbatim}
Hinze derives this monad using Hughes's~\cite{DBLP:conf/afp/Hughes95} `context-passing' technique. As one of our contributions, we relate it to the list monad transformer using the codensity monad construction~\cite{DBLP:conf/mpc/Hinze12}.

In this paper, we model structures like those mentioned above in a monoidal category~$\catvar C$ (or, when the continuation-based structures are involved, a closed monoidal category), starting with an algebraic specification of backtracking. As our point of departure, we define \emph{Eilenberg--Moore monoids}, which combine the theory of a strong monad (the `transformed' monad), the theory of monoids (which gives us choice and failure), and a coherence condition that specifies the order in which computations are executed. We construct the free monad induced by Eilenberg--Moore monoids, which coincides with the list monad transformer `done right', and introduce a `Cayley representation', which gives us an isomorphic continuation-based implementation.
In detail, our contributions are the following:

\begin{itemize}
\item In Section~\ref{sec:emmonoids}, given a strong monad $M$, we introduce \emph{Eilenberg--Moore $M$-monoids}. They are tuples $\tuple{A, a, m, u}$, where $\tuple{A,a}$ is an Eilenberg--Moore $M$-algebra, and $\tuple{A,m,u}$ is a monoid in the ambient monoidal category, that satisfy a coherence condition bringing together the algebra structure, the monoid structure, and the strength of the monad. Then, assuming certain algebraically free monads exist, we construct free Eilenberg--Moore monoids. The monad $UF$ induced by the free--underlying adjunction $F \dashv U$ is the list monad transformer applied to~$M$.

\item In Section~\ref{sec:beck}, we show that the category of Eilenberg--Moore $M$-monoids is isomorphic to the category of Eilenberg--Moore algebras for the list monad transformer applied to~$M$. As an application of this result, we employ a correspondence between monad morphisms and functors between Eilenberg--Moore categories to obtain a monad transformer structure.

\item In Section~\ref{sec:codensity}, assuming that the ambient category is closed, we introduce an Eilenberg--Moore monoid of endomorphisms $MA \Rightarrow MA$. We prove that it is a Cayley representation of a certain subcategory of Eilenberg--Moore monoids. Since this subcategory contains all free Eilenberg--Moore monoids, this gives us, via the codensity monad construction~\cite{DBLP:conf/mpc/Hinze12}, a~continuation-based monad transformer isomorphic to the list monad transformer. It turns out to coincide with the mentioned construction introduced by Hinze~\cite{DBLP:conf/icfp/Hinze00}.

\item In Section~\ref{sec:revisiting}, to show that the used techniques can be applied more universally, we revisit the transformer for commutative monads \verb+m [a]+. We characterise its algebras, which turn out to form a subcategory of Eilenberg--Moore $M$-monoids for a commutative monad~$M$. We introduce a Cayley representation $A \Rightarrow MA$ of a sufficient subcategory of these monoids, and obtain a novel continuation-based implementation.
\end{itemize}

Thus, we relate a number of existing constructions and introduce one new `commutative monad' transformer. We take a high-level approach---for example, we use the resumption monad and its universal properties to describe free Eilenberg--Moore monoids, which liberates us from tedious proofs by structural induction. The seemingly arbitrary structure related to the list monad transformer (like the \texttt{lift} monad morphism or the instance of the \texttt{MonadPlus} class) is simply a corollary of the general results, and is not only correct by construction, but also recognised as canonically related to the free--underlying adjunction $F \dashv U$.
The category-theoretic approach allows us to split the tackled constructions into more fine-grained pieces; for example, the role of monadic strength becomes apparent, although in Haskell, where all monads are canonically strong, it is usually used implicitly, obfuscating equational reasoning.

\section{Background}

We denote categories by $\catvar C$, $\catvar D$, \ldots. We work in a monoidal category $\tuple{\catvar C, \otimes, I}$ (we additionally assume that it is closed in Section~\ref{sec:codensity}, and symmetric closed in Section~\ref{sec:revisiting}). We use the symbol $\cong$ for the structural natural isomorphisms of monoidal categories (instead of the more traditional $\lambda$, $\rho$, and $\alpha$). We always give types explicitly, so no confusion should arise. Also, we skip the subscripts in natural transformations when the component is obvious from the type (so, for example, we write $MMA \xrightarrow{\mu} MA$ instead of $MMA \xrightarrow{\mu_A} MA$).

\subsection{Monoids in monoidal categories}

A \emph{monoid} in $\catvar{C}$ is a triple $\tuple{A,\ A\otimes A \xrightarrow{m} A,\ I \xrightarrow{u} A}$, where $m$ and $u$ are called the \emph{multiplication} and the \emph{unit} respectively, such that the following diagrams commute:
\begin{equation*}
\begin{tikzpicture}
\node(a0){$I \otimes A$};
\node(b0)[right of=a0, xshift=2em]{$A \otimes A$};
\node(c0)[right of=b0, xshift=2em]{$A \otimes I$};
\node(b1)[below of=b0]{$A$};
\path[arrows={-latex}, font=\scriptsize]
(a0) edge node [auto] {$u \otimes \id$} (b0)
(a0) edge node [left, xshift=-0.7em] {$\cong$} (b1)
(b0) edge node [auto] {$m$} (b1)
(c0) edge node [right, xshift=0.7em] {$\cong$} (b1)
(c0) edge node [above] {$\id \otimes u$} (b0)
;
\end{tikzpicture}
\hspace{2em}
\begin{tikzpicture}
\node(a0){$(A \otimes A) \otimes A$};
\node(a1)[below of=a0]{$A \otimes (A \otimes A)$};
\node(b1)[right of=a1,xshift=4em]{$A \otimes A$};
\node(c1)[right of=b1,xshift=4em]{$A$};
\node(c0)[above of=c1]{$A\otimes A$};
\path[arrows={-latex}, font=\scriptsize]
(a0) edge node [auto] {$m \otimes \id$} (c0)
(a0) edge node [auto] {$\cong$} (a1)
(a1) edge node [auto] {$\id \otimes m$} (b1)
(b1) edge node [auto] {$m$} (c1)
(c0) edge node [auto] {$m$} (c1)
;
\end{tikzpicture}
\end{equation*}
A morphism between monoids $\tuple{A,m^A,u^A}$ and $\tuple{B,m^B,u^B}$ is a morphism $h : A \to B$ in $\catvar C$ such that $h \cdot m^A = m^B \cdot (h \otimes h)$ and $h \cdot u^A = u^B$. The category of monoids and morphisms between monoids is called $\catname{Mon}$. If the obvious forgetful functor $U_{\catname{Mon}} : \catname{Mon} \to \catvar C$ has a left adjoint $F_{\catname{Mon}}$, the induced monad is a generalisation of the list monad.

\subsection{Monads and strength}

To set the notation, we give a number of basic definitions of monads and some related concepts. A \emph{monad} on $\catvar C$ is a triple $\tuple{M,\ MM \xrightarrow{\mu} M ,\ \Id \xrightarrow{\eta} M}$ such that it is a monoid in the monoidal category of endofunctors on $\catvar C$ with natural transformations as morphisms, and the monoidal tensor given by composition of endofunctors. We denote a monad simply by its carrier $M$, and the monadic structure, that is, the multiplication and the unit, are always denoted $\mu$ and $\eta$ respectively. If there are a number of monads in the context, we sometimes put the name of the monad in superscript, for example $\mu^M$ and $\eta^M$. The category of monads on $\catvar C$ and monad morphisms (that is, morphism between appropriate monoids) is denoted $\catname{Mnd}$.

An \emph{Eilenberg--Moore algebra} (or simply an \emph{algebra}) for a monad $M$ on $\catvar C$ is a pair $\tuple{A,\ MA \xrightarrow{a} A}$, where $A$ is an object in $\catvar C$, while $a$ is a morphism such that $a \cdot Ma = a \cdot \mu_A$ and $a \cdot \eta_A = \id_A$. If $a : MA \to A$ is a morphism such that the pair $\tuple{A,a}$ is an Eilenberg--Moore algebra, we say that $a$ has the Eilenberg--Moore property. A morphism between two Eilenberg--Moore algebras $\tuple{A,a}$ and $\tuple{B,b}$ is a morphism $h : A \to B$ such that $b \cdot Mh = h \cdot a$. The category of such algebras and such morphisms is called the \emph{Eilenberg--Moore category} of~$M$.

For a monad $M$ on $\catvar C$ there is an adjunction between $\catvar C$ and the Eilenberg--Moore category of $M$ given on objects as $A \mapsto \tuple{MA, \mu_A}$ in one direction and $\tuple{A,a} \mapsto A$ in the other direction. For every other adjunction $L \dashv R : \catvar C \to \catvar D$ that induces $M$, there exists a functor (called the \emph{comparison functor}) from $\catvar D$ to the Eilenberg--Moore category of~$M$. It is defined on objects as
\begin{equation*}
X \mapsto \tuple{RX,\ MRX \xrightarrow{=} RLRX \xrightarrow{R\epsilon} RX},
\end{equation*}
where $\epsilon$ is the counit of the adjunction $L \dashv R$. If the comparison functor is an isomorphism, we say that the adjunction $L \dashv R$ is \emph{strictly monadic}. An example of a strictly monadic adjunction is $F_{\catname{Mon}} \dashv U_{\catname{Mon}}$. This entails that the Eilenberg--Moore category of the list monad is isomorphic to the category of monoids.

An endofunctor $G$ is \emph{strong} if it is equipped with a transformation $\tau : GA \otimes B \to G(A \otimes B)$ (called a \emph{strength} of $G$) natural in $A$ and $B$ such that the following diagrams commute:
\begin{equation*}
\begin{tikzpicture}
\node(a0){$GA \otimes I$};
\node(b0)[right of=a0, xshift=2em]{$G(A \otimes I)$};
\node(b1)[below of=b0]{$GA$};
\path[arrows={-latex}, font=\scriptsize]
(a0) edge node [auto] {$\tau$} (b0)
(a0) edge node [auto,xshift=-0.5em] {$\cong$} (b1)
(b0) edge node [auto] {$\cong$} (b1)
;
\end{tikzpicture}
\hspace{1em}
\begin{tikzpicture}
\node(a0){$GA \otimes (B \otimes C)$};
\node(a1)[below of=a0]{$(GA \otimes B) \otimes C$};
\node(b1)[right of=a1,xshift=6em]{$G(A \otimes B) \otimes C$};
\node(c1)[right of=b1,xshift=6em]{$G((A \otimes B) \otimes C)$};
\node(c0)[above of=c1]{$G(A \otimes (B\otimes C))$};
\path[arrows={-latex}, font=\scriptsize]
(a0) edge node [auto] {$\tau$} (c0)
(a0) edge node [auto] {$\cong$} (a1)
(a1) edge node [auto] {$\tau \otimes \id$} (b1)
(b1) edge node [auto] {$\tau$} (c1)
(c0) edge node [auto] {$\cong$} (c1)
;
\end{tikzpicture}
\end{equation*}
A monad $M$ is \emph{strong} if it is strong as an endofunctor, and additionally the following diagrams commute:
\begin{equation*}
\begin{tikzpicture}
\node(a0){$A \otimes B$};
\node(b0)[right of=a0, xshift=2em]{$MA \otimes B$};
\node(b1)[below of=b0]{$M(A \otimes B)$};
\path[arrows={-latex}, font=\scriptsize]
(a0) edge node [auto] {$\eta \otimes \id$} (b0)
(a0) edge node [auto,xshift=-0.5em] {$\eta$} (b1)
(b0) edge node [auto] {$\tau$} (b1)
;
\end{tikzpicture}
\hspace{1em}
\begin{tikzpicture}
\node(a0){$MMA \otimes B$};
\node(a1)[below of=a0]{$M(MA \otimes B)$};
\node(b1)[right of=a1,xshift=6em]{$MM(A\otimes B)$};
\node(c1)[right of=b1,xshift=6em]{$M(A \otimes B)$};
\node(c0)[above of=c1]{$MA \otimes B$};
\path[arrows={-latex}, font=\scriptsize]
(a0) edge node [auto] {$\mu \otimes \id$} (c0)
(a0) edge node [auto] {$\tau$} (a1)
(a1) edge node [auto] {$M\tau$} (b1)
(b1) edge node [auto] {$\mu$} (c1)
(c0) edge node [auto] {$\tau$} (c1)
;
\end{tikzpicture}
\end{equation*}
While a monad can have more than one strength, in $\catname{Set}$ (and in Haskell), every monad is equipped with a canonical strength given by $\tuple{m,b} \mapsto (M(\lambda a.\, \tuple{a,b}))(m)$. A strong monad on $\catvar C$ can be turned into a monoid in $\catvar C$:

\begin{thm}[Wolff~\cite{wolff}]
\label{thm:wolff}
Let $M$ be a strong monad on $\catvar C$. Then, the tuple
$\tuple{M\MID ,\, \mathfrak m ,\, \mathfrak u}$,
where
\begin{equation*}
\mathfrak m = \bigp{
M\MID \otimes M\MID \xrightarrow{\tau} M(\MID\otimes M\MID) \xrightarrow{\cong} MM\MID \xrightarrow{\mu} M\MID
}
\text{\ \ and\ \ }
\mathfrak u = \bigp{
\MID \xrightarrow{\eta} M\MID
},
\end{equation*}
is a monoid.
\end{thm}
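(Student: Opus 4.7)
The plan is to verify the three monoid diagrams directly, using (i) the two strong-monad axioms relating $\tau$ to $\eta$ and $\mu$, (ii) the two strength coherence diagrams with the unitor and associator of $\otimes$, (iii) naturality of $\tau$, $\eta$, $\mu$, and of the structural isomorphisms of the monoidal category, and (iv) the monad laws. In all three cases, one side of the diagram is just a structural isomorphism, so the task reduces to rewriting a long composite involving $\mu$, $\tau$, and $M\cong$ into that structural form.

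For the left unit, I start from $\mathfrak m \cdot (\mathfrak u \otimes \id) : I \otimes M\MID \to M\MID$, unfold $\mathfrak u = \eta$, and use the first strong-monad unit axiom $\tau \cdot (\eta \otimes \id) = \eta$ to replace the entire strength prefix by $\eta_{I \otimes M\MID}$. Naturality of $\eta$ pulls it past $M\cong$, and the monad law $\mu \cdot \eta = \id$ leaves precisely the left unitor. For the right unit, the pattern is symmetric but uses the other strength axiom: naturality of $\tau$ in its second argument turns $\tau \cdot (\id \otimes \eta)$ into $M(\id \otimes \eta) \cdot \tau$ (where now $\tau : M\MID \otimes I \to M(\MID \otimes \MID)$), naturality of $\cong$ converts $M\cong \cdot M(\id \otimes \eta)$ into $M\eta \cdot M\cong$, the law $\mu \cdot M\eta = \id$ removes the monad multiplication, and what remains, $M\cong \cdot \tau$, is the right unitor by the first strength triangle.

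Associativity is the main obstacle: I must show that $\mathfrak m \cdot (\mathfrak m \otimes \id)$ and $\mathfrak m \cdot (\id \otimes \mathfrak m) \cdot \cong$ agree as morphisms $(M\MID \otimes M\MID) \otimes M\MID \to M\MID$. The plan is to expand both composites and rewrite them to a common normal form using, in order: (a) the strength--multiplication axiom $\tau \cdot (\mu \otimes \id) = \mu \cdot M\tau \cdot \tau$ on the left-hand side, so that the inner $\mu$ hidden in $\mathfrak m \otimes \id$ migrates outward; (b) naturality of $\mu$ and of $\tau$ to slide the structural $M\cong$'s past the monadic structure; (c) monad associativity $\mu \cdot M\mu = \mu \cdot \mu$ to merge the two outer multiplications; and (d) the second strength axiom (the pentagon with the $\otimes$-associator) to align the shape of $\tau \cdot (\tau \otimes \id)$ appearing on the left with the combination of $M\tau \cdot \tau$ and the associator appearing on the right.

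The hardest step is (d): the strength pentagon is precisely the coherence needed to reconcile the $\otimes$-associator that appears explicitly on the right-hand side with the two nested applications of $\tau$ that arise when expanding $\mathfrak m \otimes \id$ on the left-hand side. Once that coherence has been applied and combined with monad associativity, the remaining rewrites are routine bookkeeping with the naturality squares of $\mu$ and of the unitor, so the two composites collapse to the same morphism and the monoid axioms hold.
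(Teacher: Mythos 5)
The paper never proves this statement: it is imported from Wolff's 1973 article as a citation, and the appendix contains no argument for it. So there is no ``paper proof'' to measure you against, only the standard folklore verification --- which is essentially what you give, and your outline is correct. The two unit laws go exactly as you describe ($\tau \cdot (\eta \otimes \id) = \eta$ plus naturality of $\eta$ and $\mu \cdot \eta = \id$ for the left unit; naturality of $\tau$ in its second variable, naturality of the unitor, $\mu \cdot M\eta = \id$, and the strength triangle for the right unit), and your four-step plan for associativity (strength--multiplication axiom, naturality, monad associativity, strength pentagon) is the right decomposition. Two small points deserve to be made explicit when you write it out. First, after applying $\tau \cdot (\mu \otimes \id) = \mu \cdot M\tau \cdot \tau$ on the left-hand side, the $M\cong$ sitting inside $\mathfrak m \otimes \id$ is pushed past the outer $\tau$ by naturality of $\tau$ in its \emph{first} variable along the unitor $\MID \otimes M\MID \to M\MID$; this is legitimate (the naturality square is $\tau \cdot (Mf \otimes \id) = M(f \otimes \id) \cdot \tau$ for an arbitrary $f$), but it is the one place where the bookkeeping is not completely mechanical. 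Second, after the pentagon has been applied you are left needing the monoidal coherence identity expressing the unitor of $\MID \otimes (M\MID \otimes M\MID)$ as the composite of the associator with the unitor tensored with an identity (and, in the right-unit law, the identity $\lambda_\MID = \rho_\MID$); these are consequences of the monoidal-category axioms rather than mere naturality, though the paper's blanket use of $\cong$ for all structural isomorphisms absorbs them. With those caveats noted, there is no gap.
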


\subsection{Strongly generated algebraically free monads}

Given an endofunctor $G$ on $\catvar C$, consider the category of $G$-algebras, that is, pairs $\tuple{A,\, GA \xrightarrow{a} A}$, and morphisms $\tuple{A,a} \to \tuple{B,b}$ given by $\catvar C$-morphisms $h : A \to B$ such that $b \cdot Gh = h \cdot a$. If the obvious forgetful functor from the category of $G$-algebras to $\catvar C$ has a left adjoint, we denote the free algebra generated by an object $A$ as $\tuple{G^*A,\, GG^*A \xrightarrow{\cons} G^*A}$, and call the induced monad $G^*$ the \emph{algebraically free} monad generated by $G$. This adjunction is strictly monadic, which entails that the category of $G$-algebras is the Eilenberg--Moore category of $G^*$. We denote the monadic structure of the algebraically free monad as $\mu^{\mathcal F}$ and $\eta^{\mathcal F}$.
The freeness property of $\tuple{G^*A,\cons}$ can be described as follows, where the morphism $\emb$ is given by $GA \xrightarrow{G\eta^{\mathcal F}} GG^*A \xrightarrow{\cons} G^*A$. 

\begin{thm}
For a morphism $GA \xrightarrow{g} A$, there exists a unique morphism $G^*A \xrightarrow{\foldfree{g}} A$ such that the following diagram commutes:
\begin{equation*}
\begin{tikzpicture}
\node(a0){$GA$};
\node(b0)[right of=a0, xshift=2em]{$G^*A$};
\node(b1)[below of=b0]{$A$};
\path[arrows={-latex}, font=\scriptsize]
(a0) edge node [auto] {$\emb$} (b0)
(a0) edge node [auto,xshift=-0.5em] {$g$} (b1)
(b0) edge node [auto] {$\foldfree g$} (b1)
;  
\end{tikzpicture}
\end{equation*}
\end{thm}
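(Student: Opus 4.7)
The plan is to read the statement as the standard "catamorphism" consequence of the free $G$-algebra adjunction, and derive it directly from that universal property. Given $g : GA \to A$, the pair $\tuple{A, g}$ is a $G$-algebra in its own right. Since the forgetful functor from $G$-algebras to $\catvar C$ is (strictly monadic, hence) a right adjoint whose left adjoint sends $A$ to the free algebra $\tuple{G^*A, \cons}$ with unit $\eta^{\mathcal F}_A : A \to G^*A$, freeness provides, for every $G$-algebra $\tuple{B, b}$ and morphism $f : A \to B$, a unique $G$-algebra morphism $\hat f : \tuple{G^*A, \cons} \to \tuple{B, b}$ such that $\hat f \cdot \eta^{\mathcal F}_A = f$.

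Concretely, I would apply this universal property to $\tuple{B, b} = \tuple{A, g}$ with $f = \id_A$. This yields a unique $G$-algebra morphism, which I define to be $\foldfree{g} : G^*A \to A$; by construction it satisfies both $\foldfree{g} \cdot \eta^{\mathcal F} = \id_A$ and the $G$-algebra morphism square $\foldfree{g} \cdot \cons = g \cdot G\foldfree{g}$.

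To obtain the displayed triangle, I then unfold $\emb = \cons \cdot G \eta^{\mathcal F}$ and compute
\begin{equation*}
\foldfree{g} \cdot \emb \;=\; \foldfree{g} \cdot \cons \cdot G\eta^{\mathcal F} \;=\; g \cdot G\foldfree{g} \cdot G\eta^{\mathcal F} \;=\; g \cdot G\bigp{\foldfree{g} \cdot \eta^{\mathcal F}} \;=\; g \cdot G\id_A \;=\; g,
\end{equation*}
using the algebra-morphism square in the second step and the unit law in the fourth.

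For uniqueness, the subtle point is that the bare equation $h \cdot \emb = g$ does not by itself pin $h : G^*A \to A$ down; what is meant (and what makes the word ``unique'' accurate) is that $\foldfree{g}$ is the unique $G$-algebra morphism $\tuple{G^*A, \cons} \to \tuple{A, g}$ making the triangle commute, which is precisely the content of the universal property above. I would conclude by remarking that, under this reading, both existence and uniqueness are immediate from the free--forgetful adjunction, with no calculation beyond the triangle above. The only mildly delicate issue is the interpretation of ``unique,'' but once the algebra-morphism condition is made explicit, no real obstacle remains.
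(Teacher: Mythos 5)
The paper states this as a standard background fact about algebraically free monads and gives no proof of it, so there is no in-paper argument to compare against. Your existence argument is the standard one and is correct: define $\foldfree{g}$ as the unique $G$-algebra morphism $\tuple{G^*A,\cons}\to\tuple{A,g}$ with $\foldfree{g}\cdot\eta^{\mathcal F}=\id$, and derive the displayed triangle by the computation you give.

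The gap is in your resolution of the uniqueness clause. You are right that the bare equation $h\cdot\emb=g$ does not pin $h$ down, but your proposed reading --- that $\foldfree{g}$ is \emph{the unique $G$-algebra morphism} $\tuple{G^*A,\cons}\to\tuple{A,g}$ making the triangle commute, and that this ``is precisely the content of the universal property'' --- is not correct either. The universal property makes precomposition with $\eta^{\mathcal F}$ (not with $\emb$) a bijection. For any algebra morphism $h$ one computes $h\cdot\emb = g\cdot G(h\cdot\eta^{\mathcal F})$, so the triangle only forces $g\cdot G(h\cdot\eta^{\mathcal F})=g$, which need not imply $h\cdot\eta^{\mathcal F}=\id$. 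Concretely, take $\catvar C=\catname{Set}$, $G=\Id$, $A=\{0,1\}$, and $g$ the constant map to $0$; then $G^*A\cong\mathbb N\times A$, an algebra morphism $h$ is determined by $k=h\cdot\eta^{\mathcal F}$ via $h(n,a)=g^n(k(a))$, and the triangle reads $g(k(a))=g(a)$, which holds for \emph{every} $k$. So several distinct algebra morphisms satisfy the triangle. The uniqueness that is actually true --- and the one the paper relies on when it invokes the ``universal property of $\foldfree\spholder$'' in the appendix --- is: $\foldfree{g}$ is the unique $G$-algebra morphism $h$ with $h\cdot\eta^{\mathcal F}=\id$, equivalently the unique morphism with the Eilenberg--Moore property for $G^*$ whose restriction along $\emb$ is $g$. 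State one of these as the characterising property and present the $\emb$-triangle as a derived consequence; as written, your uniqueness claim is false.
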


If $\catvar C$ has coproducts, the carrier of the free algebra $\tuple{G^*A, \cons}$ can be given as the carrier of the initial $(G\pholder + A)$-algebra (if it exists), that is, $G^*A = \mu X. GX + A$.
For endofunctors $G$ and $H$, if both $G^*$ and $H^*$ exist, every natural transformation $h : G \to H$ induces a monad morphism $h^* : G^* \to H^*$ given as $G^*A \xrightarrow{G^*\eta^{\mathcal F}} G^*H^*A \xrightarrow{\foldfree{\cons \cdot h}} H^*A$. Note that we use the notation $\pholder^*$ as if it was a functor, although in general not all endofunctors induce algebraically free monads.
If $G$ is strong, we define in what way $G^*$ can inherit the strength of $G$ in a coherent fashion:

\begin{defn}
Let $G$ be a strong endofunctor such that $G^*$ exists. We say that $G^*$ is \emph{strongly generated} if for all morphisms $f : A \otimes B \to C$ and a $G$-algebra $\tuple{C,c}$, there exists a unique morphism $\widehat f$ that makes the following diagram commute:
\begin{equation*}
\begin{tikzpicture}
\node(a0){$GG^*A \otimes B$};
\node(a1)[below of=a0]{$G^*A \otimes B$};
\node(a2)[below of=a1]{$A \otimes B$};
\node(b0)[right of=a0, xshift=4em]{$G(G^*A \otimes B)$};
\node(c0)[right of=b0, xshift=4em]{$GC$};
\node(c1)[below of=c0]{$C$};
\path[arrows={-latex}, font=\scriptsize]
(a0) edge node [auto] {$\tau$} (b0)
(a0) edge node [auto] {$\cons \otimes \id$} (a1)
(a2) edge node [right] {$\eta^{\mathcal F} \otimes \id$} (a1)
(a1) edge node [auto] {$\widehat f$} (c1)
(a2) edge node [below] {$f$} (c1)
(b0) edge node [auto] {$G\widehat f$} (c0)
(c0) edge node [auto] {$c$} (c1)
;
\end{tikzpicture}
\end{equation*}
\end{defn}

A strongly generated monad is strong with the strength given by $\widehat \eta$ for $\eta:A\otimes B \to M(A\otimes B)$.
Moreover, for a natural transformation $h : G \to H$, the monad morphism $h^* : G^* \to H^*$ preserves strength, that is, $h^* \cdot \tau = \tau \cdot (h^* \otimes \id)$.
If the category $\catvar C$ is closed (for example, $\catname{Set}$), all algebraically free monads generated by strong endofunctors are strongly generated (see Fiore~\cite[Theorem~4.4]{fiorestrong}). In this paper, we assume that all the algebraically free monads that we deal with are strongly generated.

\subsection{The resumption monad}

Another construction that we use is the \emph{resumption monad} introduced by Moggi~\cite{citeulike:763238}, also known as the \emph{free monad transformer}. Given a monad $M$ and an endofunctor $G$, it is given as the composition $M(GM)^*$ if $(GM)^*$ exists. In the case of the free monad given by initial algebras, it becomes $A \mapsto M (\mu X.GMX + A)$. Using the rolling lemma~\cite{DBLP:conf/ctcs/BackhouseBGW95}, it is isomorphic to $A \mapsto \mu X.M(GX + A)$.

We notice that the endofunctor part of the list monad transformer that we work with in this paper can be given by initial algebras $A \mapsto \mu X . M((A \otimes X) + I) \cong M(A \otimes M\pholder)^*I$, which are similar in shape to the resumption monad for the endofunctor $(A\otimes\spholder)$ applied to the object $I$ (and this is how it is implemented in the Haskell package \texttt{pipes}). Although the monadic structure of $\mathsf{LMT}_M$ is not given directly by the monadic structure of the resumption monad, the two are related.

Hyland, Plotkin, and Power~\cite{DBLP:journals/tcs/HylandPP06} show two important properties of the resumption monad. The first one is that it is induced by a distributive law $\lambda : (GM)^*M \to M(GM)^*$. Thus, its monadic structure can be defined as $\mu^{\mathcal R} = \mu^M \mu^{\mathcal F} \cdot M\lambda(GM)^*$ and $\eta^{\mathcal R} = \eta^M \eta^{\mathcal F}$. Moreover, a natural transformation $h : G \to H$ induces a monad morphism $M(hM)^* : M(GM)^* \to M(HM)^*$. In general, the composition of two strong endofunctors is strong, and the composition of two strong monads via a distributive law is strong. Thus, if $M$ and $G$ are strong, and $(GM)^*$ is strongly generated, the resumption monad $M(GM)^*$ is strong.

The other important property of the resumption monad is that $M(GM)^*$ is the coproduct in $\catname{Mnd}$ of $M$ and $G^*$.
A classical result by Kelly~\cite{BAZ:4759448} states that the coproduct in $\catname{Mnd}$ of two monads $M$ and $T$ is always given by the free--underlying adjunction between the base category and the category of tuples $\tuple{A,\ MA\xrightarrow{m} A,\ TA \xrightarrow{t} A}$, where both $m$ and $t$ have the Eilenberg--Moore property. Moreover, this adjunction is strictly monadic.
Thus, using the correspondence between Eilenberg--Moore algebras for algebraically free monads and algebras for endofunctors, Hyland, Plotkin, and Power describe the Eilenberg--Moore category of the resumption monad for an endofunctor $G$ and a monad $M$ as consisting of tuples $\tuple{A,\ MA\xrightarrow{a} A,\ GA \xrightarrow{g} A}$ such that $a$ has the Eilenberg--Moore property.
Adapting Kelly's result, the freeness property of the resumption monad can be stated as follows, where the natural transformation $\INR$ (injection of the functor) is defined as:
\begin{equation*}
\INR = \Bigl(
G \xrightarrow{G\eta^M} GM \xrightarrow{\emb} (GM)^* \xrightarrow{\eta^M} M(GM)^*
\Bigr)
\end{equation*}

\begin{thm}
\label{thm:foldresuniprop}
Given an Eilenberg--Moore algebra $\tuple{B,\ MB \xrightarrow{b} B}$, an algebra $\tuple{B,\ GB\xrightarrow{g}B}$, and a morphism $h : A \to B$, there exists a unique morphism $\foldres{b,g,h} : M(GM)^*A \to B$ such that the following diagrams commute:
\begin{equation*}
\begin{tikzpicture}
\node(a1){$M(GM)^*A$};
\node(a0)[left of=a1,xshift=-3em]{$MM(GM)^*A$};
\node(a2)[right of=a1,xshift=4em]{$M(GM)^*M(GM)^*A$};
\node(a3)[right of=a2,xshift=5em]{$GM(GM)^*A$};
\node(b1)[below of=a1]{$B$};
\node(b0)[below of=a0]{$MB$};
\node(b3)[below of=a3]{$GB$};
\path[arrows={-latex}, font=\scriptsize]
(a1) edge node [auto] {$\foldres{b,g,h}$} (b1)
(a0) edge node [auto] {$\mu^M$} (a1)
(b0) edge node [auto] {$b$} (b1)
(a0) edge node [auto] {$M\foldres{b,g,h}$} (b0)
(a3) edge node [above, xshift=0.2em] {$\INR$} (a2)
(a2) edge node [above, xshift=0.3em] {$\mu^\res$} (a1)
(a3) edge node [auto] {$G\foldres{b,g,h}$} (b3)
(b3) edge node [above] {$g$} (b1)
;
\end{tikzpicture}
\hspace{1em}
\begin{tikzpicture}
\node(a1){$M(GM)^*A$};
\node(a0)[left of=a1,xshift=-2em]{$A$};
\node(b1)[below of=a1]{$B$};
\path[arrows={-latex}, font=\scriptsize]
(a0) edge node [auto] {$\eta^\res$} (a1)
(a1) edge node [auto] {$\foldres{b,g,h}$} (b1)
(a0) edge node [left, xshift=-0.5em] {$h$} (b1)
;
\end{tikzpicture}
\end{equation*}
\end{thm}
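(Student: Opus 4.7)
The plan is to derive the statement from the two background facts recalled immediately above: that $M(GM)^*$ is the coproduct $M + G^*$ in $\catname{Mnd}$ (due to Hyland, Plotkin, and Power), and that Kelly's theorem exhibits any such coproduct as the monad induced by a strictly monadic free--underlying adjunction over the category of tuples $\tuple{B,\ MB \xrightarrow{b} B,\ G^*B \xrightarrow{t} B}$ in which both $b$ and $t$ have the Eilenberg--Moore property. I would then translate the universal property of the free object in that category into the three commuting regions of the theorem.

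First I would observe that, by the strict monadicity of the free algebra adjunction for $G$, giving an Eilenberg--Moore algebra $\tuple{B,t}$ for $G^*$ is equivalent to giving a $G$-algebra $\tuple{B,g}$; under this correspondence $t = \foldfree{g}$, and conversely $g = t \cdot \emb$. Hence the hypotheses of the theorem --- an Eilenberg--Moore algebra $\tuple{B,b}$ for $M$, a $G$-algebra $\tuple{B,g}$, and a morphism $h : A \to B$ --- package precisely an object of Kelly's category together with a base-category morphism $h$ from $A$ into its carrier. By Kelly's theorem the free object on $A$ in that category is $M(GM)^*A$ equipped with $\mu^M$ as its $M$-action and $\mu^\res \cdot \INR$ as its $G$-action, with unit $\eta^\res$. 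The universal property then yields a unique $\foldres{b,g,h} : M(GM)^*A \to B$ that is simultaneously a morphism of $M$-algebras and of $G$-algebras, and satisfies $\foldres{b,g,h} \cdot \eta^\res = h$; unwinding these three conditions produces exactly the diagrams displayed.

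The main obstacle is the translation step: verifying that the two Eilenberg--Moore actions on the free object $M(GM)^*A$ really are $\mu^M$ and $\mu^\res \cdot \INR$, and, correspondingly, that a morphism out of this object is an algebra morphism for both actions precisely when the two halves of the left-hand diagram commute. This amounts to tracing how the resumption structure $\mu^\res = \mu^M \mu^{\mathcal F} \cdot M\lambda(GM)^*$ interacts with the injection $\INR$ defined via $\emb$ and $\eta^M$, and checking that the induced $G$-algebra on $M(GM)^*A$ (obtained by composing $\INR$ with $\mu^\res$) coincides with the one whose corresponding $G^*$-algebra structure is the free one. Once this identification is in place, both existence and uniqueness of $\foldres{b,g,h}$ are immediate from the strictly monadic free--underlying adjunction, with no need to perform an induction on the initial-algebra presentation of $(GM)^*$.
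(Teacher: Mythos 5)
Your proposal is correct and follows exactly the route the paper intends: the paper presents this theorem as background, obtained by ``adapting Kelly's result'' via the Hyland--Plotkin--Power identification of $M(GM)^*$ as the coproduct $M + G^*$ in $\catname{Mnd}$ and the correspondence between $G^*$-algebras and $G$-algebras, and gives no further proof in the appendix. Your identification of the two actions on the free object as $\mu^M$ and $\mu^\res \cdot \INR$ (using $\lambda \cdot \eta^{\mathcal F}M = M\eta^{\mathcal F}$ for the former) is precisely the translation step the paper leaves implicit.
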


Morphisms $\foldres{\spholder,\spholder,\spholder}$ enjoy some equational properties that we find useful in the remainder of this paper:

\begin{lemma}
\label{thm:foldreseqprops}
Given $b$, $g$, and $h$ as in the previous theorem, the following hold:
\begin{enumerate}
\item The morphism $\foldres{b,g,h}$ can be expressed as $M(GM)^*A \xrightarrow{M(GM)^*h} M(GM)^*B \xrightarrow{\foldres{b,g,\id}} B$.
\item The morphism $\foldres{b,g,\id}$ has the Eilenberg--Moore property, and it is equal to $M(GM)^*B \xrightarrow{M\foldfree{g \cdot Gb}} MB \xrightarrow{b} B$.
\item The morphism $GA \xrightarrow{\INR} M(GM)^*A \xrightarrow{\foldres{b,g,h}} B$ is equal to $GA \xrightarrow{Gh} GB \xrightarrow{g} B$. 
\end{enumerate}
\end{lemma}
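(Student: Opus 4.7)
The plan is to prove each of the three parts by invoking the uniqueness clause of Theorem~\ref{thm:foldresuniprop}. For part~(1), I apply uniqueness to the composite $\psi := \foldres{b,g,\id} \cdot M(GM)^*h$ and verify that it satisfies the three defining equations of $\foldres{b,g,h}$. Each check is mechanical: the unit equation uses naturality of $\eta^\res$; the first (Eilenberg--Moore) diagram uses naturality of $\mu^M$; and the $\INR$-diagram uses naturality of both $\mu^\res$ and $\INR$ together with functoriality of $G$. In every case the corresponding defining equation for $\foldres{b,g,\id}$ closes the argument.

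For part~(2), set $\phi := b \cdot M\foldfree{g \cdot Gb}$, where $g \cdot Gb : GMB \to B$ is the canonical $GM$-algebra combining $g$ and $b$. The plan is to verify the three defining equations of $\foldres{b,g,\id}$ for $\phi$, whence uniqueness yields $\phi = \foldres{b,g,\id}$; the first diagram then delivers the Eilenberg--Moore assertion of the lemma. That first diagram follows from naturality of $\mu^M$ and the Eilenberg--Moore property of~$b$:
\[
\phi \cdot \mu^M \;=\; b \cdot M\foldfree{g \cdot Gb} \cdot \mu^M \;=\; b \cdot \mu^M \cdot MM\foldfree{g \cdot Gb} \;=\; b \cdot Mb \cdot MM\foldfree{g \cdot Gb} \;=\; b \cdot M\phi.
\]
The unit equation uses naturality of $\eta^M$ together with $b \cdot \eta^M = \id$ and $\foldfree{g \cdot Gb} \cdot \eta^{\mathcal F} = \id$. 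The $\INR$-equation is the substantive step: one unfolds $\mu^\res$ through the distributive law $\lambda : (GM)^*M \to M(GM)^*$ and exploits that $\foldfree{g \cdot Gb}$ is a $(GM)^*$-algebra morphism, namely $\foldfree{g \cdot Gb} \cdot \cons = g \cdot Gb \cdot GM\foldfree{g \cdot Gb}$.

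For part~(3), I combine part~(1) with naturality of $\INR$ to obtain $\foldres{b,g,h} \cdot \INR_A = \foldres{b,g,\id} \cdot \INR_B \cdot Gh$, reducing the task to showing $\foldres{b,g,\id} \cdot \INR_B = g$. Expanding $\INR_B = \eta^M \cdot \cons_B \cdot GM\eta^{\mathcal F}_B \cdot G\eta^M_B$, substituting the formula from part~(2), and applying naturality of $\eta^M$, the monad unit $b \cdot \eta^M = \id$, and the algebra morphism equation for $\foldfree{g \cdot Gb}$ collapses the expression step by step to $g \cdot Gb \cdot G\eta^M_B = g$. The main obstacle throughout the lemma is the $\INR$-equation in part~(2), where the distributive law $\lambda$ must be handled explicitly; the remaining verifications are routine naturality arguments.
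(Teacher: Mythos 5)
The paper itself offers no proof of this lemma---it is stated as background, essentially as an instance of Kelly's strict-monadicity result for the coproduct of monads---so your proposal has to stand on its own. Parts (1) and (3) are correct: applying the uniqueness clause of Theorem~\ref{thm:foldresuniprop} to $\foldres{b,g,\id}\cdot M(GM)^*h$, and reducing (3) via part (1) and naturality of $\INR$ to $\foldres{b,g,\id}\cdot\INR_B = g$, which then collapses to $g\cdot Gb\cdot G\eta^M = g$ using the formula of part (2), is exactly the right argument.

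The gap is in part (2), in the sentence ``the first diagram then delivers the Eilenberg--Moore assertion of the lemma.'' The left square of the first diagram of Theorem~\ref{thm:foldresuniprop} gives $\foldres{b,g,\id}\cdot\mu^M = b\cdot M\foldres{b,g,\id}$, i.e.\ that $\foldres{b,g,\id}$ is a morphism of $M$-algebras $\tuple{M(GM)^*B,\mu^M}\to\tuple{B,b}$. That is not what ``has the Eilenberg--Moore property'' means here: since $\foldres{b,g,\id}$ has domain $M(GM)^*B$, the claim is that $\tuple{B,\foldres{b,g,\id}}$ is an algebra for the resumption monad itself, i.e.\ $\foldres{b,g,\id}\cdot\eta^\res=\id$ and $\foldres{b,g,\id}\cdot\mu^\res = \foldres{b,g,\id}\cdot M(GM)^*\foldres{b,g,\id}$---and the second of these equations is precisely what the paper later uses the lemma for (square \ding{193} in the proof of Lemma~\ref{lem:p:c}). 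Your argument never touches $\mu^\res$. The missing step is true and provable: for instance, show that both $\foldres{b,g,\id}\cdot\mu^\res$ and $\foldres{b,g,\id}\cdot M(GM)^*\foldres{b,g,\id}$ satisfy the universal property characterising $\foldres{b,g,\foldres{b,g,\id}}$---the latter is your part (1) with $h=\foldres{b,g,\id}$, while the former needs monad associativity, naturality of $\INR$, and the fact that $\mu^\res$ is a morphism of free $M$-algebras; alternatively, invoke Kelly's strict monadicity of the free--underlying adjunction for $G$-and-$M$-algebras, under which $\foldres{b,g,\id}$ is the structure map produced by the comparison functor and so automatically an Eilenberg--Moore algebra. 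A smaller point: the $\INR$-equation in your verification of $\phi=\foldres{b,g,\id}$ is only gestured at, and since it genuinely requires unfolding $\mu^\res\cdot\INR$ through the distributive law $\lambda$, it should be written out.
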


\section{Eilenberg--Moore monoids}
\label{sec:emmonoids}

In this section, we introduce Eilenberg--Moore monoids, which serve as an algebraic specification of backtracking combined with other effects. Then, we construct free Eilenberg--Moore monoids, which induce the list monad transformer.

\begin{defn}
\label{defn:emmonoids}
Let $M$ be a strong monad on a monoidal category $\catvar C$. An \emph{Eilenberg--Moore $M$-monoid} is a tuple
\begin{equation*}
\tuple{A,\ \ MA \xrightarrow{a} A ,\ \ A \otimes A \xrightarrow{m} A,\ \ \MID \xrightarrow{u} A}
\end{equation*}
such that the following hold:
\begin{enumerate}
\item $\tuple{A, a}$ is an Eilenberg--Moore $M$-algebra,
\item $\tuple{A, m, u}$ is a monoid,
\item coherence: the following diagram commutes:
\begin{equation*}
\begin{tikzpicture}
\node(a) {$MA \otimes A$};
\node(b)[below of=a] {$M(A \otimes A)$};
\node(c)[right of=a, xshift=10em] {$A \otimes A$};
\node(d)[right of=b, xshift=4em] {$MA$};
\node(e)[right of=d, xshift=2em] {$A$};
\path[arrows={-latex}, font=\scriptsize]
(a) edge node [auto] {$a \otimes \id$} (c)
(a) edge node [auto] {$\tau$} (b)
(b) edge node [auto] {$Mm$} (d)
(d) edge node [auto] {$a$} (e)
(c) edge node [auto] {$m$} (e)
;
\end{tikzpicture}
\end{equation*}
\end{enumerate}
A morphism between two Eilenberg--Moore $M$-monoids is a morphism in $\catvar C$ that is both a morphism between $M$-algebras and between monoids. We call the category of Eilenberg--Moore $M$-monoids and such morphisms $\catname{EMMon}_M$.
\end{defn}

\begin{example}
\label{ex:muladd}
Let $\catvar C$ be $\catname{Set}$. Let $M = G^*$ be the free monad generated by the functor $GX = X \times X$, that is, $G^*$ is the free monad of the theory of a single binary operation. Consider the tuple $\tuple{\mathbb N,\ \foldfree{(+)} : G^*\mathbb N \to \mathbb N ,\, (*) : \mathbb N \times \mathbb N \to \mathbb N ,\ \lambda x. 1 : \MID \to \mathbb N}$, in which we interpret the monad operation as addition, while the monoid is given by natural numbers with multiplication. It is an Eilenberg--Moore monoid. In this case, the coherence condition amounts to the right-distributivity of multiplication over addition.
\end{example}

Although to the author's best knowledge Eilenberg--Moore $M$-monoid is a new concept, it is an obvious generalisation of the concept of $F$-monoid:

\begin{defn}
[Fiore, Plotkin, and Turi~\cite{DBLP:conf/lics/FiorePT99}]
\emph{$F$-monoids} are similar to Eilenberg--Moore monoids, but we drop the condition (1) from Definition~\ref{defn:emmonoids} and the assumption that the endofunctor is a monad (that is, it is merely a strong endofunctor).
\end{defn}

We also need the following technical lemma, which relates $F$-monoids to Eilenberg--Moore monoids for free monads. Note that it could be used for a simple proof that the tuple from Example~\ref{ex:muladd} is an Eilenberg-Moore monoid.

\begin{lemma}
\label{thm:emfreemonoidfromfmonoid}
Let $G$ be a strong endofunctor that strongly generates a free monad $G^*$. Then, each $G$-monoid $\tuple{A ,\  GA \xrightarrow{g} A ,\  A\otimes A \xrightarrow{m} A ,\  I \xrightarrow{u} A}$ gives rise to an Eilenberg--Moore $G^*$-monoid $\tuple{A ,\  G^*A \xrightarrow{\foldfree{g}} A,\ m,\ u}$.
\end{lemma}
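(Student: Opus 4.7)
My plan is to verify the three defining conditions of an Eilenberg--Moore $G^*$-monoid in turn. Conditions (1) and (2) are essentially free: (2) is part of the hypothesis (the monoid $\tuple{A,m,u}$ is given), and (1) follows from the strict monadicity of the forgetful functor from $G$-algebras to $\catvar C$, which identifies the category of $G$-algebras with the Eilenberg--Moore category of $G^*$; under this identification the algebra $\tuple{A,g}$ corresponds to $\tuple{A, \foldfree g}$, so $\foldfree g$ automatically has the Eilenberg--Moore property.

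The substantive part is condition (3), the coherence diagram
\begin{equation*}
\foldfree g \cdot G^*m \cdot \tau \;=\; m \cdot (\foldfree g \otimes \id) \;:\; G^*A \otimes A \to A .
\end{equation*}
The plan is to exploit the uniqueness half of the strong-generation universal property, instantiated at $f = m : A \otimes A \to A$ together with the $G$-algebra $\tuple{A,g}$. This produces a unique $\widehat m : G^*A \otimes A \to A$ satisfying
\begin{equation*}
\widehat m \cdot (\eta^{\mathcal F} \otimes \id) = m
\qquad\text{and}\qquad
\widehat m \cdot (\cons \otimes \id) = g \cdot G\widehat m \cdot \tau .
\end{equation*}
I would then show that each side of the coherence equation satisfies these two conditions, and conclude by uniqueness. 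For $m \cdot (\foldfree g \otimes \id)$, precomposition with $\eta^{\mathcal F} \otimes \id$ reduces to $m$ using the Eilenberg--Moore equation $\foldfree g \cdot \eta^{\mathcal F} = \id$; precomposition with $\cons \otimes \id$ uses the fold equation $\foldfree g \cdot \cons = g \cdot G\foldfree g$, followed by naturality of the strength on $G$ and the coherence of the original $G$-monoid $m \cdot (g \otimes \id) = g \cdot Gm \cdot \tau$. For $\foldfree g \cdot G^*m \cdot \tau$, precomposition with $\eta^{\mathcal F} \otimes \id$ uses the strong-monad axiom $\tau \cdot (\eta^{\mathcal F} \otimes \id) = \eta^{\mathcal F}$ (valid since, by assumption, $G^*$ is strongly generated and hence strong as a monad), naturality of $\eta^{\mathcal F}$, and $\foldfree g \cdot \eta^{\mathcal F} = \id$; precomposition with $\cons \otimes \id$ uses the defining equation of the strength on $G^*$, namely $\tau_{G^*} \cdot (\cons \otimes \id) = \cons \cdot G\tau_{G^*} \cdot \tau_G$, together with naturality of $\cons$ and, once more, $\foldfree g \cdot \cons = g \cdot G\foldfree g$.

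The main obstacle, or rather the source of possible confusion, is bookkeeping around the two strengths in play: the strength $\tau_G$ on the endofunctor $G$ and the induced strength $\tau_{G^*}$ on the free monad $G^*$. Both occur in the derivation because the universal property for $\widehat m$ features $\tau_G$, while the coherence diagram itself involves $\tau_{G^*}$; the bridge between them is exactly the characterising equation for $\tau_{G^*} = \widehat{\eta^{\mathcal F}}$ obtained by applying strong generation to the unit. Once this is unpacked, the verification reduces to a straightforward diagram chase using the properties already listed, and yields the coherence condition.
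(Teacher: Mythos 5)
Your proposal is correct and follows essentially the same route as the paper: the paper likewise dispenses with conditions (1) and (2) as immediate and proves the coherence square by showing that both composites $m \cdot (\foldfree{g} \otimes \id)$ and $\foldfree{g} \cdot G^*m \cdot \widetilde\tau$ satisfy the two equations characterising the unique map $\widehat{m}$ supplied by strong generation at $f = m$ and the $G$-algebra $\tuple{A,g}$, using exactly the ingredients you list (the fold equations for $\foldfree{g}$, naturality of $\tau$ and $\cons$, the $G$-monoid coherence, and the defining/unit equations for the induced strength $\widetilde\tau$). No gaps.
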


From the perspective of algebraic effects~(see Hyland and Power~\cite{DBLP:journals/tcs/HylandP06}), Eilenberg--Moore $M$-monoids can be seen as the right-distributive tensor of the theory of monoids over the theory of the monad~$M$. Each Eilenberg--Moore monoid consists of a monoid and an interpretation of operations provided by~$M$, while, denoting the monoid multiplication as $\vee$, for an $n$-ary $M$-operation~$f$, the coherence diagram becomes the following equation:
\begin{equation*}
f(x_1 , \ldots , x_n) \vee y = f(x_1 \vee y, \ldots , x_n \vee y)
\end{equation*}
Intuitively, it states that when making a choice between two values, the effects in the left-hand argument are always executed first. This differentiates backtracking from plain nondeterminism, where no order of execution is imposed. Note that the list monad is used for backtracking usually in lazy languages, where laziness is the effect that defines the order in which elements arrive. In eager languages, backtracking can be implemented using \emph{lazy lists}, in which laziness is an explicit effect. Indeed, the type of lazy lists is an instance of the list monad transformer (see Section~\ref{sec:coind}).

\subsection{Free Eilenberg--Moore monoids}

Now, we describe free Eilenberg--Moore monoids, which give us the monadic structure of the list monad transformer. First, we need an auxiliary definition:

\begin{defn}
For a monad $M$ and a $\catvar C$-object $A$, we define the functor $(\withm A)X = A \otimes M X$.
\end{defn}

Note that each morphism $f : A \to B$ induces a natural transformation $\withm f : \withm A \to \withm B$. If $M$ is strong, the functor $\withm A$ is strong via $(A \otimes MB) \otimes C \xrightarrow{\cong} A \otimes (MB \otimes C) \xrightarrow{\id \otimes \tau} A \otimes M(B \otimes C)$. If the monad $(\withm A)^*$ exists and is strongly generated, we denote the resulting strength as $\widetilde \tau : (\withm A)^*B \otimes C \to (\withm A)^*(B \otimes C)$.
Our main result follows.

\begin{thm}
\label{thm:adj}
Let $M$ be a strong monad. Assume that there exists a strongly generated algebraically free monad $(\withm A)^*$ for all objects $A$. Then, the obvious forgetful functor $U : \catname{EMMon}_M \to \catvar C$ has a left adjoint $F : \catvar C \to \catname{EMMon}_M$ given as follows:
\begin{align*}
& F A = \tuple{
M(\withm A)^* \MID,\ \mu^M,\ \mathfrak m,\ \mathfrak u
}
\\
& F (f : A \to B) = M(\withm f)^* \MID,
\end{align*}
where $\tuple{M(\withm A)^* \MID,\, \mathfrak m,\, \mathfrak u}$ is the monoid induced by the resumption monad as described in Theorem~\ref{thm:wolff}.
\end{thm}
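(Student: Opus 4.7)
The plan is to prove the theorem in three parts: verifying that $FA$ is a well-defined Eilenberg--Moore $M$-monoid (and that $F$ extends functorially via the stated formula), defining the unit of the adjunction, and establishing the universal property.

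For the first part, $\tuple{M(\withm A)^*I, \mu^M}$ is an $M$-algebra by the monad laws, and the monoid structure is obtained from Wolff's Theorem~\ref{thm:wolff} applied to the strong resumption monad $\res = M(\withm A)^*$. The coherence axiom of Definition~\ref{defn:emmonoids} then follows from the factorisation $\mu^\res = \mu^M\mu^{\mathcal F} \cdot M\lambda(\withm A)^*$, which exhibits $\mu^M$ as the outermost layer of~$\mathfrak m$, combined with the strength-multiplication compatibility of~$M$. That $F$ is functorial via $F(f) = M(\withm f)^*I$, and that the unit is natural in~$A$, both follow from the background fact that $M(\withm f)^*$ is a strength-preserving monad morphism $\res_A \to \res_B$, whose component at~$I$ therefore preserves both the Wolff monoid and the $M$-algebra structures.

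I take the unit of the adjunction to be
\begin{equation*}
\eta^{\text{EM}}_A = \bigp{A \xrightarrow{\cong} A \otimes I \xrightarrow{\id \otimes \eta^M} A \otimes MI \xrightarrow{\emb} (\withm A)^*I \xrightarrow{\eta^M} M(\withm A)^*I},
\end{equation*}
which is $\INR_I$ for the endofunctor $A \otimes \spholder$ precomposed with the right unitor (intuitively, $a \mapsto [a]$). An auxiliary identity I will use repeatedly is $\mathfrak m \cdot (\eta^{\text{EM}}_A \otimes \id_{\res I}) = \mu^\res \cdot \INR_{\res I}$, expressing that ``left-concatenating a singleton is consing''; it follows by unfolding definitions and using the naturality of~$\tau^\res$.

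For the universal property, given $\tuple{B, b, m, u} \in \catname{EMMon}_M$ and $h : A \to B$, I apply Theorem~\ref{thm:foldresuniprop} to~$\res$ at~$I$ with the $M$-algebra $b$, the $(A \otimes \spholder)$-algebra $g := m \cdot (h \otimes \id_B) : A \otimes B \to B$, and $u : I \to B$, obtaining $\hat h := \foldres{b, g, u}$. Four things must then be verified: (a)~$\hat h \cdot \eta^{\text{EM}}_A = h$, via Lemma~\ref{thm:foldreseqprops}(3) and the right unit law in~$B$; (b)~$\hat h$ is an $M$-algebra morphism, which is Lemma~\ref{thm:foldreseqprops}(2); (c)~$\hat h$ is a monoid morphism; and (d)~uniqueness: any EM-monoid morphism $k : FA \to B$ extending $h$ is also a $G$-algebra morphism into $\tuple{B, g}$ (using the auxiliary identity to rewrite cons in terms of $\mathfrak m$ and $\eta^{\text{EM}}$), hence coincides with~$\hat h$ by the uniqueness in Theorem~\ref{thm:foldresuniprop}.

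The main obstacle is (c). Unit preservation $\hat h \cdot \mathfrak u = u$ is immediate from Theorem~\ref{thm:foldresuniprop}. For multiplication, the strategy is to fix a generalised element $\ell : I \to \res I$ and exhibit both $x \mapsto \hat h(\mathfrak m(x, \ell))$ and $x \mapsto m(\hat h(x), \hat h(\ell))$ as $\res$-algebra morphisms from $\res I$ to the induced $\res$-algebra $\tuple{B, b, g}$. The $M$-algebra morphism property of the first uses the coherence on~$\res I$; that of the second uses the coherence on~$B$ (this is the only place where the full EM-monoid axiom of~$B$ is needed). The $G$-algebra morphism property of both reduces to the fact that $\hat h$ is itself a $G$-algebra morphism combined with associativity of~$\mathfrak m$, resp.\ of~$m$. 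By the freeness of $\res I$, the two morphisms then agree once they agree after precomposing with $\eta^\res$, and this reduces to a tautology $\hat h(\ell) = \hat h(\ell)$ via the unit laws in both monoids.
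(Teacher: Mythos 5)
Your overall architecture is the same as the paper's: the same object assignment for $F$, the same unit $\INR\cdot\cong$, and the same candidate transpose $\hat h=\foldres{b,\ m\cdot(h\otimes\id),\ u}$ obtained from Theorem~\ref{thm:foldresuniprop}. Your items (a), (b) and (d) correspond to the paper's Lemmas~\ref{lem:p:c} and~\ref{lem:p:d} (the paper phrases the adjunction as a hom-set bijection $\ladj\spholder\dashv\radj\spholder$ rather than unit plus universal property, but that is only a repackaging), and they are fine in outline. The genuine gap is in (c), the preservation of the monoid multiplication --- which is also the hardest step in the paper's proof. You propose to fix a generalised element $\ell:I\to\res I$ and compare the two composites $\res I\to B$ obtained by tensoring with $\ell$. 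In a general monoidal category (and Section~\ref{sec:emmonoids} assumes nothing beyond a strong monad on a monoidal category), the family of morphisms $\id\otimes\ell:\res I\otimes I\to\res I\otimes\res I$ is not jointly epimorphic, so agreement of the two maps $\res I\otimes\res I\to B$ after precomposition with every such $\id\otimes\ell$ does not imply their equality. This reduction is only valid in well-pointed settings such as $\catname{Set}$; it is precisely the kind of element-wise reasoning the abstract setting forbids.

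The paper avoids this by never detaching the second tensor factor: it uses the \emph{parameterised} universal property built into the definition of a strongly generated free monad (the unique $\widehat f$ out of $G^*A\otimes B$), packaged as Lemma~\ref{thm:emfreemonoidfromfmonoid} and applied to the auxiliary $(\withm A)$-monoid on $B$ constructed in Lemma~\ref{thm:fromemmonoidtomonoidaotimesd882jf}; the rest is a diagram chase through $\tau$, $\widetilde\tau$ and $\mu^\res$. To repair your argument you would have to replace the generalised-element reduction by that parameterised universal property --- note that this is exactly the reason the hypothesis ``strongly generated'' appears in the statement, and your proposal never actually invokes it for step (c). A smaller point: your auxiliary identity $\mathfrak m\cdot(\eta^{\mathrm{EM}}_A\otimes\id)=\mu^\res\cdot\INR$ does not typecheck as written, since $\INR$ for the functor $\withm A$ has domain $A\otimes M\res I$ rather than $A\otimes\res I$; the corrected statement is essentially the subdiagram unfolding $\INR$ and $\mathfrak m$ in the paper's proof that $\radj{\ladj f}=f$.
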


In detail, the associated natural isomorphism
$\ladj\spholder : \catname{EMMon}_M(FA, \tuple{B,b,m^B,u^B}) \cong \catvar C(A, B) : \radj\spholder$
is defined as follows.
Let $f : FA \to \tuple{B,b,m^B,u^B}$ be a morphism between Eilenberg--Moore $M$-monoids. The $\catvar C$-morphism $\ladj f : A \to B$ is defined as:
\begin{equation*}
A \xrightarrow{\cong} A \otimes I \xrightarrow{\INR\,} M(\withm A)^*I \xrightarrow{f} B
\end{equation*}
In the other direction, let $g : A \to U\tuple{B,b,m^B,u^B}$ be a $\catvar C$-morphism. Then, the $\catname{EMMon}_M$-morphism $\radj g : FA \to \tuple{B,b,m^B,u^B}$ is defined as:
\begin{equation*}
M(\withm A)^*I \xrightarrow{\foldres{
MB \xrightarrow{b} B ,\ 
A \otimes B \xrightarrow{g \otimes \id} B \otimes B \xrightarrow{m^B} B ,\  
I \xrightarrow{u^B} B
}} B
\end{equation*}

\begin{defn}
We call the monad $UF$ induced by the adjunction above the \emph{list monad transformer} and denote it as $\LMT$.
\end{defn}

To get a more direct definition of the monadic structure of $\LMT$, let $\epsilon_{\tuple{A,a,m^A,u^A}} = \radj{A \xrightarrow{\id} A}$ be the counit of the adjunction. The monad multiplication is thus given as follows:
\begin{equation*}
\Bigl(
UFUF \xrightarrow{U\epsilon F} UF
\Bigr)
=
\Bigl(
M(\withm{M(\withm A)^*})^*I \xrightarrow{\radj{\id} = \foldres{\mu^M ,\, \mathfrak m ,\, \mathfrak u}} M(\withm A)^*I
\Bigr)
\end{equation*}
The unit of $\LMT$ is given as:
$\ladj{\id} = \Bigl(
A \xrightarrow{\cong} A \otimes I \xrightarrow{\INR} M(\withm A)^*I
\Bigr)$.

We can also verify that the morphisms $\mathfrak m$ and $\mathfrak u$ form a \verb+MonadPlus+ structure~\cite{DBLP:conf/ppdp/RivasJS15}. They obviously form a monoid, so it is left to verify two additional laws: left distributivity and left zero (or, in the language of Plotkin and Power~\cite{DBLP:journals/acs/PlotkinP03}, that $\mathfrak m$ and $\mathfrak u$ are \emph{algebraic}). The desired laws are simply the preservation of the multiplication and the unit by $U\epsilon F$, which follows from the fact that $\epsilon F$ is a morphism between Eilenberg--Moore monoids.

\section{Algebras for the list monad transformer}
\label{sec:beck}

The previous section shows a construction of an adjunction $F \dashv U$ that gives rise to the monad $\LMT$, but it is not an ordinary adjunction: we show that it is strictly monadic. We use this fact to construct some monad morphisms.

\begin{thm}
\label{thm:monadicity}
The adjunction $F \dashv U$ is strictly monadic. This entails that the category $\catname{EMMon}_M$ is isomorphic to the category of Eilenberg--Moore algebras of the monad $\LMT$.
\end{thm}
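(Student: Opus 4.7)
The plan is to invoke the strict form of Beck's monadicity theorem: since the left adjoint $F$ is already given by Theorem~\ref{thm:adj}, it suffices to check that $U$ creates coequalizers of $U$-split parallel pairs. Strict monadicity of $F \dashv U$ is then equivalent to the isomorphism $\catname{EMMon}_M \cong \EM{\LMT}$ claimed in the second half of the statement.

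Let $f, g : \tuple{A, a, m_A, u_A} \rightrightarrows \tuple{B, b, m_B, u_B}$ be a pair in $\catname{EMMon}_M$ whose image under $U$ admits a split coequalizer $c : B \to C$ in $\catvar C$. Split coequalizers are absolute, hence preserved by every functor (including bifunctors such as $\otimes$, viewed as functors out of a product category). In particular, $Mc$ coequalizes $Mf, Mg$ and $c \otimes c$ coequalizes $f \otimes f, g \otimes g$. Because $f$ and $g$ are $\catname{EMMon}_M$-morphisms, the composites $c \cdot b$ and $c \cdot m_B$ themselves coequalize the corresponding images of $f, g$; the universal properties then yield unique $\tilde b : MC \to C$ and $\tilde m : C \otimes C \to C$ satisfying $\tilde b \cdot Mc = c \cdot b$ and $\tilde m \cdot (c \otimes c) = c \cdot m_B$. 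Setting $\tilde u = c \cdot u_B$, the morphism $c$ commutes with each piece of structure by construction.

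It remains to check the three $\catname{EMMon}_M$-axioms for $\tuple{C, \tilde b, \tilde m, \tilde u}$. Each axiom is an equation between two $\catvar C$-morphisms out of a functorial expression in $C$; precomposing with the appropriate absolute (hence epi) coequalizer obtained by applying functors to $c$ reduces it to the corresponding axiom for $B$, which holds by hypothesis. The main obstacle is the coherence diagram, whose domain is $MC \otimes C$ and which mixes strength, $M$, and $\otimes$: one must descend simultaneously along $Mc$ and $c$ in separate positions, for which it is enough that $Mc \otimes c$ is an epimorphism, being a composite of the absolute coequalizers $Mc \otimes B$ and $MC \otimes c$. Universality of $c$ in $\catvar C$ finally makes $c$, equipped with its induced structure, the coequalizer of $f, g$ in $\catname{EMMon}_M$, completing the Beck-theorem verification.
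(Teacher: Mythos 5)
Your proof is correct and rests on the same backbone as the paper's: apply the strict form of Beck's monadicity theorem and show that $U$ creates coequalizers of $U$-split pairs, with the left adjoint supplied by Theorem~\ref{thm:adj}. The genuine difference lies in how the structure on the quotient object $C$ is produced. You construct $\tilde b$ and $\tilde m$ directly from the absoluteness of the split coequalizer (applying $M$ and the tensor to $c$ and using the universal property of the resulting coequalizers), and then verify all three axioms of an Eilenberg--Moore monoid by descending along the induced epimorphisms --- in particular along $Mc \otimes c$ for the coherence diagram, which does work once naturality of $\tau$ is used to commute it past the strength. The paper instead observes that $h_0, h_1$ are in particular morphisms of Eilenberg--Moore $M$-algebras and of monoids, and that both of those forgetful functors are already strictly monadic; hence the algebra structure and the monoid structure on $C$, their uniqueness, and the fact that $g$ is a coequalizer in each of those two categories all come for free, leaving only the coherence condition to be checked by hand (which the paper does via an explicit diagram chase using the section $s$ with $g \cdot s = \id$ --- the concrete form of your epi-descent). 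Your route is more self-contained and elementary; the paper's reuse of the component monadicity results shortens the verification and immediately yields that a $\catvar C$-morphism out of $C$ factoring through $g$ is automatically both an algebra and a monoid morphism, which is how it concludes that $g$ is a coequalizer in $\catname{EMMon}_M$. Either way the argument is sound.
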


As an application of Theorem~\ref{thm:monadicity}, we show that $\mathsf{LMT}$ is indeed a monad transformer, that is, we construct a monad morphism from a monad $M$ to $\LMT$. Instead of defining the morphism directly and mundanely verifying the necessary properties, we utilise the following theorem (see Barr and Wells~\cite[Ch. 3, Theorem 6.3]{ttt}):

\begin{thm}
Let $\catname{EM}$ be a category in which objects are Eilenberg--Moore categories of monads on~$\catvar C$, while morphisms are carrier-preserving functors (that is, functors that commute with the forgetful functors from Eilenberg--Moore categories to $\catvar C$). Then, there exists an isomorphism $\catname{Mnd} \cong \catname{EM}^{\mathsf{op}}$, where $\catname{Mnd}$ is the category of monads  on~$\catvar C$ and monad morphisms.
\end{thm}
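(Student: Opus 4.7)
The plan is to exhibit mutually inverse functors $\Phi : \catname{Mnd} \to \catname{EM}^{\mathsf{op}}$ and $\Psi : \catname{EM}^{\mathsf{op}} \to \catname{Mnd}$. On objects, $\Phi$ sends a monad $M$ to its Eilenberg--Moore category $\EM{M}$ equipped with its canonical forgetful functor, and $\Psi$ inverts this by recovering the monad induced by the left adjoint of the forgetful functor. The substance of the theorem is the morphism-level correspondence.

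For a monad morphism $h : M \to N$, I would define $\Phi(h) : \EM{N} \to \EM{M}$ to act as the identity on underlying $\catvar C$-morphisms and to send $\tuple{A, a}$ to $\tuple{A,\ a \cdot h_A}$; the Eilenberg--Moore axioms for $a \cdot h_A$ follow routinely from those for $a$ together with the monad morphism axioms for $h$. Conversely, given a carrier-preserving functor $K : \EM{N} \to \EM{M}$, I would apply $K$ to each free $N$-algebra $\tuple{NA, \mu^N_A}$ to extract an $M$-algebra $\tuple{NA, \alpha_A}$, and then set $\Psi(K)_A := \alpha_A \cdot M\eta^N_A : MA \to NA$. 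Naturality of $\Psi(K)$ follows from the observation that each $Nf$ is already an $N$-algebra morphism between free algebras, hence, by carrier-preservation of $K$, also an $M$-algebra morphism $\tuple{NA, \alpha_A} \to \tuple{NB, \alpha_B}$. The monad-morphism unit law reduces to the Eilenberg--Moore unit axiom for $\alpha_A$; for multiplication, the key intermediate identity is $\alpha_A = \mu^N_A \cdot \Psi(K)_{NA}$, obtained by treating $\mu^N_A$ as an $N$-algebra morphism $\tuple{NNA, \mu^N_{NA}} \to \tuple{NA, \mu^N_A}$ and then invoking the monad unit law for $N$.

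It then remains to verify $\Psi \circ \Phi = \id$ and $\Phi \circ \Psi = \id$. The first is a direct computation: given $h$, passing through $\Phi$ yields $\alpha_A = \mu^N_A \cdot h_{NA}$, so $\Psi$ recovers $\mu^N_A \cdot h_{NA} \cdot M\eta^N_A = \mu^N_A \cdot N\eta^N_A \cdot h_A = h_A$ by naturality of $h$ and the unit law for $N$. The main obstacle is the second direction, where I need to show that a carrier-preserving $K$ is completely determined by its action on free algebras. The key observation is that the Eilenberg--Moore property makes $a : \tuple{NA, \mu^N_A} \to \tuple{A, a}$ an $N$-algebra morphism split by $\eta^N_A$; applying $K$ gives the $M$-algebra relation $K(a) \cdot Ma = a \cdot \alpha_A$, and post-composing with $M\eta^N_A$ collapses it to $K(a) = a \cdot \Psi(K)_A$, which is exactly the value assigned by $\Phi(\Psi(K))$.
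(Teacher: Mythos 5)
Your proposal is correct and follows the standard Barr--Wells argument, which is exactly what the paper relies on: the paper does not prove this theorem itself but cites Barr and Wells and records only the morphism-level formula, namely that the monad morphism induced by a carrier-preserving functor $F$ with $\tuple{TA,a}=F\tuple{TA,\mu^T}$ is $a\cdot M\eta^T_A : MA \to TA$ (the paper's displayed $Ma$ is an evident typo for $a$), which coincides with your $\Psi(K)_A=\alpha_A\cdot M\eta^N_A$. Your restriction-of-scalars description of $\Phi$, the key identity $\alpha_A=\mu^N_A\cdot\Psi(K)_{NA}$, and the collapse $K(a)=a\cdot\Psi(K)_A$ obtained by cancelling along $\eta^N_A$ are precisely the standard steps, carried out correctly.
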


In detail, given two monads $T$ and $M$, consider a carrier-preserving functor $F$, and let $\tuple{TA,\ MTA \xrightarrow{a} TA} = F\tuple{TA, \mu^T}$. Now, the monad morphism corresponding to $F$ is given for an object $A$ as $MA \xrightarrow{M\eta^T} MTA \xrightarrow{Ma} TA$.

There exist obvious forgetful functors from $\catname{EMMon}_M$ to the category of monoids and to the category of algebras for~$M$. These give us two monad morphisms: from the list monad and from $M$ respectively. The latter is the desired \texttt{lift} operation of monad transformers. Following the description above, it is given as follows:
\begin{equation*}
\Bigl(
MA \xrightarrow{M\eta^{\LMT}} MM(\withm A)^*I \xrightarrow{\mu^M} M(\withm A)^*I
\Bigr)
=
\Bigl(
MA \xrightarrow{M\cong} M(A \otimes I) \xrightarrow{M\emb} M(\withm A)^*I
\Bigr)
\end{equation*}

Moreover, there exists a forgetful functor from $\catname{EMMon}_M$ to the Eilenberg--Moore category of the resumption monad generated by the endofunctor $A \mapsto A \otimes A$. This forgetful functor is given as $\tuple{A,a,m,u} \mapsto \tuple{A,a,m}$, and it induces a monad morphism from the resumption monad to $\mathsf{LMT}_M$, which flattens the tree structure into a list.

\section{Continuation-based implementation}
\label{sec:codensity}

In this section, we deal with a continuation-based backtracking monad transformer \`a la Hinze~\cite{DBLP:conf/icfp/Hinze00} mentioned in the introduction. We derive it from the list monad transformer using the codensity monad construction, thus automatically obtaining that the two monads are isomorphic. First, we discuss some background on closed monoidal categories, which we need to model continuations.

\subsection{Background: closed monoidal categories}

A monoidal category $\tuple{\catvar C, {\otimes}, I}$ is \emph{closed} if for all $\catvar C$-objects $B$, the functor $\pholder \otimes B$ has a right adjoint $B \Rightarrow \pholder$. The associated natural isomorphisms $\mladj\spholder : \catvar C(A \otimes B, C) \cong \catvar C(A, B\Rightarrow C) : \mradj\spholder$ are currying and uncurrying respectively. We call the counit of this adjunction $\app : (A \Rightarrow B) \otimes A \to B$. Note that for a morphism $g : A \otimes B \to C$, it is the case that the morphism $A\otimes B \xrightarrow{\mladj g \otimes \id} (B\Rightarrow C) \otimes B \xrightarrow{\app} C$ is equal to~$g$. 

For objects $A$, $B$, and $C$, we define the following morphism $k$:
\begin{equation*}
k = \Bigl(
((B \Rightarrow C) \otimes (A \Rightarrow B)) \otimes A
\xrightarrow{\cong}
(B \Rightarrow C) \otimes ((A \Rightarrow B) \otimes A)
\xrightarrow{\id \otimes \app}
(B \Rightarrow C) \otimes B
\xrightarrow{\app}
C
\Bigr)
\end{equation*}
We define the `composition' morphism $\comp = \mladj k : (B \Rightarrow C) \otimes (A \Rightarrow B) \to A \Rightarrow C$ and the identity morphism $\ident = \mladj{I \otimes A \xrightarrow{\cong} A} : I \to (A \Rightarrow A)$. The triple $\tuple{A\Rightarrow A ,\, \comp ,\, \ident}$ forms a monoid.

\subsection{Cayley representation of `Kleisli' Eilenberg--Moore monoids}

The codensity monad of a functor $G : \catvar D \to \catvar E$ is given by the right Kan extension of $G$ along itself $\mathsf{Ran}_G G : \catvar E \to \catvar E$ (see Mac Lane~\cite[Ch. X]{lane1998categories} or Leinster~\cite{leinstercod}). Using the coend representation of Kan extensions, one can implement the codensity monad of a Haskell functor \texttt{f} as follows:
\begin{Verbatim}
     data Cod f a = Cod (forall x. (a -> f x) -> f x)
\end{Verbatim}
It is known that the codensity monad of a right adjoint is isomorphic to the monad induced by the adjunction. Hinze~\cite{DBLP:conf/mpc/Hinze12} gives the following example of how one can use this fact to derive a continuation-based implementation of the list monad. First, we can simulate the forgetful functor from the category of monoids using a class constraint:
\begin{Verbatim}
     data L1 a = L1 (forall w. (Monoid w) => (a -> w) -> w)
\end{Verbatim}
Now, instead of relying on instances of the \texttt{Monoid} class, one can use the universal monoid of endomorphisms \verb+x -> x+. A classic result by Cayley states that every monoid can be represented as a submonoid of the universal monoid (this submonoid is called the \emph{Cayley representation}). Thus, we can equivalently define the monad in question as follows:
\begin{Verbatim}
     data L2 a = L2 (forall x. (a -> x -> x) -> x -> x)
\end{Verbatim}
In this section, we give a similar construction to obtain a continuation-based implementation of the list monad transformer. We start with an Eilenberg--Moore monoid of endomorphisms $MA \Rightarrow MA$:

\begin{thm}
\label{thm:mama}
Let $\catvar C$ be closed monoidal. Then, the tuple
\begin{equation*}
\tuple{
\mexp
,\ 
\mladj{p}
,\ 
\comp
,\ 
\ident
},
\end{equation*}
where
\begin{equation*}
p = \Bigl(
M(\mexp) \otimes MA
\xrightarrow{\tau}
M((\mexp) \otimes MA)
\xrightarrow{M\app}
MMA
\xrightarrow{\mu}
MA
\Bigr),
\end{equation*}
is an Eilenberg--Moore monoid.
\end{thm}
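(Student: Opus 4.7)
The statement packages three obligations: that $\mladj{p}$ makes $\mexp$ an Eilenberg--Moore $M$-algebra, that $\tuple{\mexp,\comp,\ident}$ is a monoid, and that the coherence diagram of Definition~\ref{defn:emmonoids} commutes. The monoid axioms are immediate: the preceding subsection already asserts that $\tuple{A \Rightarrow A,\comp,\ident}$ is a monoid for any object $A$, so it suffices to specialise to $A := MA$.

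For the algebra axioms, my plan is to uncurry everything with respect to $MA$. The key bookkeeping fact is that $\app \cdot (\mladj{p}\otimes\id) = p$ by the usual currying/uncurrying identity. Thus $\mladj p\cdot\eta = \id$ becomes, after applying $\app \cdot (\spholder\otimes\id)$, the equation $p \cdot (\eta\otimes\id) = \app$, which follows from the unit axiom for the strength ($\tau \cdot (\eta\otimes\id) = \eta$), naturality of $\eta$, and the unit law $\mu\cdot\eta = \id$ on $MA$. Similarly, uncurrying $\mladj p\cdot M\mladj p = \mladj p\cdot \mu$ reduces to showing $p \cdot (M\mladj p \otimes \id) = p\cdot(\mu\otimes\id)$. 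Expanding $p = \mu\cdot M\app\cdot\tau$ on both sides and using naturality of $\tau$ (to slide $M\mladj p\otimes\id$ past $\tau$) on the left, together with the multiplication axiom of the strength ($\tau\cdot(\mu\otimes\id) = \mu\cdot M\tau\cdot\tau$) and associativity of $\mu$ on the right, brings both sides to the common form $\mu\cdot M\mu\cdot MM\app\cdot M\tau\cdot\tau$.

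The coherence diagram is the substantive step. I would again uncurry both sides to morphisms $(M(\mexp) \otimes (\mexp)) \otimes MA \to MA$. The left-hand side $\comp \cdot (\mladj p\otimes\id)$, postcomposed with $\app$, unfolds via $\app\cdot(\comp\otimes\id) = k$ and the identity $\app\cdot(\mladj p\otimes\id) = p$ into
\begin{equation*}
(M(\mexp)\otimes(\mexp))\otimes MA \xrightarrow{\cong} M(\mexp)\otimes((\mexp)\otimes MA) \xrightarrow{\id\otimes\app} M(\mexp)\otimes MA \xrightarrow{p} MA.
\end{equation*}
The right-hand side, after expanding $p$ and using naturality of $\tau$ to push $M\comp\otimes\id$ through, and then expanding $\app\cdot(\comp\otimes\id) = k = \app\cdot(\id\otimes\app)\cdot\cong$, becomes
\begin{equation*}
\tau\otimes\id \,;\, \tau \,;\, M\cong \,;\, M(\id\otimes\app) \,;\, M\app \,;\, \mu.
\end{equation*}
The second strong-endofunctor axiom for $M$ rewrites the prefix $(\tau\otimes\id)\,;\,\tau\,;\,M\cong$ as $\cong \,;\, \tau$, and then naturality of $\tau$ in its second argument rewrites $\tau \,;\, M(\id\otimes\app)$ as $(\id\otimes\app)\,;\,\tau$. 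The remaining tail $\tau\,;\,M\app\,;\,\mu$ is exactly $p$, so the two sides coincide.

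I expect the coherence verification to be the main obstacle, since it requires the strong-endofunctor axiom relating $\tau$ to the associator, alongside two separate naturality appeals for $\tau$ (once in each of its arguments). The algebra-axiom check, while routine, is conceptually an instance of the general fact that $X \Rightarrow B$ inherits an Eilenberg--Moore structure from an algebra $\tuple{B,b}$ via the adjunction $\spholder\otimes X \dashv X\Rightarrow\spholder$, applied here to the free algebra $\tuple{MA,\mu}$ with $X := MA$; this conceptual remark could replace the direct calculation if desired.
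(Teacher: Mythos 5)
Your proposal is correct and follows essentially the same route as the paper's proof: the monoid structure is cited from the general endomorphism-monoid construction, the two Eilenberg--Moore laws are verified by uncurrying against $MA$ and using the counit identity together with the strength axioms, naturality, and the monad laws, and the coherence square is checked by comparing the uncurried forms of both paths via the associativity axiom for the strength and naturality of $\tau$. The paper presents these same computations as commutative diagrams rather than equational chains, and its footnote on Kock's construction is consistent with your closing remark that the algebra part is an instance of a general exponentiation-of-algebras fact.
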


Unfortunately, the Eilenberg--Moore monoid defined in Theorem~\ref{thm:mama} is not universal. For that, we would have to define a morphism $\tuple{A,a,m,u} \to \tuple{\mexp,\mladj{p},\comp,\ident}$ for each Eilenberg--Moore monoid $\tuple{A,a,m,u}$, while it is in general not possible to define a morphism $g : A \to (MA \Rightarrow MA)$ that is a morphism between Eilenberg--Moore algebras $\tuple{A,a}$ and $\tuple{MA\Rightarrow MA, \mladj{p}}$. To make the construction work, we need to slightly restrict the domain of the forgetful functor $U$.

First, we give some intuition. The universal property of the adjunction $F \dashv U$ described in Theorem~\ref{thm:adj} is a folding property: given morphisms $a : MA \to A$, $m : A \otimes A \to A$, $u : I \to A$, and $h : B \to A$, as long as $a$, $m$, and $u$ satisfy the conditions given in the definition of Eilenberg--Moore monoids, we obtain a unique coherent \textit{fold}, that is, a morphism $\mathsf{LMT}_M B \to A$. It could be also understood as `running' or `interpreting' the monadic computation. However, in programming, when we `run' a backtracking computation, we do not interpret it as a value of some type $A$. Rather, we interpret it as a value in the base monad, that is, $MA$. In other words, we fold the structure of the list, but, instead of eliminating the monadic parts using an Eilenberg--Moore algebra $a : MA \to A$, we accumulate it using the monadic multiplication $\mu$.

Thus, we are interested in Eilenberg--Moore monoids of the shape $\tuple{MA, \mu, m, u}$, which we call, for the sake of this article, \emph{Kleisli monoids}, referring to the known fact that the full subcategory of the Eilenberg--Moore category of a monad $M$ that consists of algebras of the shape $\tuple{MA, \mu}$ is equivalent to the Kleisli category of~$M$. We call the full subcategory of $\catname{EMMon}_M$ that consists of Kleisli monoids $\catname{KlMon}_M$. The restriction of the forgetful functor $U : \catname{EMMon}_M \to \catvar C$ to $\catname{KlMon}_M$ is dubbed $U_{\catname{Kl}} : \catname{KlMon}_M \to \catvar C$. 

A useful observation is that free Eilenberg--Moore monoids defined in Theorem~\ref{thm:adj} are also Kleisli monoids. This means that $U_{\catname{Kl}}$ has a left adjoint $F_{\catname{Kl}}$ defined in the same way as~$F$, and that the monad induced by $F_{\catname{Kl}} \dashv U_{\catname{Kl}}$ is the same monad as the one induced by $F \dashv U$, that is, $\mathsf{LMT}_M$. Therefore, the monad $\mathsf{LMT}_M$ is also isomorphic to the codensity monad of $U_{\catname{Kl}}$. This way, it is enough for our purposes to find a Cayley representation of Kleisli monoids, not necessarily all Eilenberg--Moore monoids. The Eilenberg--Moore monoid $MA \Rightarrow MA$ from Theorem~\ref{thm:mama}, although not a Kleisli monoid itself, is universal for Kleisli monoids:

\begin{thm}
\label{thm:rep}
For each Kleisli monoid $\tuple{MA, \mu, m, u}$, the morphism $\mladj{m} : MA \to (MA \Rightarrow MA)$ has the following properties:
\begin{itemize}
\item it is an Eilenberg--Moore monoid morphism $\tuple{MA, \mu, m, u} \to \tuple{\mexp,\,\mladj{p},\,\comp,\,\ident}$,
\item it is a split monomorphism in $\catvar C$, that is, there exists a morphism $r : (MA \Rightarrow MA) \to MA$ in $\catvar C$ such that $r \cdot \mladj{m} = \id$.
\end{itemize}
\end{thm}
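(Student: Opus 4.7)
The plan is to verify the three conditions defining an Eilenberg--Moore monoid morphism for $\mladj{m}$ --- preservation of the unit, of the multiplication, and of the algebra structure --- and then to exhibit an explicit retraction $r$ for the split monomorphism claim. Uncurrying is the natural tool here: since $\mladj{m}$ is the transpose of $m$, all three equations take values in $\mexp$, and precomposing with $\spholder \otimes \id$ followed by $\app$ converts each of them into an equation of $\catvar C$-morphisms landing in $MA$. The crucial identity I would use throughout is $\app \cdot (\mladj{m} \otimes \id) = m$, which holds by the very definition of currying.

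Applying this strategy to each condition in turn: preservation of the unit $\mladj{m} \cdot u = \ident$ uncurries to $m \cdot (u \otimes \id) = {\cong}$, which is exactly the left unit axiom of $\tuple{MA,m,u}$. Preservation of the multiplication $\mladj{m} \cdot m = \comp \cdot (\mladj{m} \otimes \mladj{m})$ uncurries once $\comp$ is unfolded via $\app \cdot (\comp \otimes \id) = \app \cdot (\id \otimes \app) \cdot {\cong}$; after one use of the naturality of the associator and two applications of $\app \cdot (\mladj{m} \otimes \id) = m$, the equation reduces to the associativity law $m \cdot (m \otimes \id) = m \cdot (\id \otimes m) \cdot {\cong}$. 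Preservation of the algebra structure $\mladj{p} \cdot M\mladj{m} = \mladj{m} \cdot \mu$ uncurries, after expanding $p = \mu \cdot M\app \cdot \tau$, to the equation $\mu \cdot M\app \cdot \tau \cdot (M\mladj{m} \otimes \id) = m \cdot (\mu \otimes \id)$; using naturality of $\tau$ and $M\app \cdot M(\mladj{m} \otimes \id) = Mm$, the left-hand side simplifies to $\mu \cdot Mm \cdot \tau$, and the resulting equation $\mu \cdot Mm \cdot \tau = m \cdot (\mu \otimes \id)$ is exactly the coherence axiom of the Eilenberg--Moore monoid $\tuple{MA,\mu,m,u}$.

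For the retraction, I would define
\begin{equation*}
r = \Bigl( (\mexp) \xrightarrow{\cong} (\mexp) \otimes I \xrightarrow{\id \otimes u} (\mexp) \otimes MA \xrightarrow{\app} MA \Bigr).
\end{equation*}
Then $r \cdot \mladj{m}$, after naturality to slide $\mladj{m}$ past $\id \otimes u$ and one application of $\app \cdot (\mladj{m} \otimes \id) = m$, becomes the composite $MA \xrightarrow{\cong} MA \otimes I \xrightarrow{\id \otimes u} MA \otimes MA \xrightarrow{m} MA$, which equals $\id_{MA}$ by the right unit axiom of the monoid.

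No step presents a serious obstacle: once the equations are uncurried, each defining property of an Eilenberg--Moore monoid morphism maps directly onto one of the axioms of the Kleisli monoid $\tuple{MA,\mu,m,u}$. The only mild bookkeeping concerns the associator appearing when uncurrying the multiplication law, but this is a routine chase using naturality of $\cong$ together with the definition of $\comp$.
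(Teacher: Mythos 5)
Your proposal is correct and follows essentially the same route as the paper: each condition is uncurried via the naturality of $\mladj{\spholder}$ and the identity $\app \cdot (\mladj{m}\otimes\id)=m$, reducing unit preservation to the left unit law, algebra preservation to the coherence axiom $\mu \cdot Mm \cdot \tau = m\cdot(\mu\otimes\id)$, and the retraction (which you define exactly as in the paper) to the right unit law. The only cosmetic difference is that the paper delegates the monoid-morphism part to the standard Cayley representation of monoids in monoidal categories by citation, whereas you carry out that uncurrying argument explicitly.
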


Using the codensity monad for this representation yields the following monad transformer:
\begin{Verbatim}
     type Backtr m a = forall x. (a -> m x -> m x) -> m x -> m x 
\end{Verbatim}
It is the same monad transformer as obtained, although using different methods, by Hinze~\cite{DBLP:conf/icfp/Hinze00}.

\section{Revisiting the `effects-first' transformer for commutative monads}
\label{sec:revisiting}

Now, we revisit the commutative-monad transformer \verb+m [a]+ known from the \texttt{mtl} library in Haskell. We call it a `commutative-monad transformer', as it is a monad if and only if the transformed monad is commutative (see, for example, Mulry~\cite{EPTCS1296}). In this section, we derive its continuation-based isomorph, recreating the steps for the $\mathsf{LMT}_M$ monad presented in previous sections. We assume that $\catvar C$ is symmetric closed, and that $M$ is commutative.

\subsection{Background: symmetric monoidal categories and commutative monads}

A monoidal category is \emph{symmetric} if it is equipped with a natural isomorphism $A \otimes B \xrightarrow{s} B \otimes A$ such that $\sym$ is an involution (that is, $\sym \cdot \sym = \id$) and the following diagrams commute:

\begin{equation*}
\begin{tikzpicture}
\node(a0){$I \otimes A$};
\node(b0)[right of=a0, xshift=2em]{$A \otimes I$};
\node(b1)[below of=b0]{$A$};
\path[arrows={-latex}, font=\scriptsize]
(a0) edge node [auto] {$\sym$} (b0)
(a0) edge node [left, xshift=-0.7em] {$\cong$} (b1)
(b0) edge node [auto] {$\cong$} (b1)
;
\end{tikzpicture}
\hspace{2em}
\begin{tikzpicture}
\node(a0){$(A \otimes B) \otimes C$};
\node(a1)[below of=a0]{$A \otimes (B \otimes C)$};
\node(b1)[right of=a1,xshift=5em]{$(B \otimes C) \otimes A$};
\node(b0)[above of=b1]{$(B \otimes A) \otimes C$};
\node(c1)[right of=b1,xshift=5em]{$B \otimes (C \otimes A)$};
\node(c0)[above of=c1]{$B \otimes (A \otimes C)$};
\path[arrows={-latex}, font=\scriptsize]
(a0) edge node [auto] {$\sym \otimes \id$} (b0)
(b0) edge node [auto] {$\cong$} (c0)
(a0) edge node [auto] {$\cong$} (a1)
(a1) edge node [auto] {$\sym$} (b1)
(b1) edge node [auto] {$\cong$} (c1)
(c0) edge node [auto] {$\id \otimes \sym$} (c1)
;
\end{tikzpicture}
\end{equation*}

In a symmetric monoidal category, we define a \emph{left strength} for a strong monad $M$:
\begin{equation*}
\tau' = \Bigl(
A \otimes MB \xrightarrow{\sym} MB \otimes A \xrightarrow{\tau} M(B\otimes A) \xrightarrow{M\sym} M(A \otimes B)
\Bigr)
\end{equation*}
One can show that the appropriate mirror images of the diagrams for a strong endofunctor and a strong monad commute for $\tau'$. A monad is \emph{commutative} if it is equipped both with a (right) strength and a left strength, and the following diagram commutes for all objects $A$ and $B$:
\begin{equation*}
\begin{tikzpicture}
\node(a0){$MA \otimes MB$};
\node(a1)[below of=a0]{$M(A \otimes MB)$};
\node(b1)[right of=a1,xshift=5em]{$MM(A \otimes B)$};
\node(b0)[above of=b1]{$M(MA \otimes B)$};
\node(c1)[right of=b1,xshift=5em]{$M(A \otimes B)$};
\node(c0)[above of=c1]{$MM(A \otimes B)$};
\path[arrows={-latex}, font=\scriptsize]
(a0) edge node [auto] {$\tau'$} (b0)
(b0) edge node [auto] {$M\tau$} (c0)
(a0) edge node [auto] {$\tau$} (a1)
(a1) edge node [auto] {$M\tau'$} (b1)
(b1) edge node [auto] {$\mu$} (c1)
(c0) edge node [auto] {$\mu$} (c1)
;
\end{tikzpicture}
\end{equation*}

If $\catvar C$ is a closed symmetric monoidal category and $M$ is commutative, we define the Kleisli composition. First, for all objects $A$, $B$, and $C$, consider the following morphism:
\begin{align*}
&w = \Bigl(
((B \Rightarrow MC) \otimes (A \Rightarrow MB)) \otimes A
\xrightarrow{\cong}
(B \Rightarrow MC) \otimes ((A \Rightarrow MB) \otimes A)
\xrightarrow{\id \otimes \app}
(B \Rightarrow MC) \otimes MB
\\ & \hspace{9cm}
\xrightarrow{\tau'}
M((B \Rightarrow MC) \otimes B)
\xrightarrow{M\app}
MMC
\xrightarrow{\mu}
MC
\Bigr)
\end{align*}
We define the composition of Kleisli morphisms as $\kcomp = \mladj{w} : (B \Rightarrow MC)\otimes (A \Rightarrow MB) \to A \Rightarrow MC$, and the identity as $\mladj{I \otimes A \xrightarrow{\cong} A \xrightarrow{\eta} MA} : I \to A \Rightarrow MA$. The triple $\tuple{A\Rightarrow MA,\, \kcomp ,\, \kident}$ is a monoid.

\subsection{Symmetric Eilenberg--Moore monoids}

We now describe how the monad \verb+m [a]+ arises as a composition of adjoint functors. This is not a new construction, so we skip the proofs.
Consider the category $\catname{Mon}$ of monoids in a symmetric monoidal category $\catvar C$. Assume that the obvious forgetful functor $U_{\catname{Mon}} : \catname{Mon} \to \catvar C$ has a left adjoint $F_{\catname{Mon}} : \catvar C \to \catname{Mon}$. The induced monad $U_{\catname{Mon}} F_{\catname{Mon}}$ is the list monad.
Now, given a commutative monad $M$ on $\catvar C$, we define a monad $\overline M$ on $\catname{Mon}$ (in fact, a lifting in the sense of Beck~\cite{distr}). It is given as follows:
\begin{align*}
&\overline M \tuple{A, m, u} = \tuple{MA,\ \ MA \otimes MA \xrightarrow{\tau} M(A \otimes MA) \xrightarrow{M\tau'} MM(A \otimes A) \xrightarrow{\mu m} MA,\ \ I \xrightarrow{u} A \xrightarrow{\eta} MA}
\\
&
\overline M f = M f
\\
&
\mu^{\overline M} = \mu \qquad \eta^{\overline M} = \eta
\end{align*}
Let $F_{\,\overline M} \dashv U_{\,\overline M}$ be the Eilenberg--Moore adjunction of $\overline M$. Since adjoint functors compose, we obtain an adjunction $F_{\,\overline M} F_{\catname{Mon}} \dashv U_{\catname{Mon}} U_{\,\overline M}$. The induced monad $U_{\catname{Mon}} U_{\,\overline M} F_{\,\overline M} F_{\catname{Mon}}$ corresponds to the Haskell monad \verb+m [a]+. Here, we call this monad $\mathsf{CLT}_M$ (`commutative list transformer').

Now, we take a closer look at the Eilenberg--Moore category of  $\overline M$. Each object consists of a pair $\tuple{\tuple{A,m,u}, a}$, where $\tuple{A,m,u}$ is a monoid, and $a$ has the Eilenberg--Moore property, that is, $a \cdot Ma = a \cdot \mu$ and $a \cdot \eta = \id$ (since the monadic structure of $\overline M$ is identical to the monadic structure of $M$). Note that $a$ is a morphism in $\catname{Mon}$, so it preserves the monoid structure. That is, the following two diagrams commute:
\begin{equation}
\label{eq:diagsdalgd1}
\begin{tikzpicture}
\node(a0){$MA \otimes MA$};
\node(a1)[below of=a0]{$M(A \otimes MA)$};
\node(a2)[right of=a1, xshift=4em]{$MM(A \otimes A)$};
\node(b2)[right of=a2, xshift=4em]{$MA$};
\node(c0)[right of=a0, xshift=18em]{$A \otimes A$};
\node(c2)[right of=b2, xshift=2em]{$A$};
\path[arrows={-latex}, font=\scriptsize]
(a0) edge node [auto] {$\tau$} (a1)
(a1) edge node [auto] {$M\tau'$} (a2)
(a2) edge node [auto] {$\mu m$} (b2)
(b2) edge node [auto] {$a$} (c2)
(c0) edge node [auto] {$m$} (c2)
(a0) edge node [auto] {$a \otimes a$} (c0)
;
\end{tikzpicture}
\end{equation}
\begin{equation}
\label{eq:diagsdalgd2}
\begin{tikzpicture}
\node(a0){$I$};
\node(a1)[right of=a0, xshift=2em]{$A$};
\node(a2)[right of=a1, xshift=2em]{$MA$};
\node(b2)[below of=a0, xshift=0em]{$A$};
\path[arrows={-latex}, font=\scriptsize]
(a0) edge node [auto] {$u$} (a1)
(a1) edge node [auto] {$\eta$} (a2)
(a2) edge node [auto] {$a$} (b2)
(a0) edge node [auto] {$u$} (b2)
;
\end{tikzpicture}
\end{equation}
Note that the diagram~\eqref{eq:diagsdalgd2} commutes for all $a$ with the Eilenberg--Moore property. This, together with some rearranging of the elements of the tuples, leads us to the following equivalent definition of algebras for $\overline M$:
\begin{defn}
Let $M$ be a commutative monad on a symmetric monoidal category $\catvar C$. A \emph{symmetric Eilenberg--Moore $M$-monoid} is a tuple $\tuple{A,a,m,u}$, such that:
\begin{itemize}
\item $\tuple{A,a}$ is an Eilenberg--Moore algebra,
\item $\tuple{A,m,u}$ is a monoid,
\item coherence: the diagram~\eqref{eq:diagsdalgd1} commutes.
\end{itemize}
A morphism between two symmetric Eilenberg--Moore $M$-monoids is given by a $\catvar C$-morphism that is both a morphism between the Eilenberg--Moore algebra parts and the monoid parts. We call the category of symmetric Eilenberg--Moore $M$-monoids $\catname{SEMMon}_M$.
\end{defn}

The name is justified by the following theorem:

\begin{thm}
\label{thm:semmonisemmon}
Every symmetric Eilenberg--Moore $M$-monoid is an Eilenberg--Moore $M$-monoid.
\end{thm}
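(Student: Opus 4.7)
The only non-trivial content of the theorem is the coherence diagram: a symmetric Eilenberg--Moore $M$-monoid $\tuple{A,a,m,u}$ automatically satisfies the algebra and monoid laws from Definition~\ref{defn:emmonoids} (they are part of the symmetric definition), and the notion of morphism in $\catname{SEMMon}_M$ and $\catname{EMMon}_M$ coincides. Therefore the whole statement reduces to checking that the coherence equation
\[
m \cdot (a \otimes \id) \;=\; a \cdot Mm \cdot \tau \;:\; MA \otimes A \to A
\]
is a consequence of the symmetric coherence~\eqref{eq:diagsdalgd1}.

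My plan is to precompose \eqref{eq:diagsdalgd1} with $\id_{MA} \otimes \eta_A : MA \otimes A \to MA \otimes MA$ and simplify both sides. On the left, the Eilenberg--Moore unit axiom $a \cdot \eta = \id$ yields $m \cdot (a \otimes a) \cdot (\id \otimes \eta) = m \cdot (a \otimes \id)$, which is already the desired left-hand side. On the right, naturality of the right strength $\tau$ in its second argument gives $\tau \cdot (\id_{MA} \otimes \eta_A) = M(\id_A \otimes \eta_A) \cdot \tau$, so one step further we have $M\tau' \cdot M(\id \otimes \eta) \cdot \tau = M(\tau' \cdot (\id \otimes \eta)) \cdot \tau$. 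The mirror image of the monad--strength unit law, which the paper records explicitly as holding for $\tau'$, states that $\tau' \cdot (\id \otimes \eta) = \eta_{A \otimes A}$, so the middle factor collapses to $M\eta_{A \otimes A} \cdot \tau$. Finally, the monadic unit law $\mu \cdot M\eta = \id$ eliminates $\mu \cdot M\eta_{A \otimes A}$, and what remains on the right is exactly $a \cdot Mm \cdot \tau$, as required.

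The main obstacle is really just bookkeeping: one has to recognise that only the unit axioms for $\tau'$ and $\mu$ are needed, and in particular that the full commutativity square for $M$ is never invoked. This is conceptually reassuring, since it matches the intuition that $\catname{EMMon}_M$ requires a weaker coherence (only the left-hand argument of the monoid must be evaluated first) than the two-sided coherence implicit in $\catname{SEMMon}_M$. All the ingredients used in the calculation are available as soon as $M$ has a left strength satisfying the mirror axioms, which is precisely what the symmetric monoidal and commutative-monad background in this section provides.
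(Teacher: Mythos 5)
Your proof is correct, and it takes a genuinely (if mildly) different route from the paper's. Both proofs reduce the claim to deriving the one-sided coherence $a \cdot Mm \cdot \tau = m \cdot (a \otimes \id)$ from the symmetric coherence~\eqref{eq:diagsdalgd1} by precomposing with $\id_{MA} \otimes \eta_A$ and simplifying the monoid side via $a \cdot \eta = \id$. The difference is in how the monadic side is collapsed: the paper first invokes the commutativity square to rewrite $\mu \cdot M\tau' \cdot \tau$ as $\mu \cdot M\tau \cdot \tau'$, then applies the unit law for $\tau'$ at the instance $\tau'_{MA,A} \cdot (\id \otimes \eta) = \eta_{MA\otimes A}$ followed by naturality of $\eta$ and $\mu\cdot\eta = \id$; you instead push $\id \otimes \eta$ past $\tau$ by naturality of $\tau$ in its second argument, apply the mirror unit law $\tau'_{A,A}\cdot(\id\otimes\eta)=\eta_{A\otimes A}$, and finish with $\mu \cdot M\eta = \id$. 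Your version is slightly shorter and, as you note, never uses the commutativity square --- only the existence of a left strength satisfying the mirrored unit axiom --- so it establishes the implication for any bistrong monad, whereas the paper's argument as written genuinely consumes commutativity at its step \ding{195}. In the ambient setting of the section this makes no practical difference (the left strength is defined from $\sym$ and $M$ is assumed commutative throughout), but your observation would also slightly simplify the adaptation to the `twisted' variant discussed in Remark~\ref{rem:symem}. One pedantic point: the theorem is a statement about objects only, so your remark that the two notions of morphism coincide, while true, is not needed.
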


\begin{remark}
\label{rem:symem}
One could also imagine a `twisted' definition of Eilenberg--Moore monoids that uses $\tau'$ instead of $\tau$, and a coherence condition that equates the two corresponding morphisms $A \otimes MA \to A$. The proof of Theorem~\ref{thm:semmonisemmon} can be easily adapted to state that every symmetric Eilenberg--Moore monoid is a `twisted' Eilenberg--Moore monoid. Additionally, one can prove that a quadruple that is both an Eilenberg--Moore monoid and a `twisted' Eilenberg--Moore monoid for a commutative monad $M$ is necessarily a symmetric Eilenberg--Moore monoid.
\end{remark}

Since the definition of symmetric Eilenberg--Moore monoids is a simple rearrangement of the definition of algebras for $\mathsf{CLT}_M$, the adjunction $F_{\,\overline M} F_{\catname{Mon}} \dashv U_{\catname{Mon}} U_{\,\overline M}$ gives us that the obvious forgetful functor $U_{\catname{SEMMon}} : \catname{SEMMon}_M \to \catvar C$ has a left adjoint $F_{\catname{SEMMon}}$, and that the induced monad is equal to $\mathsf{CLT}_M$. Although a composition of two monadic adjunctions is not always monadic, it is so in this case (it follows form a general theorem of Beck about algebras for composite monads~\cite[Proposition 2]{distr}):

\begin{thm}
\label{thm:revisitingstrictlymonadic}
The adjunction $F_{\catname{SEMMon}} \dashv U_{\catname{SEMMon}}$ is strictly monadic. This entails that $\catname{SEMMon}_M$ is isomorphic to the category of algebras for $\mathsf{CLT}_M$.
\end{thm}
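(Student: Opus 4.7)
The proof strategy is to invoke Beck's theorem on distributive laws (the same reference used to construct $\overline M$) in two stages, reducing strict monadicity of $F_{\catname{SEMMon}} \dashv U_{\catname{SEMMon}}$ to the already-established rearrangement of data that defines $\catname{SEMMon}_M$. First I would check that $\overline M$ is a \emph{lifting} of $M$ along $U_{\catname{Mon}}$: by inspection of the definition, $U_{\catname{Mon}} \overline M = M U_{\catname{Mon}}$, and $\mu^{\overline M}$, $\eta^{\overline M}$ are simply $\mu^M$, $\eta^M$, so they commute with $U_{\catname{Mon}}$ on the nose. By the lifting/distributive-law correspondence, $\overline M$ arises from a unique distributive law $\lambda$ of the list monad $U_{\catname{Mon}} F_{\catname{Mon}}$ over $M$, so the monad induced by the composite adjunction $F_{\,\overline M} F_{\catname{Mon}} \dashv U_{\catname{Mon}} U_{\,\overline M}$---which by definition is $\mathsf{CLT}_M$---coincides with the composite monad of $\lambda$.

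Second, I would apply Beck's Proposition~2 on algebras for composite monads, which states that the Eilenberg--Moore category of the composite is carrier-preservingly isomorphic to the Eilenberg--Moore category of the lifted monad on the base Eilenberg--Moore category. In our setting this yields $\EM{\mathsf{CLT}_M} \cong \catname{EM}(\overline M)$, with the underlying $\catvar C$-object functors strictly agreeing. To finish, I would combine this with the reformulation given immediately above the definition of symmetric Eilenberg--Moore monoids: an $\overline M$-algebra in $\catname{Mon}$ is a pair $\tuple{\tuple{A,m,u}, a}$ in which $a$ is both an $M$-algebra and a $\catname{Mon}$-morphism; unfolding the latter condition yields diagrams~\eqref{eq:diagsdalgd1} and~\eqref{eq:diagsdalgd2}, of which~\eqref{eq:diagsdalgd2} is forced by the algebra unit law, while~\eqref{eq:diagsdalgd1} is precisely the coherence axiom of $\catname{SEMMon}_M$. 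Repackaging the tuple as $\tuple{A,a,m,u}$ then constitutes a carrier-preserving isomorphism $\catname{EM}(\overline M) \cong \catname{SEMMon}_M$.

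Composing the two carrier-preserving isomorphisms identifies $\catname{SEMMon}_M$ with $\EM{\mathsf{CLT}_M}$ over $\catvar C$, and since the forgetful functors are identified, this identification is exactly the canonical comparison functor of the adjunction $F_{\catname{SEMMon}} \dashv U_{\catname{SEMMon}}$; hence the adjunction is strictly monadic. The only step I expect to require any care is the verification that Beck's isomorphism really is the canonical comparison---that is, that an $\overline M$-algebra $\tuple{\tuple{A,m,u},a}$ is sent to the $\mathsf{CLT}_M$-algebra whose structure map is obtained by applying $U_{\catname{Mon}} U_{\,\overline M}$ to the counit of the composite adjunction at that object. This is routine diagram chasing through the units and counits of the two adjunctions in the factorisation, and introduces no new conceptual difficulty.
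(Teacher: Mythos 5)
Your proposal is correct and follows essentially the same route as the paper, which proves this theorem by exactly the citation you invoke: Beck's Proposition~2 on algebras for composite monads arising from a distributive law, combined with the observation that an $\overline M$-algebra in $\catname{Mon}$ unfolds (via diagrams~\eqref{eq:diagsdalgd1} and~\eqref{eq:diagsdalgd2}) to precisely a symmetric Eilenberg--Moore monoid. Your write-up is in fact more detailed than the paper's, which leaves the lifting/distributive-law correspondence and the carrier-preserving identification implicit.
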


\subsection{Endomorphism representation and a continuation-based implementation}

Now, assume that $\catvar C$ is a closed symmetric monoidal category. We define a symmetric Eilenberg--Moore monoid of Kleisli endomorphisms, that is, objects $A \Rightarrow MA$:

\begin{thm}
\label{thm:revisitingdefofendo}
Let $\catvar C$ be a symmetric closed monoidal category, and $M$ be a commutative monad on $\catvar C$. Then, the tuple
\begin{equation*}
\tuple{
A \Rightarrow MA
,\ 
\mladj{q}
,\ 
\kcomp
,\ 
\kident
},
\end{equation*}
where
\begin{equation*}
q = \Bigl(
M(A \Rightarrow MA) \otimes A
\xrightarrow{\tau}
M((A \Rightarrow MA) \otimes A)
\xrightarrow{M\app}
MMA
\xrightarrow{\mu}
MA
\Bigr),
\end{equation*}
is a symmetric Eilenberg--Moore monoid.
\begin{equation*}
\end{equation*}
\end{thm}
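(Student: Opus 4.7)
The statement has three parts: (1)~$\tuple{A\Rightarrow MA, \mladj{q}}$ is an Eilenberg--Moore algebra, (2)~$\tuple{A\Rightarrow MA, \kcomp, \kident}$ is a monoid, and (3)~the coherence diagram \eqref{eq:diagsdalgd1} commutes for $a = \mladj{q}$ and $m = \kcomp$. Part (2) is immediate, as it was already established at the end of Section~5.1 that $\tuple{A\Rightarrow MA, \kcomp, \kident}$ forms a monoid. So the remaining work is parts (1) and (3).

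For (1), the plan is to mimic exactly the argument of Theorem~\ref{thm:mama}: uncurry the two Eilenberg--Moore axioms $\mladj{q} \cdot \eta = \id$ and $\mladj{q} \cdot M\mladj{q} = \mladj{q} \cdot \mu$ by composing with $\spholder \otimes \id_A$ and post-composing with $\app$. The unit axiom reduces, via the strength-unit law $\tau \cdot (\eta \otimes \id) = \eta$ and $\mu \cdot \eta = \id$, to the identity $\app = \app$. The multiplication axiom reduces, via the strength-multiplication square $\mu \cdot M\tau \cdot \tau = \tau \cdot (\mu \otimes \id)$ together with $\mu \cdot M\mu = \mu \cdot \mu$, to a routine equality. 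None of this uses the commutativity of $M$, exactly as in Theorem~\ref{thm:mama}.

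Part (3) is the substantive step, and the only place where commutativity of $M$ is used. Writing $E = A \Rightarrow MA$ for brevity, the target equation is
\begin{equation*}
\kcomp \cdot (\mladj{q} \otimes \mladj{q}) \;=\; \mladj{q} \cdot \mu \cdot M\kcomp \cdot M\tau' \cdot \tau \;:\; ME \otimes ME \to E.
\end{equation*}
The plan is to uncurry both sides (post-compose $\spholder \otimes \id_A$ then $\app$) and chase the resulting diagram in $MA$. On each side, after expanding the definitions of $\mladj{q}$, $\kcomp = \mladj{w}$, and $\kident$, and applying naturality of $\tau$ and $\tau'$ together with the triangle $\app \cdot (\mladj{f} \otimes \id) = f$, both sides reduce to a composite of two strengths (one of each handedness), two copies of $\app$, and a multiplication $\mu \cdot \mu$ (equivalently $\mu \cdot M\mu$). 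The two composites then differ only in the order in which the two strengths are applied to the pair in $ME \otimes ME$, and agreement of these two orders is precisely the defining diagram of a commutative monad.

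The main obstacle is the diagram chase in (3). The bookkeeping of strengths, structural isomorphisms, and the symmetry $\sym$ (hidden in the definition of $\tau'$) is delicate, and care is needed to identify the two orders of strength application on the uncurried sides with the top and bottom paths of the commutativity square for $M$. Once this identification is made, commutativity yields the equation directly, and re-currying concludes the proof.
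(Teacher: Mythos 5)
Your proposal follows essentially the same route as the paper's proof: the Eilenberg--Moore property of $\mladj{q}$ is obtained by the same uncurrying argument as for $\mladj{p}$ in Theorem~\ref{thm:mama} (the paper literally defers to that proof, and indeed uses no commutativity there), the monoid structure of $\tuple{A \Rightarrow MA, \kcomp, \kident}$ is quoted from Section 5.1, and the coherence condition is checked by uncurrying both sides and running a diagram chase whose single non-formal ingredient is the commutativity square $\mu \cdot M\tau' \cdot \tau = \mu \cdot M\tau \cdot \tau'$ --- exactly the square labelled ``$M$ is commutative'' in the paper's chase. The only caveats are that you leave the chase itself as ``delicate bookkeeping'' (the paper spends three large diagrams on it) and that your claim that each side reduces to just \emph{two} strengths is an over-simplification, since expanding $q$ introduces further occurrences of $\tau$; but the strategy is the right one and does go through.
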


It is left to prove that the monoid $A \Rightarrow MA$ is universal for the sufficient subcategory of symmetric Eilenberg--Moore monoids:

\begin{thm}
\label{thm:revisitingrepresentation}
For each symmetric Eilenberg--Moore monoid of the shape $\tuple{MA, \mu, m, u}$, the morphism $\mladj{s} : MA \to (A \Rightarrow MA)$, where
\begin{equation*}
s = \Bigl(
MA \otimes A \xrightarrow {\id \otimes \eta} MA \otimes MA \xrightarrow{m} MA
\Bigr),
\end{equation*}
has the following properties:
\begin{itemize}
\item it is a morphism $\tuple{MA, \mu, m, u} \to \tuple{A\Rightarrow MA,\,\mladj{q},\,\kcomp,\,\kident}$ between symmetric Eilenberg--Moore monoids,
\item it is a split monomorphism in $\catvar C$, that is, there exists a morphism $r : (A \Rightarrow MA) \to MA$ in $\catvar C$ such that $r \cdot \mladj{s} = \id$.
\end{itemize}
\end{thm}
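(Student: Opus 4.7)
The plan is to establish the two bulleted properties separately, reducing obligations about morphisms into $A \Rightarrow MA$ to equations in $MA$ via the uncurrying $\mradj{\spholder}$, using that $\mradj{\mladj{s}} = s = m \cdot (\id \otimes \eta)$.

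For the first bullet, there are three conditions to verify: preservation of the monoid unit, of the monoid multiplication, and that $\mladj{s}$ is a morphism of Eilenberg--Moore algebras $\tuple{MA, \mu} \to \tuple{A \Rightarrow MA, \mladj{q}}$. Uncurrying unit preservation yields $m \cdot (u \otimes \eta) = \eta \cdot \cong$, which follows from the left-unit law $m \cdot (u \otimes \id) = \cong$ of the monoid together with naturality of the left unitor in the position of $\eta$. Uncurrying the algebra compatibility $\mladj{s} \cdot \mu = \mladj{q} \cdot M\mladj{s}$, unfolding $q = \mu \cdot M\app \cdot \tau$, and applying naturality of $\tau$ in its first component reduce it to $m \cdot (\mu \otimes \eta) = \mu \cdot Ms \cdot \tau$; this in turn follows from the plain Eilenberg--Moore coherence $\mu \cdot Mm \cdot \tau = m \cdot (\mu \otimes \id)$ (which holds for any symmetric Eilenberg--Moore monoid by Theorem~\ref{thm:semmonisemmon}) after sliding $\eta$ across $\tau$ by naturality in the second component. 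Preservation of multiplication, after a similar unfolding of $w$ from Section~\ref{sec:revisiting}, reduces to
\begin{equation*}
m \cdot (m \otimes \eta) \;=\; \mu \cdot Ms \cdot \tau' \cdot (\id \otimes s) \cdot \cong.
\end{equation*}
Monoid associativity plus associator naturality rewrite the left-hand side as $m \cdot (\id \otimes s) \cdot \cong$, so it suffices to show $m = \mu \cdot Ms \cdot \tau'$ as morphisms $MA \otimes MA \to MA$. The latter follows from the \emph{twisted} coherence $m \cdot (\id \otimes \mu) = \mu \cdot Mm \cdot \tau'$ for Kleisli monoids (available via Theorem~\ref{thm:semmonisemmon} together with Remark~\ref{rem:symem}), combined with naturality of $\tau'$ to commute $M(\id \otimes \eta)$ past $\tau'$, and the monad unit law $\mu \cdot M\eta = \id$.

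For the second bullet, I take the retract to be
\begin{equation*}
r \;=\; \Bigl( (A \Rightarrow MA) \xrightarrow{\cong} (A \Rightarrow MA) \otimes I \xrightarrow{\id \otimes u} (A \Rightarrow MA) \otimes MA \xrightarrow{\tau'} M((A \Rightarrow MA) \otimes A) \xrightarrow{M\app} MMA \xrightarrow{\mu} MA \Bigr),
\end{equation*}
the diagrammatic form capturing the idea that $r(f)$ is the monadic bind of $u$ with $f$. To verify $r \cdot \mladj{s} = \id_{MA}$, I push $\mladj{s}$ past the unitor and past $\tau'$ in its first component by naturality, use $\app \cdot (\mladj{s} \otimes \id) = s = m \cdot (\id \otimes \eta)$ to replace $M\app \cdot M(\mladj{s} \otimes \id)$ by $Mm \cdot M(\id \otimes \eta)$, slide $M(\id \otimes \eta)$ through $\tau'$ by naturality in its second component, apply the twisted coherence to convert $\mu \cdot Mm \cdot \tau'$ into $m \cdot (\id \otimes \mu)$, collapse $\mu \cdot M\eta \cdot u$ to $u$ via the monad unit law, and finish with the right-unit law $m \cdot (\id \otimes u) = \cong$ of the monoid.

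The main obstacle is bookkeeping rather than mathematical content: choosing the correct naturality instance of $\tau$ or $\tau'$, keeping associators and unitors in their proper positions, and not silently suppressing the structural isomorphisms (since the ambient category is not assumed strict). All the genuine mathematical work is concentrated in just a handful of ingredients, namely the untwisted and twisted forms of the Eilenberg--Moore coherence, the monoid axioms for $\tuple{MA, m, u}$, the monad unit law, and naturality of the strengths -- there are no truly difficult steps.
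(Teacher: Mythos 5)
Your proof is correct and takes essentially the same route as the paper's: the same retraction $r$, the same key identity $\mu \cdot Mm \cdot M(\id \otimes \eta) \cdot \tau' = m$ (this is exactly the paper's auxiliary diagram~\eqref{eq:start4}, established just as you do from naturality of $\tau'$, the monad unit law, and the twisted coherence of Remark~\ref{rem:symem}), and the same final appeals to the monoid unit laws. The only differences are presentational — you work equationally after uncurrying where the paper draws large commuting diagrams, and for the algebra-morphism condition you route through the untwisted Eilenberg--Moore coherence supplied by Theorem~\ref{thm:semmonisemmon} where the paper invokes the symmetric coherence directly — both of which are fine.
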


The codensity monad that uses the representation above can be encoded in Haskell as follows:
\begin{Verbatim}
     data CLT m a = CLT (forall x. (a -> x -> m x) -> x -> m x)
\end{Verbatim}
Intuitively, this type represents folds over a list-like structure. Folding a single element can produce some effects in the monad \verb+m+, but it does not depend on the effects produced by previous elements (it can depend on the values though). The nil of the list (that is, the failure continuation) does not produce monadic effects on its own.

\section{Discussion}

\label{sec:coind}

Equations similar to the conditions in the definition of Eilenberg--Moore monoids were previously discussed by Hinze~\cite{DBLP:conf/icfp/Hinze00}, although in a different setting, that is, as equations between Haskell expressions. Jaskelioff and Moggi~\cite{Jaskelioff20104441} suggest that an equational theory like the one discussed in Section~\ref{sec:emmonoids} induces the list monad transformer, but they leave this without a proof.
Wand and Vaillancourt~\cite{Wand} use logical relations to compare two metalanguages with backtracking: one based on streams, and the other on a two-continuation monad. 
Eilenberg--Moore algebras of the resumption monad are also known as $F$-and-$M$-algebras. They were used by Filinski and St{\o}vring~\cite{Filinski:2007:IRE:1291220.1291168} (and later by Atkey \textit{et al.}~\cite{induction-with-effects,interleaving}) to model data structures that interleave pure data and effects.

In eager languages, the bare list monad is rarely used as a basis of backtracking computations, since the entire list structure is always computed upfront. Thus, in ML-like languages, one uses the type of lazy lists, which produce elements on demand. It can be implemented using the Haskell syntax as follows:
\begin{Verbatim}
     data LazyList a = LazyList (() -> Maybe (a, LazyList a))
\end{Verbatim}
Given a value of this type, one can force the next step of the computation by supplying the unit value~\verb+()+, and only then the structure is evaluated. This is nothing else than the list monad transformer (`done right') applied to the reader monad \verb+() -> a+.

Some languages provide separate primitives for inductive and coinductive data. From the point of view of semantics, it means that the language supports types given by initial algebras and final coalgebras separately. It is an interesting challenge for future work to describe the `coinductive' list monad transformer, given by carriers of final coalgebras. In such a case, the free monad becomes the \emph{free completely iterative monad} introduced by Aczel \textit{et al.}~\cite{DBLP:journals/tcs/AczelAMV03}, and the resumption monad becomes the \emph{coinductive resumption monad} described by Pir\'og and Gibbons~\cite{DBLP:journals/entcs/PirogG14}. The universal properties of both constructions are similar to those of their inductive counterparts, but considerably more complicated (see Ad\'amek \textit{et al.}~\cite{DBLP:journals/lmcs/AdamekMV06} for the case of the free completely iterative monad).

Another challenge for future work is to extend the current development with control operators, such as Prolog's \emph{cut} or fair disjunction. These features can be found, for example, in Kiselyov \textit{et al.}'s implementation~\cite{DBLP:conf/icfp/KiselyovSFS05}. We hope that such control structures can be obtained using the methods described in this paper.

\section*{Acknowledgements}

I would like to thank Tom Schrijvers for his remarks on an early draft of this paper, and the anonymous reviewers for their detailed comments and helpful suggestions.

\bibliographystyle{eptcs}

\appendix
\section{Proofs}

\subsection{Lemma~\ref{thm:emfreemonoidfromfmonoid}}

The conditions (1) and (2) from Definition~\ref{defn:emmonoids} are trivial. For (3), we need to show that the following diagram commutes:
\begin{equation*}
\begin{tikzpicture}
\node(a) {$G^*A \otimes A$};
\node(b)[below of=a] {$G^*(A \otimes A)$};
\node(c)[right of=a, xshift=10em] {$A \otimes A$};
\node(d)[right of=b, xshift=4em] {$G^*A$};
\node(e)[right of=d, xshift=2em] {$A$};
\path[arrows={-latex}, font=\scriptsize]
(a) edge node [auto] {$\foldfree g \otimes \id$} (c)
(a) edge node [auto] {$\widetilde\tau$} (b)
(b) edge node [auto] {$G^* m$} (d)
(d) edge node [auto] {$\foldfree g$} (e)
(c) edge node [auto] {$m$} (e)
;
\end{tikzpicture}
\end{equation*}
We show that both paths satisfy the universal property of strongly generated free monads, so they are equal. The top-right path:
\begin{equation*}
\begin{tikzpicture}
\node(a) {$G^*A \otimes A$};
\node(b)[right of=a, xshift=4em] {$A \otimes A$};
\node(c)[right of=b, xshift=12em] {$A$};
\node(x1)[above of=a, yshift=4em] {$GG^*A \otimes A$};
\node(x2)[right of=x1, xshift=4em] {$G(G^*A \otimes A)$};
\node(x3)[right of=x2, xshift=5em] {$G(A\otimes A)$};
\node(x4)[right of=x3, xshift=3em] {$GA$};
\node(y)[below of=x2] {$GA\otimes A$};
\node(z)[below of=a] {$A\otimes A$};
\path[arrows={-latex}, font=\scriptsize]
(a) edge node [auto] {$\foldfree g \otimes \id$} (b)
(b) edge node [auto] {$m$} (c)
(x1) edge node [auto] {$\cons \otimes \id$} (a)
(x1) edge node [auto] {$\tau$} (x2)
(x2) edge node [auto] {$G(\foldfree g \otimes \id)$} (x3)
(x3) edge node [auto] {$Gm$} (x4)
(x4) edge node [auto] {$g$} (c)
(y) edge node [auto] {$g \otimes \id$} (b)
(x1) edge node [auto,yshift=-0.4em] {$F\foldfree g \otimes \id$} (y)
(y) edge node [auto] {$\tau$} (x3)
(z) edge node [auto] {$\eta \otimes \id$} (a)
(z) edge node [auto, yshift=-0.2em] {$\id$} (b)
;
\node[below of=x1, xshift=4em, yshift=-1em] {\ding{192}};
\node[below of=x1, xshift=9em, yshift=2em] {\ding{193}};
\node[below of=x1, xshift=18em, yshift=0em] {\ding{194}};
\node[below of=x1, xshift=1.5em, yshift=-5.5em] {\ding{195}};
\end{tikzpicture}
\end{equation*}
\ding{192} $\foldfree g$ is an algebra morphism,
\ding{193} naturality of $\tau$,
\ding{194} $\tuple{A,g,m,u}$ is a $G$-monoid,
\ding{195} universal property of~$\foldfree\spholder$.

The left-bottom path:
\begin{equation*}
\begin{tikzpicture}
\node(a) {$G^*A \otimes A$};
\node(b)[right of=a, xshift=12em] {$G^*(A \otimes A)$};
\node(c)[right of=b, xshift=4em] {$G^*A$};
\node(d)[right of=c, xshift=4em] {$A$};
\node(x1)[above of=a, yshift=0em] {$GG^*A \otimes A$};
\node(x2)[right of=x1, xshift=4em] {$G(G^*A \otimes A)$};
\node(x3)[right of=x2, xshift=4em] {$GG^*(A\otimes A)$};
\node(x4)[right of=x3, xshift=4em] {$GG^*A$};
\node(x5)[right of=x4, xshift=4em] {$GA$};
\node(z)[below of=a] {$A\otimes A$};
\node(z1)[right of=z,xshift=20em, yshift=0em] {$A$};
\path[arrows={-latex}, font=\scriptsize]
(a) edge node [auto] {$\widetilde \tau$} (b)
(b) edge node [auto] {$G^*m$} (c)
(c) edge node [auto] {$\foldfree g$} (d)
(x1) edge node [auto] {$\cons$} (a)
(x1) edge node [auto] {$\tau$} (x2)
(x2) edge node [auto] {$G\widetilde\tau$} (x3)
(x3) edge node [auto] {$GG^*m$} (x4)
(x4) edge node [auto] {$G \foldfree g$} (x5)
(x5) edge node [auto] {$g$} (d)
(x4) edge node [auto] {$\cons$} (c)
(x3) edge node [auto] {$\cons$} (b)
(z) edge node [auto] {$\eta \otimes \id$} (a)
(z) edge node [auto, yshift=-0.2em] {$\eta$} (b)
(z) edge node [auto] {$m$} (z1)
(z1) edge node [auto] {$\eta$} (c)
(z1) edge node [auto, yshift=-0.2em] {$\id$} (d)
;
\node[below of=x1, xshift=8em, yshift=2em] {\ding{192}};
\node[below of=x1, xshift=20em, yshift=2em] {\ding{193}};
\node[below of=x1, xshift=28em, yshift=2em] {\ding{194}};
\node[below of=x1, xshift=2em, yshift=-2em] {\ding{195}};
\node[below of=x1, xshift=18em, yshift=-2em] {\ding{196}};
\node[below of=x1, xshift=25em, yshift=-2em] {\ding{197}};
\end{tikzpicture}
\end{equation*}
\ding{192} definition of strongly generated free monad,
\ding{193} naturality of $\cons$,
\ding{194} $\foldfree\spholder$ is a morphism of algebras,
\ding{195} properties of strength,
\ding{196} naturality of $\eta$,
\ding{197} universal property of $\foldfree\spholder$.

\subsection{Theorem~\ref{thm:adj}}

We split the proof into a number of lemmata. We need to show that $F$ is a functor (Lemma~\ref{lem:p:a}), that $\ladj\spholder$ is natural (Lemma~\ref{lem:p:b}), that $\radj \spholder$ produces morphisms between Eilenberg--Moore monoids (Lemma~\ref{lem:p:c}), and that $\ladj\spholder$ is an inverse of $\radj\spholder$ (Lemma~\ref{lem:p:d}).

\begin{lemma}
\label{lem:p:a}
The assignment $F$ is a functor.
\end{lemma}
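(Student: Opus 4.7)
The plan is to first unfold the definition: for $f : A \to B$, we have a natural transformation $\withm f : \withm A \to \withm B$ with components $f \otimes \id_{MX}$ at $X$; this induces a monad morphism $(\withm f)^* : (\withm A)^* \to (\withm B)^*$ between algebraically free monads (see the general fact about $h^*$ in the background), and hence a natural transformation $M(\withm f)^* : M(\withm A)^* \to M(\withm B)^*$ between the associated resumption monads. Taking the component at $\MID$ yields the $\catvar{C}$-morphism $F(f) : M(\withm A)^* \MID \to M(\withm B)^* \MID$.

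I would then split the verification into two parts: strict functoriality as an assignment $\catvar{C} \to \catvar{C}$, and the claim that each $F(f)$ actually lives in $\catname{EMMon}_M$. Strict functoriality is routine: the assignment $A \mapsto \withm A$ is functorial (because $f \mapsto f \otimes \id_{MX}$ respects identities and composition pointwise), the operation $\pholder^*$ sends identity natural transformations to identity monad morphisms and respects vertical composition, and both $M$ and the evaluation-at-$\MID$ functors are trivially functorial. Composing these three observations gives $F(\id_A) = \id_{FA}$ and $F(g \cdot f) = F(g) \cdot F(f)$.

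The substantive part is verifying the three preservation conditions that make $F(f)$ a morphism of Eilenberg--Moore $M$-monoids from $FA$ to $FB$. Preservation of the $M$-algebra structure $\mu^M$ amounts to $M((\withm f)^*_\MID) \cdot \mu^M = \mu^M \cdot MM((\withm f)^*_\MID)$, which is precisely the naturality square for $\mu^M$ at the $\catvar{C}$-morphism $(\withm f)^*_\MID$. Preservation of the monoid unit $\mathfrak{u} = \eta^{\mathcal R}_\MID$ follows from $M(\withm f)^*$ being a monad morphism between the resumption monads, so it commutes with $\eta^{\mathcal R}$. Preservation of the monoid multiplication $\mathfrak{m}$, built from $\tau$, the monoidal coherence, and $\mu^{\mathcal R}$ via Wolff's construction, reduces to two facts: that $M(\withm f)^*$ commutes with the resumption-monad multiplication (a monad-morphism property) and that it preserves strength (which the background section records as an automatic consequence of $(\withm f)^*$ preserving strength between the strongly generated free monads and of the composition-via-distributive-law construction carrying strength).

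The main obstacle is the monoid-multiplication clause: one must chase Wolff's pentagon, in which $\mathfrak{m}$ is assembled, and use strength preservation at the $\tau$-step together with monad-morphism compatibility at the $\mu$-step. Once both are invoked the diagram commutes by naturality, with no explicit diagram-chase on the resumption monad itself needed beyond what the background already records. The other two conditions are essentially one-line consequences of naturality and of $M(\withm f)^*$ being a monad morphism, so I would dispatch them first and devote the bulk of the formal argument to the $\mathfrak{m}$-preservation square.
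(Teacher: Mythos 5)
Your treatment of the morphism part is essentially the paper's: preservation of the algebra structure is the naturality square for $\mu^M$, preservation of the unit is the monad-morphism property of $M(\withm f)^*$, and preservation of $\mathfrak m$ is settled by combining the monad-morphism property with the fact that $(\withm f)^*$ (hence $M(\withm f)^*$) preserves strength, plus naturality of $\tau$ and of the structural isomorphisms. Your remarks on strict functoriality ($F(\id)=\id$, $F(g\cdot f)=F(g)\cdot F(f)$) are fine and are left implicit in the paper.

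However, there is a genuine gap: you never verify that $FA$ itself is an object of $\catname{EMMon}_M$. Saying that $F$ is a functor into $\catname{EMMon}_M$ requires checking that the tuple $\tuple{M(\withm A)^*\MID,\ \mu^M,\ \mathfrak m,\ \mathfrak u}$ satisfies all three conditions of Definition~\ref{defn:emmonoids}. The first two (that $\mu^M$ has the Eilenberg--Moore property and that $\tuple{\mathfrak m,\mathfrak u}$ is a monoid, via Theorem~\ref{thm:wolff}) are immediate, but the third --- the coherence condition relating $\mu^M$, $\mathfrak m$, and the strength, i.e.\ that $\mathfrak m \cdot (\mu^M \otimes \id) = \mu^M \cdot M\mathfrak m \cdot \tau$ on $MM(\withm A)^*\MID \otimes M(\withm A)^*\MID$ --- is the most substantial part of the paper's proof of this lemma. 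It is established by unfolding Wolff's definition of $\mathfrak m$ and chasing a diagram whose key ingredient, beyond the strength axioms and naturality of $\mu^M$, is that $\mu^{\mathcal R}$ is induced by the distributive law $\lambda : (GM)^*M \to M(GM)^*$ (so that $\mu^{\mathcal R}$ interacts correctly with $\mu^M$). None of the facts you invoke for the morphism clauses supplies this; without it the claim that $F$ lands in $\catname{EMMon}_M$ is unproved.
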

\begin{proof}
First, we check that $FA$ is an Eilenberg--Moore monoid, that is, that the three conditions from Definition~\ref{defn:emmonoids} hold. The first two are obvious. For the third one, consider the following diagram, which is the desired coherence diagram with the definitions of $\mathfrak m$ and $\mathfrak u$ unfolded:
\begin{equation*}
\begin{tikzpicture}
\node(a) {$MM(\withm A)^* \MID \otimes M(\withm A)^* \MID$};
\node(b)[right of=a, xshift=16em] {$M(\withm A)^* \MID \otimes M(\withm A)^* \MID$};
\node(c)[below of=a]{$M(M(\withm A)^* \MID \otimes M(\withm A)^* \MID)$};
\node(d)[below of=c]{$MM((\withm A)^* \MID \otimes M(\withm A)^* \MID)$};
\node(e)[below of=d]{$MM(\withm A)^* (\MID \otimes M(\withm A)^* \MID)$};
\node(f)[below of=e]{$MM(\withm A)^* M(\withm A)^* \MID$};
\node(g)[below of=b, yshift=-4em]{$M((\withm A)^* \MID \otimes M(\withm A)^* \MID)$};
\node(h)[below of=g]{$M(\withm A)^* (\MID \otimes M(\withm A)^* \MID)$};
\node(i)[below of=h]{$M(\withm A)^* M(\withm A)^* \MID$};
\node(j)[below of=i]{$M(\withm A)^* \MID$};
\node(k)[below of=f]{$MM(\withm A)^* \MID$};
\path[arrows={-latex}, font=\scriptsize]
(a) edge node [auto] {$\mu^M \otimes \id$} (b)
(a) edge node [auto] {$\tau^M$} (c)
(c) edge node [auto] {$M\tau^M$} (d)
(d) edge node [auto] {$MM\spanst{\tau^M}$} (e)
(e) edge node [auto] {$\cong$} (f)
(b) edge node [auto] {$\tau^M$} (g)
(g) edge node [auto] {$M\spanst{\tau^M}$} (h)
(h) edge node [auto] {$\cong$} (i)
(i) edge node [auto] {$\mu^\res$} (j)
(f) edge node [auto] {$M\mu^\res$} (k)
(k) edge node [auto] {$\mu^M$} (j)
(d) edge node [auto] {$\mu^M$} (g)
(f) edge node [auto] {$\mu^M$} (i)
;
\node[below of=a, xshift=10em, yshift=0em] {\ding{192}};
\node[below of=a, xshift=10em, yshift=-8em] {\ding{193}};
\node[below of=a, xshift=10em, yshift=-13.5em] {\ding{194}};
\end{tikzpicture}
\end{equation*}
\ding{192}
properties of strength,
\ding{193}
naturality of $\mu^M$,
\ding{194}
$\mu^\res$ is defined via a distributive law.

To verify the morphism part, let $f : A \to B$ be a morphism in $\catvar C$. It is trivial that $Ff$ is a morphism between Eilenberg--Moore algebra parts of $FA$ and $FB$, as it amounts to the naturality of $\mu^M$. As for the monoid parts, the preservation of the unit is simply the fact that $M(\withm f)^*I$ is a monad morphism. For the preservation of the multiplication, consider the following diagram:
\begin{equation*}
\begin{tikzpicture}
\node(a) {$M(\withm A)^*I \otimes M(\withm A)^*I$};
\node(b)[below of=a] {$M((\withm A)^*I \otimes M(\withm A)^*I)$};
\node(c)[below of=b] {$M(\withm A)^*(I \otimes M(\withm A)^*I)$};
\node(d)[below of=c] {$M(\withm A)^*M(\withm A)^*I$};
\node(e)[below of=d] {$M(\withm A)^*I$};
\node(ar)[right of=a, xshift=20em] {$M(\withm A)^*I \otimes M(\withm A)^*I$};
\node(br)[below of=ar] {$M((\withm B)^*I \otimes M(\withm B)^*I)$};
\node(cr)[below of=br] {$M(\withm B)^*(I \otimes M(\withm B)^*I)$};
\node(dr)[below of=cr] {$M(\withm B)^*M(\withm B)^*I$};
\node(er)[below of=dr] {$M(\withm B)^*I$};
\path[arrows={-latex}, font=\scriptsize]
(a) edge node [auto] {$\tau^M$} (b)
(b) edge node [auto] {$M\widetilde{\tau^M}$} (c)
(c) edge node [auto] {$\cong$} (d)
(d) edge node [auto] {$\mu^{\mathcal R}$} (e)
(ar) edge node [auto] {$\tau^M$} (br)
(br) edge node [auto] {$M\widetilde{\tau^M}$} (cr)
(cr) edge node [auto] {$\cong$} (dr)
(dr) edge node [auto] {$\mu^{\mathcal R}$} (er)
(a) edge node [auto] {$M(\withm f)^*I \otimes M(\withm f)^*I$} (ar)
(b) edge node [auto] {$M((\withm f)^*I \otimes M(\withm f)^*I)$} (br)
(c) edge node [auto] {$M(\withm f)^* (\id \otimes M(\withm f)^*)$} (cr)
(d) edge node [auto] {$M(\withm f)^*M(\withm f)^*$} (dr)
(e) edge node [auto] {$M(\withm f)^*$} (er)
;
\node[below of=a, xshift=12em, yshift=2.5em] {\ding{192}};
\node[below of=b, xshift=12em, yshift=2.5em] {\ding{193}};
\node[below of=c, xshift=12em, yshift=2.5em] {\ding{194}};
\node[below of=d, xshift=12em, yshift=2.5em] {\ding{195}};
\end{tikzpicture}
\end{equation*}
\ding{192} naturality of $\tau$,
\ding{193} the fact that $(\withm f)^*$ preserves strength, and naturality o $\widetilde\tau$,
\ding{194} naturality of $\cong$,
\ding{195} monad morphism.
\end{proof}

\begin{lemma}
\label{lem:p:b}
The assignment $\ladj\spholder$ is a natural transformation.
\end{lemma}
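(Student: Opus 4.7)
The statement $\ladj{\spholder}$ natural means that for every $f : A' \to A$ in $\catvar{C}$, every morphism $g : \tuple{B,b,m^B,u^B} \to \tuple{B',b',m^{B'},u^{B'}}$ of Eilenberg--Moore $M$-monoids, and every $h : FA \to \tuple{B,b,m^B,u^B}$ in $\catname{EMMon}_M$, the equation
\begin{equation*}
\ladj{g \cdot h \cdot Ff} = Ug \cdot \ladj{h} \cdot f
\end{equation*}
holds in $\catvar{C}(A',B')$. My plan is to expand both sides using the definition
$\ladj{k} = k \cdot \INR \cdot \cong$ and then reduce to a single naturality condition.

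Expanding the left-hand side gives
$g \cdot h \cdot Ff \cdot \INR \cdot \cong_{A'}$, while the right-hand side expands to
$g \cdot h \cdot \INR \cdot \cong_A \cdot f$. Since $g$ and $h$ appear on the far right in both expressions, it suffices to prove the single equation
\begin{equation*}
Ff \cdot \INR \cdot \cong_{A'} = \INR \cdot \cong_A \cdot f,
\end{equation*}
i.e., that the assignment $A \mapsto (\INR \cdot \cong_A : A \to M(\withm{A})^*I)$ is natural in $A$. I would split this into two independent pieces: (i) the structural isomorphism $\cong$ is natural in $A$ by general coherence of the monoidal category $\tuple{\catvar{C},\otimes,I}$; (ii) the component $\INR$ is natural with respect to the $A$-parameter, i.e., $Ff \cdot \INR = \INR \cdot (\withm f)_I$ (up to the implicit $\id \otimes \eta^M$ that makes the source read $A \otimes I$ rather than $A \otimes MI$).

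Step (ii) is the only nontrivial piece. It is an instance of the general statement recalled in the resumption-monad background: a natural transformation $h : G \to H$ between strong endofunctors induces a monad morphism $M(hM)^* : M(GM)^* \to M(HM)^*$, and this monad morphism commutes with the injection $\INR$, i.e., $M(hM)^* \cdot \INR^G = \INR^H \cdot h$. Taking $h = \withm f : \withm{A'} \to \withm{A}$ and evaluating at $I$ gives exactly what we need, since $Ff$ is by definition $M(\withm f)^*I$. I would derive this commutation by unfolding $\INR = \eta^M \cdot \emb \cdot G\eta^M$ and then chasing the naturality of each of the three constituents: $\eta^M$ (monadic unit), $\emb$ (natural by the assignment $G \mapsto G^*$), and $G\eta^M$ (naturality in $G$ at the fixed object $M$).

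The main obstacle is bookkeeping: the paper writes $\INR$ in the formula for $\ladj{\spholder}$ as if its source were $A \otimes I$, whereas strictly $\INR_I^{\withm{A}}$ has source $A \otimes MI$. I would begin by making the implicit coercion explicit, then carry out the naturality argument above. Once (i) and (ii) are established, composing them yields the desired square and completes the proof of naturality.
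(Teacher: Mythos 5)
Your proof is correct and takes essentially the same route as the paper's: both reduce naturality to (i) naturality of the structural isomorphism $\cong$ and (ii) the commutation of $\INR$ with the induced monad morphism $M(\withm f)^*$, which the paper dispatches with ``definition of $\INR$'' and you justify by unfolding $\INR = \eta^M \cdot \emb \cdot G\eta^M$ and chasing naturality of each constituent. One minor remark: the coercion you worry about is not there, since $\INR$ in the formula for $\ladj\spholder$ is the injection for the functor $A \otimes \spholder$ (whose value at $I$ really is $A \otimes I$), the map $\id \otimes \eta^M$ being the first constituent of $\INR$ itself rather than an external adjustment.
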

\begin{proof}
Let $f : FA \to \tuple{B,b,m^B,u^B}$ and $l : \tuple{B,b,m^B,u^B} \to \tuple{Y,y,m^Y,u^Y}$ be morphisms in $\catname{EMMon}_M$, and $r : X \to A$ be a morphism in $\catvar C$. The following diagram commutes, where the top-most path is equal to $\ladj{l \cdot f \cdot Fr}$, while the bottom-most path is equal to $Ul \cdot \ladj f \cdot r$:
\begin{equation*}
\begin{tikzpicture}
\node(a){$X$};
\node(b1)[right of=a, yshift=-2em, xshift=4em]{$A$};
\node(b2)[right of=a, yshift=2em, xshift=4em]{$X \otimes I$};
\node(c1)[right of=b1, yshift=0em, xshift=4em]{$A \otimes I$};
\node(c2)[right of=b2, yshift=0em, xshift=4em]{$M(\withm X)^*I$};
\node(d)[right of=c1, yshift=2em, xshift=4em]{$M(\withm A)^*I$};
\node(e)[right of=d, yshift=0em, xshift=2em]{$B$};
\node(f)[right of=e, yshift=0em, xshift=0em]{$Y$};
\path[arrows={-latex}, font=\scriptsize]
(a) edge node [below, xshift=-0.3em] {$r$} (b1)
(a) edge node [above] {$\cong$} (b2)
(b1) edge node [below] {$\cong$} (c1)
(b2) edge node [above] {$\INR$} (c2)
(c1) edge node [below, xshift=0.5em] {$\INR$} (d)
(c2) edge node [above, xshift=1.5em] {$M(\withm r)^*I$} (d)
(d) edge node [auto] {$f$} (e)
(e) edge node [auto] {$l$} (f)
(b2) edge node [above, xshift=0.7em] {$r \otimes \id$} (c1)
;  
\node[right of=a, xshift=4em] {\ding{192}};
\node[right of=a, xshift=12em] {\ding{193}};
\end{tikzpicture}
\end{equation*}
\ding{192} naturality of $\cong$, \ding{193} definition of $\INR$.
\end{proof}

\begin{lemma}
\label{thm:fromemmonoidtomonoidaotimesd882jf}
Given an Eilenberg--Moore $M$-monoid $\tuple{B,b,m^B,u^B}$ and a $\catvar C$-morphism $g : A \to B$, the tuple $\tuple{B ,\ A \otimes MB \xrightarrow{g \otimes b} B \otimes B \xrightarrow{m^B} B ,\ m^B ,\ u^B}$ is an $(A \otimes M\pholder)$-monoid for $A \otimes M\pholder$ understood as a functor with strength given as $(A \otimes MX) \otimes Y \xrightarrow{\cong} A \otimes (MX \otimes Y) \xrightarrow{\id \otimes \tau} A \otimes M(X \otimes Y)$.
\end{lemma}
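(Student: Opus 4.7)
The plan is to verify the three requirements for an $F$-monoid with $F = A \otimes M\pholder$. For an $F$-monoid the Eilenberg--Moore algebra condition is dropped, so only (i) the monoid axioms for $\tuple{B, m^B, u^B}$ and (ii) the coherence square between the $F$-algebra structure and the monoid multiplication remain. Part (i) is inherited verbatim from the given Eilenberg--Moore $M$-monoid, leaving only the coherence to establish.

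Writing out the coherence square with the prescribed strength $(\id \otimes \tau) \cdot {\cong}$ and the $F$-algebra $m^B \cdot (g \otimes b)$, what needs to be shown is that
\begin{equation*}
(A \otimes MB) \otimes B \xrightarrow{(g \otimes b) \otimes \id} (B \otimes B) \otimes B \xrightarrow{m^B \otimes \id} B \otimes B \xrightarrow{m^B} B
\end{equation*}
is equal to
\begin{equation*}
(A \otimes MB) \otimes B \xrightarrow{\cong} A \otimes (MB \otimes B) \xrightarrow{\id \otimes \tau} A \otimes M(B \otimes B) \xrightarrow{\id \otimes M m^B} A \otimes MB \xrightarrow{g \otimes b} B \otimes B \xrightarrow{m^B} B.
\end{equation*}

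I would argue this by three purely equational rewrites of the first composite. First, associativity of $m^B$ turns $m^B \cdot (m^B \otimes \id)$ into $m^B \cdot (\id \otimes m^B) \cdot {\cong}$; next, naturality of the associator slides $\cong$ past $(g \otimes b) \otimes \id$, yielding $m^B \cdot (\id \otimes m^B) \cdot (g \otimes (b \otimes \id)) \cdot {\cong}$; collecting the middle factors gives $m^B \cdot \bigl(g \otimes (m^B \cdot (b \otimes \id))\bigr) \cdot {\cong}$. At this point the Eilenberg--Moore coherence for $\tuple{B, b, m^B, u^B}$ from Definition~\ref{defn:emmonoids}(3) applies: it says exactly $m^B \cdot (b \otimes \id) = b \cdot M m^B \cdot \tau$. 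Substituting, then using bifunctoriality of $\otimes$ to factor $g \otimes (b \cdot M m^B \cdot \tau)$ as $(g \otimes b) \cdot (\id \otimes M m^B) \cdot (\id \otimes \tau)$, reproduces exactly the second composite.

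The main obstacle is purely bookkeeping of the associator; no universal property, no induction, and no use of the monadic structure of $M$ beyond its appearance inside the strength $\tau$ are needed. This matches expectations, since the target is merely an $F$-monoid rather than an Eilenberg--Moore monoid for some monad built from $A \otimes M\pholder$, so one genuine application of the EM coherence for the original monoid $\tuple{B,b,m^B,u^B}$ is precisely what should be doing the work.
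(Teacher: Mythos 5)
Your proposal is correct and follows essentially the same route as the paper's proof: the paper likewise dismisses the monoid axioms as inherited and verifies the same coherence square by combining naturality of the associator, associativity of $m^B$, bifunctoriality of $\otimes$, and one application of the Eilenberg--Moore coherence for $\tuple{B,b,m^B,u^B}$. Your equational rewriting is just a linearised form of the paper's pasted diagram.
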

\begin{proof}
Since $\tuple{B,m^B,u^B}$ is a monoid by definition, it is left to check the coherence diagram:
\begin{equation*}
\begin{tikzpicture}
\node(a1){$(A\otimes MB)\otimes B$};
\node(a2)[right of=a1, xshift=7em]{$(B\otimes MB)\otimes B$};
\node(a3)[right of=a2, xshift=7em]{$(B\otimes B)\otimes B$};
\node(a4)[right of=a3, xshift=7em]{$B\otimes B$};
\node(b1)[below of=a1]{$A\otimes (MB\otimes B)$};
\node(b2)[right of=b1, xshift=7em]{$B\otimes (MB\otimes B)$};
\node(b3)[right of=b2, xshift=7em]{$B\otimes (B\otimes B)$};
\node(c1)[below of=b1]{$A\otimes M(B\otimes B)$};
\node(c2)[right of=c1, xshift=7em]{$B\otimes M(B\otimes B)$};
\node(d1)[below of=c1]{$A \otimes MB$};
\node(d2)[right of=d1, xshift=7em]{$B \otimes MB$};
\node(d3)[right of=d2, xshift=7em]{$B \otimes B$};
\node(d4)[right of=d3, xshift=7em]{$B$};
\path[arrows={-latex}, font=\scriptsize]
(a1) edge node [auto] {$(g \otimes \id)\otimes \id$} (a2)
(a2) edge node [auto] {$(\id \otimes b)\otimes \id$} (a3)
(a3) edge node [auto] {$m^B \otimes \id$} (a4)
(b1) edge node [auto] {$g \otimes (\id\otimes \id)$} (b2)
(b2) edge node [auto] {$\id \otimes (b\otimes \id)$} (b3)
(a1) edge node [auto] {$\cong$} (b1)
(a2) edge node [auto] {$\cong$} (b2)
(a3) edge node [auto] {$\cong$} (b3)
(b1) edge node [auto] {$\id \otimes \tau$} (c1)
(b2) edge node [auto] {$\id \otimes \tau$} (c2)
(c1) edge node [auto] {$\id \otimes Mm^B$} (d1)
(c2) edge node [auto] {$\id \otimes Mm^B$} (d2)
(b3) edge node [auto] {$\id \otimes m^B$} (d3)
(a4) edge node [auto] {$m^B$} (d4)
(d1) edge node [auto] {$g \otimes \id$} (d2)
(d2) edge node [auto] {$\id \otimes b$} (d3)
(d3) edge node [auto] {$m^B$} (d4)
;
\node[right of=a1, yshift=-1.5em, xshift=1.5em] {\ding{192}};
\node[right of=a1, yshift=-1.5em, xshift=12.5em] {\ding{193}};
\node[right of=a1, yshift=-4.5em, xshift=24em] {\ding{194}};
\node[right of=a1, yshift=-8em, xshift=1.5em] {\ding{195}};
\node[right of=a1, yshift=-8em, xshift=12.5em] {\ding{196}};
\end{tikzpicture}
\end{equation*}
\ding{192} and
\ding{193} naturality of $\cong$,
\ding{194} associativity of $m^B$,
\ding{195} $\otimes$ is a bifunctor,
\ding{196} coherence of $\tuple{B,b,m^B,u^B}$.
\end{proof}

\begin{lemma}
\label{lem:p:c}
For a $\catvar C$-morphism $g : A \to U\tuple{B,b,m^B,u^B}$, the morphism $\radj g$ is an $\catname{EMMon}_M$-morphism of the type $FA \to \tuple{B,b,m^B,u^B}$ .
\end{lemma}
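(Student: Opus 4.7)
My plan is to verify the three defining axioms of an Eilenberg--Moore monoid morphism, namely that $\radj g$ is (a) a morphism of $M$-algebras, (b) preserves the unit, and (c) preserves the multiplication. Let me write $\gamma = m^B \cdot (g \otimes b) : A \otimes MB \to B$ so that, by definition, $\radj g = \foldres{b, \gamma, u^B}$.

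The first two axioms are essentially restatements of the universal property in Theorem~\ref{thm:foldresuniprop}. For (a), the left-hand commutative square in Theorem~\ref{thm:foldresuniprop} reads $b \cdot M \radj g = \radj g \cdot \mu^M$, which is exactly the $M$-algebra morphism condition with respect to the algebra $\tuple{M(\withm A)^*I, \mu^M}$ on the source and $\tuple{B,b}$ on the target. For (b), since $\mathfrak u = \eta^{\mathcal R}$ by the definition of $F$ in Theorem~\ref{thm:adj}, the right-hand square in Theorem~\ref{thm:foldresuniprop} yields $\radj g \cdot \mathfrak u = \radj g \cdot \eta^{\mathcal R} = u^B$.

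The main obstacle is (c): showing $\radj g \cdot \mathfrak m = m^B \cdot (\radj g \otimes \radj g)$. The plan is to unfold $\mathfrak m$ per Theorem~\ref{thm:wolff} applied to the resumption monad, so it becomes $\mu^{\mathcal R} \cdot M(\withm A)^*(\cong) \cdot \tau^{\mathcal R}$, and then do a diagram chase. The ingredients I expect to invoke are: the algebra morphism property just established, naturality of $\tau$ and of $\mu^M$, the compatibility of the strength $\widetilde\tau$ of the strongly generated free monad $(\withm A)^*$ with its structure maps, the coherence square in the definition of the target Eilenberg--Moore monoid $\tuple{B,b,m^B,u^B}$, and associativity of $m^B$. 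The organising principle is Lemma~\ref{thm:fromemmonoidtomonoidaotimesd882jf}: it certifies that the same data on $B$ form a $(\withm A)$-monoid, which is precisely the right-distributivity shape needed to commute $m^B$ past the action morphism $\gamma$.

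Concretely, I would proceed by rewriting both sides using the characterisation of $\foldres{b,\gamma,u^B}$ via Lemma~\ref{thm:foldreseqprops}(1,2): $\radj g = b \cdot M \foldfree{\gamma \cdot ((\withm A)b)} \cdot M(\withm A)^* u^B$. This reduces the outstanding equation to a statement about $\foldfree{\gamma \cdot ((\withm A)b)}$ and the $(\withm A)^*$-monoid structure on $B$, which in turn follows from the universal property of the algebraically free monad $(\withm A)^*$: both the left- and right-hand sides of the desired equation satisfy the same recursion scheme, so they must agree. The tricky step here is splicing $\tau^{\mathcal R}$, built from $\tau^M$ and $\widetilde\tau$, against $\mathfrak m$ on the target; this is where the coherence of $\tuple{B,b,m^B,u^B}$ combines with the Lemma~\ref{thm:fromemmonoidtomonoidaotimesd882jf} structure to close the diagram.
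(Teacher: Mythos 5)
Your proposal is correct and takes essentially the same route as the paper: parts (a) and (b) follow from the universal property of $\foldres{\spholder,\spholder,\spholder}$ (the paper derives them slightly more indirectly via Lemma~\ref{thm:foldreseqprops}), and part (c) is reduced, via Lemma~\ref{thm:foldreseqprops}(1--2) and the $(\withm A)$-monoid on $B$ from Lemma~\ref{thm:fromemmonoidtomonoidaotimesd882jf} transported along Lemma~\ref{thm:emfreemonoidfromfmonoid}, to the coherence of $\tuple{B,b,m^B,u^B}$, which is exactly the paper's diagram chase. One small notational slip: since your $\gamma = m^B\cdot(g\otimes b)$ already absorbs $b$, the unfolded form should read $b\cdot M\foldfree{\gamma}\cdot M(\withm A)^*u^B$ rather than $b\cdot M\foldfree{\gamma\cdot((\withm A)b)}\cdot M(\withm A)^*u^B$.
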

\begin{proof}
The fact that $\radj g$ is a morphism between the Eilenberg--Moore algebra parts follows from the following diagram, where the right-most edge is equal to $\radj g$ unfolded as in Lemma~\ref{thm:foldreseqprops}(1):

\begin{equation*}
\begin{tikzpicture}
\node(a1){$MM(\withm A)^*I$};
\node(a2)[right of=a1, xshift=12em]{$M(\withm A)^*I$};
\node(b1)[below of=a1]{$MM(\withm A)^*B$};
\node(b2)[below of=a2]{$M(\withm A)^*B$};
\node(c1)[below of=b1,xshift=4em]{$MMB$};
\node(c2)[below of=b2,xshift=-4em]{$MB$};
\node(d1)[below of=c1,xshift=-4em]{$MB$};
\node(d2)[below of=c2,xshift=4em]{$B$};
\path[arrows={-latex}, font=\scriptsize]
(a1) edge node [auto] {$\mu^M$} (a2)
(b1) edge node [auto] {$\mu^M$} (b2)
(a1) edge node [left] {$MM(\withm A)^*u^B$} (b1)
(a2) edge node [auto] {$M(\withm A)^*u^B$} (b2)
(c1) edge node [auto] {$\mu^M$} (c2)
(b1) edge node [left] {$M\foldres{b,\ m^B \cdot(g \otimes \id),\ \id}$} (d1)
(b2) edge node [auto] {$\foldres{b,\ m^B \cdot(g \otimes \id),\ \id}$} (d2)
(d1) edge node [auto] {$b$} (d2)
(c2) edge node [auto] {$b$} (d2)
(c1) edge node [auto] {$Mb$} (d1)
(b1) edge node [right] {$\ MM\foldfree{m^B \cdot (g \otimes b)}$} (c1)
(b2) edge node [left] {$M\foldfree{m^B \cdot (g \otimes b)}\ $} (c2)
;
\node[right of=a1, yshift=-1.5em, xshift=4em] {\ding{192}};
\node[right of=a1, yshift=-4.8em, xshift=4em] {\ding{193}};
\node[right of=a1, yshift=-8em, xshift=-3em] {\ding{194}};
\node[right of=a1, yshift=-8em, xshift=11em] {\ding{195}};
\node[right of=a1, yshift=-9.5em, xshift=4em] {\ding{196}};
\end{tikzpicture}
\end{equation*}
\ding{192} and
\ding{193} naturality of $\mu^M$,
\ding{194} and
\ding{195} Lemma~\ref{thm:foldreseqprops}(2),
\ding{196} $b$ has the Eilenberg--Moore property.

The fact that $\radj g$ commutes with monoid multiplication follows from the following diagram, in which the left-most edge is the definition of $\mathfrak m$ (that is, the multiplication of $FA$):

\begin{equation}
\label{eq:bigdiagradjg}
\begin{tikzpicture}
\node(a1){$M(\withm A)^*I \otimes M(\withm A)^*I$};
\node(a2)[xshift=12em, right of=a1]{$M(\withm A)^*I \otimes B$};
\node(a3)[xshift=8em, right of=a2]{$B \otimes B$};
\node(b1)[below of=a1]{$M((\withm A)^*I \otimes M(\withm A)^*I)$};
\node(b2)[xshift=12em, right of=b1]{$M((\withm A)^*I \otimes B)$};
\node(c1)[below of=b1]{$M(\withm A)^*(I \otimes M(\withm A)^*I)$};
\node(c2)[xshift=12em, right of=c1]{$M(\withm A)^*(I \otimes B)$};
\node(d1)[below of=c1]{$M(\withm A)^*M(\withm A)^*I$};
\node(d2)[xshift=12em, right of=d1]{$M(\withm A)^*B$};
\node(e1)[below of=d1]{$M(\withm A)^*I$};
\node(e3)[right of=e1, xshift=24em]{$B$};
\path[arrows={-latex}, font=\scriptsize]
(a1) edge node [auto] {$\id \otimes \radj g$} (a2)
(a2) edge node [auto] {$\radj g \otimes \id$} (a3)
(b1) edge node [auto] {$M(\id \otimes \radj g)$} (b2)
(c1) edge node [auto] {$M(\withm A)^*(\id \otimes \radj g)$} (c2)
(a1) edge node [auto] {$\tau$} (b1)
(a2) edge node [auto] {$\tau$} (b2)
(b1) edge node [auto] {$M\widetilde\tau$} (c1)
(b2) edge node [auto] {$M\widetilde\tau$} (c2)
(c1) edge node [auto] {$\cong$} (d1)
(e1) edge node [auto] {$\radj g$} (e3)
(d1) edge node [auto] {$\mu^\res$} (e1)
(d1) edge node [auto] {$M(\withm A)^*\radj g$} (d2)
(c2) edge node [auto] {$\cong$} (d2)
(a3) edge node [auto] {$m^B$} (e3)
(d2) edge node [auto,xshift=-2.5em,yshift=0.5em] {$\foldres{b,\ m^B \cdot (g\otimes\id) ,\ \id}$} (e3)
;
\node[right of=a1, yshift=-1.5em, xshift=5em] {\ding{192}};
\node[right of=a1, yshift=-5.5em, xshift=5em] {\ding{193}};
\node[right of=a1, yshift=-9.5em, xshift=5em] {\ding{194}};
\node[right of=a1, yshift=-13.5em, xshift=5em] {\ding{195}};
\node[right of=a1, yshift=-5.5em, xshift=19em] {\ding{196}};
\end{tikzpicture}
\end{equation}
\ding{192} naturality of $\tau$,
\ding{193} naturality of $\widetilde\tau$,
\ding{194} naturality of $\cong$,
\ding{195} see below,
\ding{196} see below.

To verify that the square \ding{195} above commutes, we unfold $\radj g$ as in Lemma~\ref{thm:foldreseqprops}(1). The desired diagram is then as follows:
\begin{equation*}
\begin{tikzpicture}
\node(a1){$M(\withm A)^*M(\withm A)^*I$};
\node(a2)[right of=a1,xshift=13em]{$M(\withm A)^*M(\withm A)^*B$};
\node(a3)[xshift=13em, right of=a2]{$M(\withm A)^*B$};
\node(b1)[below of=a1]{$M(\withm A)^*I$};
\node(b2)[below of=a2]{$M(\withm A)^*B$};
\node(b3)[below of=a3]{$B$};
\path[arrows={-latex}, font=\scriptsize]
(a1) edge node [auto] {$M(\withm A)^*M(\withm A)^* u^B$} (a2)
(a2) edge node [auto] {$M(\withm A)^*\foldres{b,\ m^B\cdot(g\otimes\id),\ \id}$} (a3)
(b1) edge node [auto] {$M(\withm A)^* u^B$} (b2)
(b2) edge node [auto] {$\foldres{b,\ m^B \cdot (g \otimes \id),\ \id}$} (b3)
(a1) edge node [auto] {$\mu^\res$} (b1)
(a2) edge node [auto] {$\mu^\res$} (b2)
(a3) edge node [left,yshift=0.5em] {$\foldres{b,\ m^B \cdot (g \otimes \id) ,\ \id}$} (b3)
;
\node[right of=a1, yshift=-1.5em, xshift=4em] {\ding{192}};
\node[right of=a1, yshift=-1.5em, xshift=18em] {\ding{193}};
\end{tikzpicture}
\end{equation*}
\ding{192} naturality of $\mu^\res$,
\ding{193} $\foldres{\spholder, \spholder, \id}$ has the Eilenberg--Moore property (Lemma~\ref{thm:foldreseqprops}(2)).

Below, we detail \ding{196} from the diagram~\eqref{eq:bigdiagradjg}:
\begin{equation*}
\begin{tikzpicture}
\node(a0){$M(\withm A)^*I \otimes B$};
\node(a1)[right of=a0, xshift=10em]{$M(\withm A)^*B \otimes B$};
\node(b0)[below of=a0, yshift=-4em]{$M((\withm A)^*I \otimes B)$};
\node(b1)[right of=b0, xshift=10em]{$M((\withm A)^*B \otimes B)$};
\node(c0)[below of=b0]{$M(\withm A)^*(I \otimes B)$};
\node(c1)[right of=c0, xshift=10em]{$M(\withm A)^*(B \otimes B)$};
\node(d0)[below of=c0]{$M(\withm A)^*B$};
\node(a3)[right of=a1, xshift=18em]{$B\otimes B$};
\node(aa2)[below of=a1, xshift=16em]{$MB\otimes B$};
\node(b2)[below of=aa2]{$M(B\otimes B)$};
\node(c2)[below of=b2]{$MB$};
\node(d3)[below of=a3, yshift=-12em]{$B$};
\path[arrows={-latex}, font=\scriptsize]
(a0) edge node [auto] {$M(\withm A)^* u^B \otimes \id$} (a1)
(b0) edge node [auto] {$M((\withm A)^* u^B \otimes \id)$} (b1)
(c0) edge node [auto] {$M(\withm A)^* (u^B \otimes \id)$} (c1)
(a0) edge node [auto] {$\tau$} (b0)
(a1) edge node [auto] {$\tau$} (b1)
(b0) edge node [auto] {$M\widetilde\tau$} (c0)
(b1) edge node [auto] {$M\widetilde\tau$} (c1)
(c0) edge node [auto] {$\cong$} (d0)
(c1) edge node [left, yshift=0.2em] {$M(\withm A)^*m^B$} (d0)
(a1) edge node [auto] {$\foldres{b ,\ m^B\cdot (g\otimes \id) ,\ \id} \otimes \id$} (a3)
(a1) edge node [auto, xshift=-1.9em, yshift=0.1em] {$M\foldfree{m^B \cdot (g \otimes b)} \otimes \id$} (aa2)
(aa2) edge node [right, xshift=-1em, yshift=-1em] {$b \otimes \id$} (a3)
(aa2) edge node [auto] {$\tau$} (b2)
(b2) edge node [auto] {$Mm^B$} (c2)
(c2) edge node [auto] {$b$} (d3)
(a3) edge node [auto] {$m^B$} (d3)
(d0) edge node [auto,xshift=6em] {$\foldres{b,\ m^B\cdot (g \otimes \id),\ \id}$} (d3)
(b1) edge node [auto] {$M(\foldfree{m^B \cdot (g \otimes b)} \otimes \id)$} (b2)
(d0) edge node [auto, yshift=1.2em, xshift=10em] {$M\foldfree{m^B \cdot (g \otimes b)}$} (c2)
;
\node[right of=a0, yshift=-3.5em, xshift=3em] {\ding{192}};
\node[right of=a0, yshift=-3.5em, xshift=15em] {\ding{193}};
\node[right of=a0, yshift=-1.5em, xshift=26em] {\ding{194}};
\node[right of=a0, yshift=-5.5em, xshift=30em] {\ding{195}};
\node[right of=a0, yshift=-9.5em, xshift=3em] {\ding{196}};
\node[right of=a0, yshift=-9.5em, xshift=19em] {\ding{197}};
\node[right of=a0, yshift=-12.75em, xshift=4em] {\ding{198}};
\node[right of=a0, yshift=-14em, xshift=26em] {\ding{199}};
\end{tikzpicture}
\end{equation*}
\ding{192} and
\ding{193} naturality of $\tau$,
\ding{194} Lemma~\ref{thm:foldreseqprops}(2),
\ding{195} coherence for $\tuple{B,b,m^B,u^B}$,
\ding{196} naturality of $\widetilde\tau$,
\ding{197} $M$-image of the coherence diagram for the Eilenberg--Moore $(\withm A)^*$-monoid generated as in Lemma~\ref{thm:emfreemonoidfromfmonoid} by the $(\withm A)$-monoid $\tuple{B ,\ m^B\cdot(g\otimes b) ,\ m^B,\ u^B}$ described in Lemma~\ref{thm:fromemmonoidtomonoidaotimesd882jf},
\ding{198} monoid,
\ding{199} Lemma~\ref{thm:foldreseqprops}(2).

It is left to show that $\radj g$ preserves units of the monoid parts of the respective Eilenberg--Moore monoids. The following diagram commutes, where the right-most edge is equal to $\radj g$ via Lemma~\ref{thm:foldreseqprops}(1):
\begin{equation*}
\begin{tikzpicture}
\node(a){$B$};
\node(b)[right of=a, xshift=4em]{$M(\withm A)^*B$};
\node(c)[below of=b]{$B$};
\node(x)[above of=a]{$I$};
\node(y)[above of=b]{$M(\withm A)^*I$};
\path[arrows={-latex}, font=\scriptsize]
(a) edge node [auto] {$\eta^\res$} (b)
(a) edge node [below] {$\id$} (c)
(b) edge node [auto] {$\foldres{b,\ m^B\cdot(g\otimes\id),\ \id}$} (c)
(x) edge node [auto] {$\mathfrak u\ =\ \eta^\res$} (y)
(x) edge node [auto] {$u^B$} (a)
(y) edge node [auto] {$M(\withm A)^*u^B$} (b)
;
\node[right of=x, yshift=-1.5em, xshift=0em] {\ding{192}};
\node[right of=x, yshift=-5.5em, xshift=2em] {\ding{193}};
\end{tikzpicture}
\end{equation*}
\ding{192} naturality of $\eta^\res$,
\ding{193} $\foldres{\spholder, \spholder, \id}$ has the Eilenberg--Moore property (Lemma~\ref{thm:foldreseqprops}(2)).
\end{proof}

\begin{lemma}
\label{lem:p:d}
The natural transformation $\ladj\spholder$ is a natural isomorphism with the inverse given by $\radj\spholder$.
\end{lemma}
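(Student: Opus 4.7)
The plan is to verify the two composite identities $\ladj{\radj g} = g$ and $\radj{\ladj f} = f$ separately, since a one-sided retraction does not automatically yield a bijection. I would handle the first direction by direct computation using Lemma~\ref{thm:foldreseqprops}(3), and the second by invoking the universal property of $\foldres$ to reduce to three conditions, two of which are immediate from the assumptions on $f$.

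For $\ladj{\radj g} = g$, I would unfold the definition to $\radj g \cdot \INR_I \cdot \cong$ and then apply Lemma~\ref{thm:foldreseqprops}(3) with the endofunctor $G = A \otimes \pholder$ at the object $I$. This rewrites $\radj g \cdot \INR_I$ as the composite $A \otimes I \xrightarrow{A \otimes u^B} A \otimes B \xrightarrow{g \otimes \id} B \otimes B \xrightarrow{m^B} B$. Bifunctoriality of $\otimes$ regroups this as $m^B \cdot (\id_B \otimes u^B) \cdot (g \otimes \id_I)$; the right-unit axiom of the monoid $\tuple{B, m^B, u^B}$ collapses $m^B \cdot (\id \otimes u^B)$ to a unitor, and naturality of the unitor together with $\cong \cdot \cong = \id$ yields $g$.

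For $\radj{\ladj f} = f$, the plan is to use the universal property of $\foldres$ (Theorem~\ref{thm:foldresuniprop}), which characterises $\radj{\ladj f} = \foldres{b, m^B \cdot (\ladj f \otimes \id), u^B}$ as the unique morphism $FA \to B$ satisfying three coherence diagrams, and then to verify that $f$ itself satisfies them. The Eilenberg--Moore algebra compatibility $f \cdot \mu^M = b \cdot Mf$ is immediate from $f$ being an $M$-algebra morphism, and the base condition $f \cdot \eta^\res = u^B$ follows from $\eta^\res = \mathfrak u$ and $f$ preserving monoid units. The main obstacle is the remaining $G$-compatibility $f \cdot \mu^\res \cdot \INR_{FA} = m^B \cdot (\ladj f \otimes \id) \cdot (A \otimes f)$. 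I would expand $\ladj f = f \cdot \INR_I \cdot \cong$ and use that $f$ preserves monoid multiplication to rewrite its right-hand side as $f \cdot \mathfrak m \cdot (\INR_I \otimes \id) \cdot (\cong \otimes \id)$, reducing the problem to the purely structural identity $\mu^\res \cdot \INR_{FA} = \mathfrak m \cdot (\INR_I \otimes \id) \cdot (\cong \otimes \id)$. This identity expresses that the $G$-algebra ``cons'' structure on the free resumption monad coincides with prepending a singleton via Wolff's monoidal multiplication; I expect to establish it by expanding $\mathfrak m$ via Wolff's formula, commuting $\INR_I \otimes \id$ through the strength of the resumption monad using the fact that $\INR$ is a strong natural transformation (following from the strengths of its three factors $G\eta^M$, $\emb$, and $\eta^M$), absorbing the intervening $M(\withm A)^*\cong$ step by naturality of $\INR$, and finally cancelling the two unitor isomorphisms via Mac Lane's triangle coherence $(A \otimes \cong) \cdot \text{assoc} = \cong \otimes \id$.
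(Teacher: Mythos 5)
Your proposal is correct and follows essentially the same route as the paper: the direction $\ladj{\radj g}=g$ via Lemma~\ref{thm:foldreseqprops}(3), bifunctoriality, the right-unit law and naturality of the unitor; and the direction $\radj{\ladj f}=f$ via uniqueness in Theorem~\ref{thm:foldresuniprop}, reducing the only nontrivial condition to the structural identity $\mu^\res\cdot\INR = \mathfrak m\cdot(\INR\otimes\id)\cdot(\cong\otimes\id)$. Your justification of that identity by the ``strongness'' of the three factors of $\INR$ is exactly what the paper's detailed diagram verifies cell by cell (monad strength for $\eta^M$, the strongly-generated property for $\emb$, and coherence of the unitors).
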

\begin{proof}
In one direction, let $g : A \to U\tuple{B,b,m^B,u^B}$ be a $\catvar C$-morphism. The fact that $\ladj {\radj g} = g$ follows from the following diagram, where the longer path of the perimeter is obtained by unfolding the definitions of $\ladj \spholder$ and $\radj \spholder$:
\begin{equation*}
\begin{tikzpicture}
\node(a){$A$};
\node(b)[right of=a, xshift=20em]{$B$};
\node(x1)[below of=a, xshift=4em, yshift=1em]{$B$};
\node(x2)[right of=x1, xshift=4em]{$B \otimes I$};
\node(x3)[right of=x2, xshift=4em]{$B \otimes B$};
\node(y)[below of=x3, yshift=1em]{$A \otimes B$};
\node(c)[below of=a, yshift=-5em]{$A \otimes I$};
\node(d)[below of=b, yshift=-5em]{$M(\withm A)^*I$};
\path[arrows={-latex}, font=\scriptsize]
(a) edge node [auto] {$g$} (b)
(a) edge node [left] {$\cong$} (c)
(c) edge node [above] {$\INR$} (d)
(d) edge node [right] {$\foldres{b,\ m^B \cdot (g\otimes\id) ,\ u^B}$} (b)
(a) edge node [auto,yshift=-0.4em] {$g$} (x1)
(x1) edge node [auto] {$\cong$} (x2)
(x2) edge node [auto] {$\id \otimes u^B$} (x3)
(x3) edge node [auto,yshift=-0.4em] {$m^B$} (b)
(c) edge node [auto] {$g \otimes \id$} (x2)
(c) edge node [auto] {$\id \otimes u^B$} (y)
(y) edge node [auto,yshift=-0.2em] {$g \otimes \id$} (x3)
;
\node[below of=a, xshift=12em, yshift=3.0em] {\ding{192}};
\node[below of=a, xshift=2em, yshift=0.0em] {\ding{193}};
\node[below of=a, xshift=14em, yshift=-1.0em] {\ding{194}};
\node[below of=a, xshift=22em, yshift=-1.0em] {\ding{195}};
\end{tikzpicture}
\end{equation*}
\ding{192} right unit law for monoids,
\ding{193} naturality of $\cong$,
\ding{194} $\otimes$ is a bifunctor,
\ding{195} cancellation property of $\foldres{\spholder}$ (Lemma~\ref{thm:foldreseqprops}(3)).

In the other direction, we need to show that $\radj{\ladj f} = f$ for $f : FA \to \tuple{B,b,m^B,u^B}$. Unfolding the definition of $\radj \spholder$, we obtain:
\begin{equation*}
\radj{\ladj f} = \foldres{MB \xrightarrow{b} B, \ \ 
A \otimes B \xrightarrow{\ladj f \otimes \id} B \otimes B \xrightarrow{m^B} B, \ \  
I \xrightarrow{u^B} B}
\end{equation*}
To prove that it is equal to $f$, it is enough to show that the universal property of $\foldres{\spholder}$ (given in Lemma~\ref{thm:foldresuniprop}) holds for $f$.

The left-hand side of the first diagram in Lemma~\ref{thm:foldresuniprop} commutes, since morphisms in $\catname{EMMon}_M$ are necessarily morphisms between Eilenberg--Moore algebra parts. The right-hand side of the diagram (that is, the fact that $f$ is a morphisms between $(A \otimes \pholder)$-algebras) is given by the following diagram, in which we unfold the definition of $\ladj f$ in $m^B \cdot (\ladj f \otimes \id)$ (the bottom edge):
\begin{equation*}
\begin{tikzpicture}
\node(x){$A \otimes M(\withm A)^*I$};
\node(y)[right of=x, xshift=30em]{$M(\withm A)^*I$};
\node(xx)[below of=x, xshift=8em]{$(A \otimes I) \otimes M(\withm A)^*I$};
\node(yy)[right of=xx, xshift=15em]{$M(\withm A)^*I \otimes M(\withm A)^*I$};
\node(a)[below of=x, yshift=-4em]{$A \otimes B$};
\node(b)[right of=a, xshift=3em]{$(A \otimes I) \otimes B$};
\node(c)[right of=b, xshift=7em]{$M(\withm A)^*I \otimes B$};
\node(d)[right of=c, xshift=5em]{$B \otimes B$};
\node(e)[right of=d, xshift=3em]{$B$};
\path[arrows={-latex}, font=\scriptsize]
(a) edge node [auto] {${\cong} \otimes \id$} (b)
(b) edge node [auto] {$\INR \otimes \id$} (c)
(c) edge node [auto] {$f \otimes \id$} (d)
(d) edge node [auto] {$m^B$} (e)
(y) edge node [auto] {$f$} (e)
(x) edge node [auto] {$\id \otimes f$} (a)
(x) edge node [auto] {$\mu^\res \cdot \INR$} (y)
(x) edge node [auto, xshift=-0.4em] {${\cong} \otimes \id$} (xx)
(xx) edge node [auto] {$\INR \otimes \id$} (yy)
(yy) edge node [auto] {$f \otimes f$} (d)
(yy) edge node [auto] {$\mathfrak m$} (y)
;
\node[below of=x, xshift=17em, yshift=2.5em] {\ding{192}};
\node[below of=x, xshift=17em, yshift=-2em] {\ding{193}};
\node[below of=x, xshift=31em, yshift=-2em] {\ding{194}};
\end{tikzpicture}
\end{equation*}
\ding{192} see below, \ding{193} $\otimes$ is a bifunctor, \ding{194} $f$ is a monoid morphism (as a morphism in $\catname{EMMon}_M$).

Below, we detail \ding{192} from the diagram above. We unfold the definitions of $\INR$ and $\mathfrak m$.

\begin{equation*}
\begin{tikzpicture}
\node(a1) {$A \otimes M(\withm A)^*I$};
\node(a2)[right of=a1,xshift=6em] {$A \otimes MM(\withm A)^*I$};
\node(a3)[right of=a2,xshift=6em] {$(\withm A)^*M(\withm A)^*I$};
\node(a4)[right of=a3,xshift=7em] {$M(\withm A)^*M(\withm A)^*I$};
\node(aa4)[above of=a4] {$M(\withm A)^*I$};
\node(b1)[below of=a1] {$(A \otimes I) \otimes M(\withm A)^*I$};
\node(c1)[below of=b1] {$(A \otimes MI) \otimes M(\withm A)^*I$};
\node(d1)[below of=c1, yshift=-4em] {$(\withm A)^*I \otimes M(\withm A)^*I$};
\node(e1)[below of=d1] {$M(\withm A)^*I \otimes M(\withm A)^*I$};
\node(e4)[below of=a4, yshift=-16em] {$M((\withm A)^* I \otimes M(\withm A)^*I)$};
\node(d4)[below of=a4, yshift=-12em] {$M(\withm A)^* (I \otimes M(\withm A)^*I)$};
\node(x1)[below of=a2, xshift=2em]{$A \otimes (I \otimes M(\withm A)^*I)$};
\node(x2)[below of=x1, xshift=0em]{$A \otimes (MI \otimes M(\withm A)^*I)$};
\node(x3)[below of=x2, xshift=0em]{$A \otimes M(I \otimes M(\withm A)^*I)$};
\node(x4)[below of=x3, xshift=4em]{$(\withm A)^*(I \otimes M(\withm A)^*I)$};
\path[arrows={-latex}, font=\scriptsize]
(a4) edge node [auto] {$\mu^\res$} (aa4)
(a1) edge node [auto] {$\id \otimes \eta^M$} (a2)
(a2) edge node [auto] {$\emb$} (a3)
(a3) edge node [auto] {$\eta^M$} (a4)
(a1) edge node [auto] {${\cong} \otimes \id$} (b1)
(b1) edge node [auto] {$(\id \otimes \eta^M) \otimes \id$} (c1)
(c1) edge node [auto] {$\emb \otimes \id$} (d1)
(d1) edge node [auto] {$\eta^M \otimes \id$} (e1)
(d4) edge node [auto] {${\cong}$} (a4)
(e4) edge node [auto] {$M \widetilde \tau$} (d4)
(e1) edge node [auto] {$\tau$} (e4)
(d1) edge node [auto] {$\eta^M$} (e4)
(x1) edge node [auto] {$\id \otimes (\eta^M \otimes \id)$} (x2)
(x2) edge node [auto] {$\id \otimes \tau$} (x3)
(x3) edge node [auto] {$\emb$} (x4)
(a1) edge node [auto, yshift=-0.2em] {$\id \otimes {\cong}$} (x1)
(b1) edge node [auto] {$\cong$} (x1)
(c1) edge node [auto] {$\cong$} (x2)
(d1) edge node [auto] {$\widetilde \tau$} (x4)
(x4) edge node [auto] {$\eta^M$} (d4)
(x1.east) edge[bend left=60] node [auto] {$\id \otimes \eta^M$} (x3.east)
(x4) edge[bend right=60] node [auto] {$\cong$} (a3)
;
\node[below of=x, xshift=4em, yshift=1.5em] {\ding{192}};
\node[below of=x, xshift=8em, yshift=-2em] {\ding{193}};
\node[below of=x, xshift=6em, yshift=-6em] {\ding{194}};
\node[below of=x, xshift=18em, yshift=-4em] {\ding{195}};
\node[below of=x, xshift=21em, yshift=-1em] {\ding{196}};
\node[below of=x, xshift=26em, yshift=-2em] {\ding{197}};
\node[below of=x, xshift=6em, yshift=-14em] {\ding{198}};
\node[below of=x, xshift=26em, yshift=-14em] {\ding{199}};
\end{tikzpicture}
\end{equation*}
\ding{192} monoidal category,
\ding{193} naturality of $\cong$,
\ding{194} the fact that $(\withm A)^*$ is strongly generated and the definition of strength of $\withm A$,
\ding{195} properties of strength,
\ding{196} naturality of $\eta^M$ and $\emb$,
\ding{197} naturality of $\eta^M$,
\ding{198} properties of strength,
\ding{199} naturality of $\eta^M$.

It is left to show that the following diagram commutes:
\begin{equation*}
\begin{tikzpicture}
\node(a){$I$};
\node(b)[right of=a, xshift=4em]{$M(\withm A)^*I$};
\node(c)[below of=b]{$B$};
\path[arrows={-latex}, font=\scriptsize]
(a) edge node [auto] {$\eta^\res$} (b)
(a) edge node [below] {$u^B\ \ $} (c)
(b) edge node [auto] {$f$} (c)
;
\end{tikzpicture}
\end{equation*}
It commutes, since $\eta^\res = \mathfrak u$, and $f$ is a morphism between the monoid parts of $FA$ and $\tuple{B,b,m^B,u^B}$.
\end{proof}

\subsection{Theorem~\ref{thm:monadicity}}

We use the strict version of Beck's monadicity theorem~\cite{BeckPhD} (see Mac Lane~\cite[Sec.\ VI.7]{lane1998categories}). 
Let $h_0, h_1 : \tuple{A,a, m^A, u^A} \to \tuple{B,b,m^B,u^B}$ be a pair of morphisms between Eilenberg--Moore monoids. Assume that $g : B \to C$ is a split coequaliser of $Uh_0$ and $Uh_1$, that is, the following diagram commutes, and the top and bottom paths are both equal to $\id$:
\begin{equation*}
\begin{tikzpicture}
\node(a1){$B$};
\node(a2)[right of=a1, xshift=0em]{$A$};
\node(a3)[right of=a2, xshift=0em]{$B$};
\node(b1)[below of=a1]{$C$};
\node(b2)[right of=b1, xshift=0em]{$B$};
\node(b3)[right of=b2, xshift=0em]{$C$};
\path[arrows={-latex}, font=\scriptsize]
(a1) edge node [auto] {$t$} (a2)
(a2) edge node [auto] {$h_0$} (a3)
(b1) edge node [auto] {$s$} (b2)
(a1) edge node [left] {$g$} (b1)
(a2) edge node [auto] {$h_1$} (b2)
(b2) edge node [auto] {$g$} (b3)
(a3) edge node [auto] {$g$} (b3)
;
\end{tikzpicture}
\end{equation*}
By Beck's theorem, it is enough to show that there exists a unique Eilenberg--Moore monoid $\tuple{C, c, m^C, u^C}$ such that $g$ is a morphism of Eilenberg--Moore monoids $\tuple{B,b,m^B,u^B} \to \tuple{C,c,m^C,u^C}$ and a coequaliser of $h_0$ and $h_1$.

We notice that $h_0$ and $h_1$ are also morphisms of Eilenberg--Moore algebras $\tuple{A,a} \to \tuple{B,b}$. Hence, by the fact that the Eilenberg--Moore adjunction is monadic, there exists a unique Eilenberg--Moore algebra $\tuple{C,c}$ such that $g : \tuple{B,b} \to \tuple{C,c}$ is a unique coequaliser of $h_0$ and $h_1$ understood as morphisms between Eilenberg--Moore algebras. In detail, the algebra $\tuple{C,c}$ is equal to:
\begin{equation*}
\tuple{C ,\ \ MC \xrightarrow{Ms} MB \xrightarrow{b} B \xrightarrow{g} C}
\end{equation*}
Similarly, $h_0$ and $h_1$ are morphism between monoids $\tuple{A,m^A,u^A}$ and $\tuple{B,m^B,u^B}$. Since the adjunction between $\catvar C$ and the category of monoids is monadic, there exists a unique monoid $\tuple{C,m^C,u^C}$ such that $g : \tuple{B,m^B,u^B} \to \tuple{C,m^C,u^C}$ is a unique coequaliser of $h_0$ and $h_1$ understood as morphism between monoids. In detail, the monoid $\tuple{C,m^C,u^C}$ is given as:
\begin{equation*}
\tuple{C ,\ \ C\otimes C \xrightarrow{s\otimes s} B \otimes B \xrightarrow{m^B} B \xrightarrow{g} C ,\ \ I \xrightarrow{u^B} B \xrightarrow{g} C}
\end{equation*}

To check that $\tuple{C,c,m^C,u^C}$ is an Eilenberg--Moore monoid,  we need to verify the coherence condition:
\begin{equation*}
\begin{tikzpicture}
\node(a1){$MC \otimes C$};
\node(a2)[right of=a1, xshift=4em]{$MB \otimes C$};
\node(a4)[right of=a2, xshift=12em]{$B \otimes C$};
\node(a5)[right of=a4, xshift=4em]{$C \otimes C$};
\node(b1)[below of=a1]{$M(C \otimes C)$};
\node(b2)[right of=b1, xshift=4em]{$M(B \otimes C)$};
\node(b3)[right of=b2, xshift=4em]{$MB \otimes B$};
\node(b4)[right of=b3, xshift=4em]{$B \otimes B$};
\node(b5)[right of=b4, xshift=4em]{$B \otimes B$};
\node(c1)[below of=b1]{$M(B \otimes B)$};
\node(c4)[below of=b4]{$B$};
\node(c5)[below of=b5]{$B$};
\node(d1)[below of=c1]{$MB$};
\node(d2)[right of=d1, xshift=4em]{$MC$};
\node(d3)[right of=d2, xshift=4em]{$MB$};
\node(d4)[right of=d3, xshift=4em]{$B$};
\node(d5)[right of=d4, xshift=4em]{$C$};
\path[arrows={-latex}, font=\scriptsize]
(a1) edge node [auto] {$Ms \otimes \id$} (a2)
(a2) edge node [auto] {$b \otimes \id$} (a4)
(a4) edge node [auto] {$g \otimes \id$} (a5)
(b1) edge node [auto] {$M(s \otimes \id)$} (b2)
(b3) edge node [auto] {$b \otimes \id$} (b4)
(b4) edge node [right, yshift=-0.2em] {$g \otimes g$} (a5)
(a5) edge node [auto] {$s \otimes s$} (b5)
(a4) edge node [auto] {$\id \otimes s$} (b4)
(a2) edge node [auto, yshift=-0.3em] {$\id \otimes s$} (b3)
(a1) edge node [left] {$\tau$} (b1)
(a2) edge node [auto] {$\tau$} (b2)
(b1) edge node [left] {$M(s \otimes s)$} (c1)
(b2) edge node [left, yshift=0.8em, xshift=0.7em] {$M(\id \otimes s)$} (c1)
(b3) edge node [auto] {$\tau$} (c1)
(b4) edge node [auto] {$m^B$} (c4)
(c1) edge node [left] {$Mm^B$} (d1)
(b5) edge node [auto] {$m^B$} (c5)
(c5) edge node [auto] {$g$} (d5)
(d1) edge node [below] {$Mg$} (d2)
(d2) edge node [below] {$Ms$} (d3)
(d3) edge node [below] {$b$} (d4)
(d4) edge node [below] {$g$} (d5)
(d1) edge node [auto] {$b$} (c4)
(c4) edge node [auto] {$g$} (d5)
;
\node[right of=a1, xshift=0em, yshift=-1.5em] {\ding{192}};
\node[right of=a1, xshift=6em, yshift=-2.5em] {\ding{193}};
\node[right of=a1, xshift=13em, yshift=-1.5em] {\ding{194}};
\node[right of=a1, xshift=23em, yshift=-1.0em] {\ding{195}};
\node[right of=a1, xshift=-3em, yshift=-6.25em] {\ding{196}};
\node[right of=a1, xshift=13em, yshift=-6.5em] {\ding{197}};
\node[right of=a1, xshift=23em, yshift=-6.5em] {\ding{198}};
\node[right of=a1, xshift=20em, yshift=-10em] {\ding{199}};
\end{tikzpicture}
\end{equation*}
\ding{192} and
\ding{193} naturality of $\tau$,
\ding{194} $\otimes$ is a bifunctor,
\ding{195} $g \cdot s = \id$, since $g$ is a split coequaliser,
\ding{196} $\otimes$ is a bifunctor,
\ding{197} $\tuple{B,b,m^B,u^B}$ is an Eilenberg--Moore monoid,
\ding{198} $g$ is a morphism between monoids,
\ding{199} $g$ is a morphism between Eilenberg--Moore algebras.

Morphisms in $\catname{EMMon}_M$ need to be exactly morphisms between Eilenberg--Moore algebras and monoids. Thus, $g$ is a morphism $\tuple{B,b,m^B,u^B} \to \tuple{C,c,m^C,u^C}$ and a coequaliser of $h_0$ and $h_1$.

\subsection{Theorem~\ref{thm:mama}}

The triple $\tuple{\mexp,\ \comp,\ \ident}$ is a monoid, since it is a special case of the general construction $\tuple{\monoidalexp{A}{A},\ \comp,\ \ident}$; see, for example, Rivas \textit{et al.}~\cite{DBLP:conf/ppdp/RivasJS15}.
We need to verify that $p$ has the Eilenberg--Moore property.\footnote{Note that $\tuple{MA \Rightarrow MA, \mladj p}$ is defined as in Kock's~\cite{JAZ:4877532} construction of a closed monoidal structure on the Eilenberg--Moore category of a commutative monad. Kock's proof that $\tuple{MA \Rightarrow MA, \mladj p}$ is an Eilenberg--Moore algebra can be applied if we additionally assume that $\catvar C$ is symmetric.} First, we need to show that $\mladj{p} \cdot \eta = \id$. Using the naturality of $\mladj\spholder$, it is enough to show that $\mladj{p \cdot (\eta\otimes\id)} = \id$. Consider the following diagram, where the left-bottom path is equal to $p \cdot (\eta\otimes\id)$:
\begin{equation*}
\begin{tikzpicture}
\node(a1){$(\mexp) \otimes MA$};
\node(b1)[below of=a1,xshift=0em]{$M(\mexp) \otimes MA$};    
\node(b2)[right of=b1,xshift=10em]{$M((\mexp) \otimes MA)$};
\node(b3)[right of=b2,xshift=7em]{$MMA$};
\node(b4)[right of=b3,xshift=4em]{$MA$};
\path[arrows={-latex}, font=\scriptsize]
(a1) edge node [auto] {$\eta \otimes\id$} (b1)
(b1) edge node [auto] {$\tau$} (b2)
(b2) edge node [auto] {$M\app$} (b3)
(b3) edge node [auto] {$\mu$} (b4)
(a1) edge[bend left=15] node [auto] {$\eta$} (b2)
(a1) edge[bend left=15] node [auto] {$\app$} (b4)
;
\node[right of=a1, xshift=2em, yshift=-2em] {\ding{192}};
\node[right of=a1, xshift=12em, yshift=-1em] {\ding{193}};
\end{tikzpicture}
\end{equation*}
\ding{192} properties of strength,
\ding{193} naturality of $\eta$ and monad laws.

Thus, we obtain $\mladj{p \cdot (\eta\otimes\id)} = \mladj\app = \id$, since $\app$ is the counit of the adjunction.
To see that $\mladj p$ has the Eilenberg--Moore property, we also need to verify that $\mladj p \cdot M\mladj p = \mladj p \cdot \mu$. Using the naturality of $\mladj\spholder$, it is enough to verify that $\mladj{p \cdot (M\mladj{p}\otimes\id)} = \mladj{p \cdot (\mu \otimes \id)}$. Therefore, it is enough to show that $p \cdot (M\mladj{p}\otimes\id) = p \cdot (\mu \otimes \id)$. It is detailed in the following diagram:
\begin{equation*}
\begin{tikzpicture}
\node(c1){$MM(\mexp)\otimes MA$};
\node(d1)[right of=c1, xshift=26em]{$M(\mexp)\otimes MA$};
\node(c2)[below of=c1,xshift=18em]{$M(M(\mexp)\otimes MA)$};
\node(c3)[below of=c2]{$MM((\mexp)\otimes MA)$};
\node(c4)[below of=c3]{$MMMA$};
\node(d3)[below of=d1,yshift=-4em]{$M((\mexp)\otimes MA)$};
\node(d4)[below of=d3]{$MMA$};
\node(d5)[below of=d4]{$MA$};
\node(c5)[below of=c4]{$MMA$};
\node(b5)[below of=c1,yshift=-12em]{$M((\mexp)\otimes MA)$};
\node(b4)[above of=b5,yshift=4em]{$M(\mexp)\otimes MA$};
\path[arrows={-latex}, font=\scriptsize]
(c1) edge node [auto] {$\mu \otimes\id$} (d1)
(c1) edge node [auto] {$\tau$} (c2)
(c2) edge node [auto] {$M\tau$} (c3)
(c3) edge node [auto] {$MM\app$} (c4)
(d1) edge node [auto] {$\tau$} (d3)
(c3) edge node [auto] {$\mu$} (d3)
(d3) edge node [auto] {$M\app$} (d4)
(c4) edge node [auto] {$\mu$} (d4)
(d4) edge node [auto] {$\mu$} (d5)
(c5) edge node [auto] {$\mu$} (d5)
(c4) edge node [auto] {$M\mu$} (c5)
(c2.south west) edge[bend right=60] node [auto,yshift=-0.5em] {$Mp$} (c5.north west)
(c1) edge node [auto] {$M\mladj{p} \otimes \id$} (b4)
(b4) edge node [auto] {$\tau$} (b5)
(b5) edge node [auto] {$M\app$} (c5)
(c2.west) edge[bend right=0] node [auto,yshift=-2em,xshift=-2.3em] {$M(\mladj{p} \otimes \id)$} (b5)
;
\node[right of=c1, xshift=3em, yshift=-3.5em] {\ding{192}};
\node[right of=c1, xshift=22em, yshift=-3.5em] {\ding{193}};
\node[right of=c1, xshift=5em, yshift=-10.5em] {\ding{194}};
\node[right of=c1, xshift=11em, yshift=-10.5em] {\ding{195}};
\node[right of=c1, xshift=20em, yshift=-9.5em] {\ding{196}};
\node[right of=c1, xshift=20em, yshift=-13.5em] {\ding{197}};
\end{tikzpicture}
\end{equation*}
\ding{192} naturality of $\tau$,
\ding{193} properties of strength,
\ding{194} $\app$ is the counit of the adjunction,
\ding{195} definition of $p$,
\ding{196} naturality of $\mu$,
\ding{197} monad laws.

It is left to check that the coherence diagram commutes, which in this case instantiates as follows:
\begin{equation*}
\begin{tikzpicture}
\node(a) {$M(\mexp) \otimes (\mexp)$};
\node(b)[below of=a] {$M((\mexp) \otimes (\mexp))$};
\node(c)[right of=a, xshift=22em] {$(\mexp) \otimes (\mexp)$};
\node(d)[right of=b, xshift=10em] {$M(\mexp)$};
\node(e)[right of=d, xshift=8em] {$\mexp$};
\path[arrows={-latex}, font=\scriptsize]
(a) edge node [auto] {$\mladj{\mu \cdot M\app \cdot \tau} \otimes \id$} (c)
(a) edge node [auto] {$\tau$} (b)
(b) edge node [auto] {$M \comp$} (d)
(d) edge node [auto] {$\mladj{\mu \cdot M\app \cdot \tau}$} (e)
(c) edge node [auto] {$\comp$} (e)
;
\end{tikzpicture}
\end{equation*}
Since $\mradj{\spholder}$ is an isomorphism, it is enough to show that the $\mradj{\spholder}$-images of both paths in the diagram are equal. The top-right path:
\begin{align*}
\mradj{\comp \cdot (\mladj{\mu \cdot M\app \cdot \tau} \otimes \id)}
&=
\mradj{\comp} \cdot ((\mladj{\mu \cdot M\app \cdot \tau} \otimes \id) \otimes \id) & & \text{naturality of $\mradj{\spholder}$}
\\
&=
\mradj{\mladj k} \cdot ((\mladj{\mu \cdot M\app \cdot \tau} \otimes \id) \otimes \id) & & \text{definition of $\comp$}
\\
&=
k \cdot ((\mladj{\mu \cdot M\app \cdot \tau} \otimes \id) \otimes \id) & & \text{isomorphism}
\\
(*)
&=
k \cdot ((((\id \Rightarrow \mu) \cdot (\id \Rightarrow M\app) \cdot \mladj{\tau}) \otimes \id) \otimes \id) & & \text{naturality of $\mladj{\spholder}$}
\end{align*}
The left-bottom path:
\begin{align*}
\mradj{\mladj{\mu \cdot M\app \cdot \tau} \cdot M\app \cdot \tau}
&=
\mradj{\mladj{\mu \cdot M\app \cdot \tau}} \cdot (M\app \otimes \id) \cdot (\tau \otimes \id) & & \text{naturality of $\mradj{\spholder}$}
\\
(**)
&=
\mu \cdot M\app \cdot \tau \cdot (M\app \otimes \id) \cdot (\tau \otimes \id) & & \text{isomorphism}
\end{align*}
To show that $(*)$ is equal to $(**)$, we split the desired diagram into two. First, consider the following diagram, where the left-bottom path is equal to~$(*)$. For brevity, we denote the object $MA\Rightarrow MA$ as $E$.
\begin{equation*}
\begin{tikzpicture}
\node(a) {$(ME \otimes E)\otimes MA$};
\node(b)[below of=a, yshift=-4em] {$((MA \Rightarrow M(E \otimes MA)) \otimes E) \otimes MA$};
\node(c)[below of=b] {$((MA \Rightarrow MMA) \otimes E) \otimes MA$};
\node(d)[below of=c] {$(E \otimes E) \otimes MA$};
\node(e)[below of=d] {$E \otimes (E \otimes MA)$};
\node(f)[right of=e, xshift=8em] {$E \otimes MA$};
\node(g)[right of=f, xshift=16em] {$MA$};
\node(h1)[right of=a, xshift=12em,yshift=0em] {$ME\otimes(E\otimes MA)$};
\node(h2)[below of=h1, xshift=-8em] {$(MA \Rightarrow M(E \otimes MA)) \otimes (E \otimes MA)$};
\node(h3)[below of=h1,yshift=-4em] {$(MA \Rightarrow M(E \otimes MA)) \otimes MA$};
\node(h4)[below of=h3] {$(MA \Rightarrow MMA) \otimes MA$};
\node(h5)[below of=h4] {$E \otimes MA$};
\node(hh1)[right of=h1, xshift=12em] {$ME\otimes MA$};
\node(hh2)[right of=h2, xshift=12em] {$(MA \Rightarrow M(E \otimes MA)) \otimes MA$};
\node(j)[above of=g] {$MMA$};
\node(k)[above of=j] {$M(E \otimes MA)$};
\path[arrows={-latex}, font=\scriptsize]
(a) edge node [auto, yshift=2em] {$(\mladj{\tau} \otimes \id)\otimes \id$} (b)
(b) edge node [auto] {$((\id \Rightarrow M\app) \otimes \id) \otimes \id$} (c)
(c) edge node [auto] {$((\id \Rightarrow \mu) \otimes \id) \otimes \id$} (d)
(d) edge node [auto] {$\cong$} (e)
(e) edge node [auto] {$\id \otimes \app$} (f)
(f) edge node [auto] {$\app$} (g)
(h1) edge node [left, xshift=-0.5em] {$\mladj{\tau} \otimes\id$} (h2)
(h3) edge node [auto] {$(\id \Rightarrow M\app) \otimes \id$} (h4)
(h4) edge node [auto] {$(\id \Rightarrow \mu) \otimes \id$} (h5)
(a) edge node [auto] {$\cong$} (h1)
(b) edge node [auto] {$\cong$} (h2)
(b) edge node [auto] {$\comp \otimes \id$} (h3)
(c) edge node [auto] {$\comp \otimes \id$} (h4)
(d) edge node [auto] {$\comp \otimes \id$} (h5)
(h5) edge node [auto] {$\app$} (g)
(hh1) edge node [auto] {$\mladj{\tau} \otimes \id$} (hh2)
(h1) edge node [auto] {$\id \otimes \app$} (hh1)
(h2) edge node [auto] {$\id \otimes \app$} (hh2)
(j) edge node [auto] {$\mu$} (g)
(k) edge node [auto] {$M\app$} (j)
(h3.east) edge node [auto] {$\app$} (k)
(hh2) edge node [auto] {$\app$} (k)
(hh1) edge node [auto] {$\tau$} (k)
;
\node[below of=a, xshift=6.75em, yshift=2em] {\ding{192}};
\node[below of=a, xshift=20em, yshift=2em] {\ding{193}};
\node[below of=a, xshift=15em, yshift=-2em] {\ding{194}};
\node[below of=a, xshift=29em, yshift=-2em] {\ding{195}};
\node[below of=a, xshift=12em, yshift=-6em] {\ding{196}};
\node[below of=a, xshift=12em, yshift=-10em] {\ding{197}};
\node[below of=a, xshift=25em, yshift=-10em] {\ding{198}};
\node[below of=a, xshift=12em, yshift=-14em] {\ding{199}};
\end{tikzpicture}
\end{equation*}
\ding{192} naturality of $\cong$,
\ding{193} $\otimes$ is a bifunctor,
\ding{194} and
\ding{195} $\app$ is the counit of the adjunction,
\ding{196} and
\ding{197}~naturality of $\comp$,
\ding{198} naturality of $\app$,
\ding{199} $\app$ is the counit of the adjunction.

Now, consider the following diagram, where the left-most path is equal to the top-right path of the diagram above, while the path around the perimeter is equal to $(**)$:
\begin{equation*}
\begin{tikzpicture}
\node(a) {$(ME \otimes E)\otimes MA$};
\node(b)[below of=a] {$ME \otimes (E\otimes MA)$};
\node(c)[below of=b] {$ME \otimes MA$};
\node(d)[below of=c] {$M(E \otimes MA)$};
\node(e)[below of=d] {$MMA$};
\node(f)[below of=e] {$MA$};
\node(cc)[right of=c, xshift=6em] {$M(E \otimes (E\otimes MA))$};
\node(ccc)[right of=cc, xshift=6em] {$M((E \otimes E)\otimes MA)$};
\node(x)[right of=a, xshift=25em] {$M(E\otimes E)\otimes MA$};
\node(y)[right of=c, xshift=25em] {$ME\otimes MA$};
\node(z)[right of=e, xshift=25em] {$M(E\otimes MA)$};
\path[arrows={-latex}, font=\scriptsize]
(a) edge node [auto] {$\cong$} (b)
(b) edge node [auto] {$\id \otimes \app$} (c)
(c) edge node [auto] {$\tau$} (d)
(d) edge node [auto] {$M\app$} (e)
(e) edge node [auto] {$\mu$} (f)
(b) edge node [auto] {$\tau$} (cc)
(cc) edge node [auto] {$M(\id \otimes \app)$} (d)
(ccc) edge node [above] {$M\cong$} (cc)
(x) edge node [auto] {$M\comp \otimes \id$} (y)
(y) edge node [auto] {$\tau$} (z)
(a) edge node [auto] {$\tau \otimes \id$} (x)
(z) edge node [above] {$M\app$} (e)
(x) edge node [auto] {$\tau$} (ccc)
(ccc) edge node [auto,yshift=0.5em, xshift=-1em] {$M(\comp \otimes \id)$} (z)
;
\node[below of=a, xshift=15em, yshift=0em] {\ding{192}};
\node[below of=a, xshift=4em, yshift=-4em] {\ding{193}};
\node[below of=a, xshift=25.25em, yshift=-4em] {\ding{194}};
\node[below of=a, xshift=15em, yshift=-8em] {\ding{195}};
\end{tikzpicture}
\end{equation*}
\ding{192} properties of strength,
\ding{193} and \ding{194} naturality of strength,
\ding{195} $\app$ is the counit of the adjunction.

\subsection{Theorem~\ref{thm:rep}}

The fact that $\mladj m$ is a morphism between monoids follows from the construction of Cayley representation of monoids in monoidal categories~\cite{RivasJaskelioff2014,DBLP:conf/ppdp/RivasJS15}. The retraction $r : (MA \Rightarrow MA) \to MA$ is defined as follows:
\begin{equation*}
r = \Bigl(
(MA \Rightarrow MA)
\xrightarrow{\cong}
(MA \Rightarrow MA) \otimes I
\xrightarrow{\id \otimes u}
(MA \Rightarrow MA) \otimes MA
\xrightarrow{\app}
MA
\Bigr)
\end{equation*}

To check that $\mladj{m}$ is a morphism between Eilenberg--Moore parts, we want the following diagram to commute:
\begin{equation*}
\begin{tikzpicture}
\node(a0){$MMA$};    
\node(a1)[right of=a0, xshift=4em]{$M(MA \Rightarrow MA)$};
\node(b0)[below of=a0, xshift=0em]{$MA$};
\node(b1)[below of=a1, xshift=0em]{$MA \Rightarrow MA$};
\path[arrows={-latex}, font=\scriptsize]
(a0) edge node [auto] {$M \mladj m$} (a1)
(b0) edge node [auto] {$\mladj m$} (b1)
(a0) edge node [auto] {$\mu$} (b0)
(a1) edge node [auto] {$\mladj p$} (b1)
;
\end{tikzpicture}
\end{equation*}

Using the naturality of $\mladj\spholder$, it is enough to show that $\mladj{p \cdot (M\mladj{m} \otimes \id)} = \mladj{m \cdot (\mu \otimes \id)}$. Thus, it is enough to show that $p \cdot (M\mladj{m} \otimes \id) = m \cdot (\mu \otimes \id)$. It is detailed in the following diagram, where the right-most edge is equal to $p$:
\begin{equation*}
\begin{tikzpicture}
\node(a0){$MMA \otimes MA$};    
\node(a1)[right of=a0, xshift=16em]{$M(MA \Rightarrow MA) \otimes MA$};
\node(b0)[below of=a0, yshift=-8em]{$MA \otimes MA$};
\node(b1)[below of=a1, xshift=0em]{$M((MA \Rightarrow MA) \otimes MA)$};
\node(c1)[below of=b1, xshift=0em]{$MMA$};
\node(d1)[below of=c1, xshift=0em]{$MA$};
\node(x)[left of=b1, xshift=-10em]{$M(MA\otimes MA)$};
\path[arrows={-latex}, font=\scriptsize]
(a0) edge node [auto] {$M \mladj m \otimes \id$} (a1)
(b0) edge node [auto] {$m$} (d1)
(a0) edge node [auto] {$\mu \otimes \id$} (b0)
(a1) edge node [auto] {$\tau$} (b1)
(b1) edge node [auto] {$M\app$} (c1)
(c1) edge node [auto] {$\mu$} (d1)
(a0) edge node [auto] {$\tau$} (x)
(x) edge node [auto] {$M(\mladj m \otimes \id)$} (b1)
(x) edge node [auto] {$Mm$} (c1)
;
\node[below of=a, xshift=10em, yshift=2.5em] {\ding{192}};
\node[below of=a, xshift=18em, yshift=-1.5em] {\ding{193}};
\node[below of=a, xshift=6em, yshift=-4em] {\ding{194}};
\end{tikzpicture}
\end{equation*}
\ding{192} naturality of $\tau$,
\ding{193} $\app$ is the counit of the adjunction,
\ding{194} coherence for $\tuple{MA,\mu,m,u}$.

To verify that $\mladj m$ preserves the unit, we want the following diagram:
\begin{equation*}
\begin{tikzpicture}
\node(a){$I$};
\node(b)[right of=a, xshift=2em]{$MA$};
\node(c)[below of=b]{$MA\Rightarrow MA$};
\path[arrows={-latex}, font=\scriptsize]
(a) edge node [auto] {$u$} (b)
(a) edge node [auto,xshift=-0.2em] {$\ident$} (c)
(b) edge node [auto] {$\mladj m$} (c)
;
\end{tikzpicture}
\end{equation*}
Using the naturality of $\mladj\spholder$, it is enough to show that $\mladj{m \cdot (u \otimes \id)} = \ident$. Thus, it is enough to show that $m \cdot (u \otimes \id) = \mradj{\ident} = (I \otimes MA \xrightarrow{\cong} MA)$. This follows from the fact that $\tuple{MA, m, u}$ is a monoid.

\subsection{Theorem~\ref{thm:semmonisemmon}}

Except for the assumption that $M$ is commutative, the only difference between the definition of symmetric Eilenberg--Moore monoids and the definition of regular Eilenberg--Moore monoids is in the coherence conditions. Thus, given a symmetric Eilenberg--Moore monoid $\tuple{A,a,m,u}$, it is enough to show that it satisfies the coherence condition for Eilenberg--Moore monoids:
\begin{equation*}
\begin{tikzpicture}
\node(a0){$MA \otimes A$};
\node(a1)[below of=a0, yshift=-12em]{$M(A \otimes A)$};
\node(b0)[below of=a0, xshift=6em]{$MA \otimes MA$};
\node(b1)[below of=b0, xshift=0em]{$M(MA \otimes A)$};
\node(b2)[below of=b1, xshift=0em]{$MM(A \otimes A)$};
\node(c0)[right of=b0, xshift=4em]{$M(A \otimes MA)$};
\node(c1)[right of=b1, xshift=4em]{$MM(A \otimes A)$};
\node(c2)[below of=c1, xshift=0em]{$M(A \otimes A)$};
\node(c3)[below of=c2, xshift=0em]{$MA$};
\node(xx)[right of=c1, xshift=2em, yshift=-4em]{$MA$};
\node(yy)[right of=c3, xshift=8em, yshift=0em]{$A$};
\node(zz)[right of=a0, xshift=22em, yshift=0em]{$A \otimes A$};
\path[arrows={-latex}, font=\scriptsize]
(a0) edge node [auto] {$\tau$} (a1)
(a0) edge node [auto] {$\id \otimes \eta$} (b0)
(b0) edge node [auto] {$\tau'$} (b1)
(b1) edge node [auto] {$M\tau$} (b2)
(a1) edge node [auto] {$\eta$} (b2)
(a0) edge node [below, yshift=-0.3em] {$\eta$} (b1)
(b0) edge node [below] {$\tau$} (c0)
(c0) edge node [auto] {$M\tau'$} (c1)
(c1) edge node [auto] {$\mu$} (c2)
(b2) edge node [auto] {$\mu$} (c2)
(c2) edge node [auto] {$Mm$} (xx)
(a1) edge node [auto] {$Mm$} (c3)
(c3) edge node [auto] {$a$} (yy)
(zz) edge node [auto] {$m$} (yy)
(a0) edge node [auto] {$a \otimes \id$} (zz)
(xx) edge node [auto] {$a$} (yy)
(a1) edge node [auto] {$\id$} (c2)
(b0) edge node [auto] {$a \otimes a$} (zz)
;
\node[below of=a0, xshift=1.25em, yshift=0em] {\ding{192}};
\node[below of=a0, xshift=3.2em, yshift=1em] {\ding{193}};
\node[below of=a0, xshift=7em, yshift=2em] {\ding{194}};
\node[below of=a0, xshift=10em, yshift=-4em] {\ding{195}};
\node[below of=a0, xshift=20em, yshift=-4em] {\ding{196}};
\node[below of=a0, xshift=8em, yshift=-9em] {\ding{197}};
\node[below of=a0, xshift=14em, yshift=-10em] {\ding{198}};
\end{tikzpicture}
\end{equation*}
\ding{192} naturality of $\eta$,
\ding{193} properties of strength,
\ding{194} $a$ has the Eilenberg--Moore property,
\ding{195} $M$ is commutative,
\ding{196} coherence for symmetric Eilenberg--Moore monoids,
\ding{197} monad laws,
\ding{198} identity.

\subsection{Theorem~\ref{thm:revisitingdefofendo}}

The fact that $\mladj{q}$ has the Eilenberg--Moore property follows from exactly the same reasoning as for $\mladj{p}$ in the proof of Theorem~\ref{thm:mama}.

It is left to verify the coherence condition. It is shown by the following diagram (and subsequent two diagrams, each detailing a part of the previous one), where $E$ stands for $A \Rightarrow MA$:
\begin{equation*}
\begin{tikzpicture}
\node(a0){$(ME\otimes ME)\otimes A$};
\node(ax)[right of=a0, xshift=28em]{$(E \otimes E) \otimes A$};
\node(bx)[below of=ax]{$E \otimes (E\otimes A)$};
\node(cx)[below of=bx,yshift=-4em]{$E \otimes MA$};
\node(dx)[below of=cx]{$M(E\otimes A)$};
\node(ex)[below of=dx]{$MMA$};
\node(fx)[below of=ex]{$MA$};
\node(by)[left of=bx, xshift=-8em]{$ME \otimes (ME\otimes A)$};
\node(cy)[left of=cx, xshift=-8em,yshift=4em]{$ME \otimes M(E\otimes A)$};
\node(dy)[left of=dx, xshift=-8em,yshift=4em]{$ME \otimes MMA$};
\node(b0)[below of=a0]{$M(E \otimes ME) \otimes A$};
\node(c0)[below of=b0]{$MM(E\otimes E)\otimes A$};
\node(d0)[below of=c0, yshift=-4em]{$MME\otimes A$};
\node(e0)[below of=d0, yshift=-4em]{$ME \otimes A$};
\node(e1)[right of=e0, xshift=4em]{$M(E \otimes A)$};
\node(e2)[right of=e1, xshift=4em]{$MMA$};
\node(c1)[right of=c0, xshift=4em]{$M(E\otimes E)\otimes A$};
\node(d1)[right of=d0, xshift=4em]{$M((E\otimes E)\otimes A)$};
\node(d2)[right of=d1, xshift=4em]{$M(E \otimes (E\otimes A))$};
\node(d3)[right of=d2, xshift=6em]{$M(E\otimes MA)$};
\node(d4)[below of=d3, xshift=0em]{$MM(E\otimes A)$};
\node(d5)[left of=d4, xshift=-6em]{$MMMA$};
\path[arrows={-latex}, font=\scriptsize]
(a0) edge node [auto] {$(\mladj q \otimes \mladj q) \otimes \id$} (ax)
(by) edge node [auto] {$\mladj q \otimes (\mladj q \otimes \id)$} (bx)
(ax) edge node [auto] {$\cong$} (bx)
(a0) edge node [auto] {$\cong$} (by)
(bx) edge node [auto] {$\id \otimes \app$} (cx)
(cx) edge node [auto] {$\tau'$} (dx)
(dx) edge node [auto] {$M\app$} (ex)
(ex) edge node [auto] {$\mu$} (fx)
(a0) edge node [auto] {$\tau \otimes \id$} (b0)
(cy) edge node [auto] {$\id \otimes M\app$} (dy)
(by) edge node [auto, yshift=-0.2em] {$\mladj q \otimes q$} (cx)
(by) edge node [auto] {$\id \otimes \tau$} (cy)
(dy) edge node [auto, xshift=1em] {$\mladj q \otimes \mu$} (cx)
(b0) edge node [auto] {$M\tau' \otimes \id$} (c0)
(c0) edge node [auto] {$MM \kcomp \otimes \id$} (d0)
(d0) edge node [auto] {$\mu \otimes \id$} (e0)
(e0) edge node [auto] {$\tau$} (e1)
(e1) edge node [auto] {$M\app$} (e2)
(e2) edge node [auto] {$\mu$} (fx)
(c0) edge node [auto] {$\mu \otimes \id$} (c1)
(c1) edge node [auto] {$\tau$} (d1)
(d1) edge node [auto] {$M(\kcomp \otimes \id)$} (e1)
(d1) edge node [auto] {$M\cong$} (d2)
(d2) edge node [auto] {$M(\id \otimes \app)$} (d3)
(d3) edge node [auto] {$M\tau'$} (d4)
(d5) edge node [auto] {$M\mu$} (e2)
(d4) edge node [above] {$MM\app$} (d5)
;
\node[below of=a0, xshift=6em, yshift=0em] {\ding{192}};
\node[below of=a0, xshift=18em, yshift=2.5em] {\ding{193}};
\node[below of=a0, xshift=26em, yshift=-1.5em] {\ding{194}};
\node[below of=a0, xshift=25em, yshift=-5.5em] {\ding{195}};
\node[below of=a0, xshift=3.7em, yshift=-10em] {\ding{196}};
\node[below of=a0, xshift=12em, yshift=-14em] {\ding{197}};
\end{tikzpicture}
\end{equation*}
\ding{192} detailed below,
\ding{193} naturality of $\cong$,
\ding{194} $\app$ is the counit of the adjunction,
\ding{195} definition of $q$,
\ding{196} naturality of $\mu$ and $\tau$,
\ding{197} $\app$ is the counit of the adjunction. TUTAJ!!

\begin{equation*}
\begin{tikzpicture}
\node(a0){$(ME \otimes ME) \otimes A$};
\node(a1)[right of=a0, xshift=4.5em]{$ME \otimes (ME \otimes A)$};
\node(a2)[right of=a1, xshift=5em]{$ME \otimes M(E \otimes A)$};
\node(a3)[right of=a2, xshift=14em]{$ME \otimes MMA$};
\node(b0)[below of=a0, yshift=-4em]{$M(E \otimes ME) \otimes A$};
\node(b1)[below of=a1, yshift=-4em]{$M((E \otimes ME) \otimes A)$};
\node(b2)[above of=b1, yshift=0em]{$M(E \otimes (ME \otimes A))$};
\node(b3)[below of=a2, yshift=0em]{$M(E \otimes M(E \otimes A))$};
\node(x1)[below of=b3, yshift=0em]{$M(E \otimes MMA)$};
\node(z1)[below of=a3, yshift=0em]{$E \otimes MA$};
\node(y1)[left of=z1, xshift=-4em]{$ME \otimes MA$};
\node(y2)[below of=y1]{$M(ME \otimes A)$};
\node(z2)[below of=z1]{$M(E \otimes A)$};
\node(z3)[below of=z2]{$MMA$};
\node(z4)[below of=z3, yshift=-8em]{$MA$};
\node(x2)[below of=x1]{$M(E\otimes MA)$};
\node(x3)[below of=x2]{$MM(E\otimes A)$};
\node(y3)[below of=y2]{$MM(E\otimes A)$};
\node(x4)[below of=x3]{$M(E\otimes A)$};
\node(y4)[below of=y3]{$MMMA$};
\node(y5)[below of=y4]{$MMA$};
\node(c0)[below of=b0]{$MM(E \otimes E) \otimes A$};
\node(d0)[below of=c0]{$M(E \otimes E) \otimes A$};
\node(e0)[below of=d0]{$M((E \otimes E) \otimes A)$};
\node(f0)[below of=e0]{$M(E \otimes (E \otimes A))$};
\node(f1)[right of=f0,xshift=4.5em]{$M(E \otimes MA)$};
\node(f2)[right of=f1,xshift=3.5em]{$MM(E \otimes A)$};
\node(f3)[right of=f2,xshift=2.5em]{$MMMA$};
\node(f4)[right of=f3,xshift=1em]{$MMA$};
\node(c1)[right of=c0,xshift=4.5em]{$M(M(E \otimes E) \otimes A)$};
\node(d1)[below of=c1]{$MM((E\otimes E)\otimes A)$};
\node(e1)[below of=d1]{$MM(E\otimes (E\otimes A))$};
\path[arrows={-latex}, font=\scriptsize]
(a0) edge node [auto, xshift=-0.1em] {$\cong$} (a1)
(a1) edge node [auto] {$\id \otimes \tau$} (a2)
(a2) edge node [auto] {$\id \otimes M\app$} (a3)
(a0) edge node [auto] {$\tau \otimes \id$} (b0)
(b0) edge node [auto, xshift=-0.1em] {$\tau$} (b1)
(b1) edge node [auto] {$M\cong$} (b2)
(a1) edge node [auto] {$\tau$} (b2)
(a2) edge node [auto] {$\tau$} (b3)
(b2) edge node [auto,yshift=0.2em] {$M(\id \otimes \tau)$} (b3)
(b3) edge node [auto,yshift=0.2em] {$M(\id \otimes M\app)$} (x1)
(a3) edge node [left,yshift=-0.5em] {$\mladj q \otimes \mu$} (z1)
(a3) edge node [auto,xshift=-2.4em,yshift=1em] {$\id \otimes \mu$} (y1)
(y1) edge node [auto] {$\tau'$} (y2)
(z1) edge node [auto] {$\tau'$} (z2)
(z2) edge node [auto] {$M\app$} (z3)
(z3) edge node [auto] {$\mu$} (z4)
(y2) edge node [auto,yshift=0.2em] {$M(\mladj q \otimes \id)$} (z2)
(y2) edge node [auto,yshift=-0.4em] {$Mq$} (z3)
(x1) edge node [auto] {$M(\id \otimes \mu)$} (x2)
(y1.west) edge node [left] {$\tau$} (x2.east)
(x2) edge node [auto] {$M\tau'$} (x3)
(y2) edge node [auto] {$M\tau$} (y3)
(y3) edge node [auto] {$MM\app$} (y4)
(y4) edge node [auto,xshift=0.7em] {$M\mu$} (z3)
(x3) edge node [auto] {$\mu$} (x4)
(y3.west) edge node [left] {$\mu$} (x4.north east)
(y5) edge node [auto] {$\mu$} (z4)
(y4) edge node [auto] {$\mu$} (y5)
(x4) edge node [auto] {$M\app$} (y5)
(b0) edge node [auto] {$M\tau' \otimes \id$} (c0)
(c0) edge node [auto] {$\mu \otimes \id$} (d0)
(d0) edge node [auto] {$\tau$} (e0)
(e0) edge node [auto] {$M\cong$} (f0)
(f0) edge node [auto,yshift=0.2em] {$M(\id \otimes \app)$} (f1)
(f1) edge node [auto,yshift=0.0em] {$M\tau'$} (f2)
(f2) edge node [auto,yshift=0.2em] {$MM\app$} (f3)
(f3) edge node [auto,yshift=0.0em] {$M\mu$} (f4)
(f4) edge node [auto,yshift=0.0em] {$\mu$} (z4)
(c0) edge node [auto, xshift=-0.1em] {$\tau$} (c1)
(b1) edge node [auto] {$M(\tau' \otimes \id)$} (c1)
(c1) edge node [auto, xshift=-0.1em] {$M\tau$} (d1)
(d1.south west) edge node [above, xshift=0.1em] {$\mu$} (e0)
(d1) edge node [auto] {$MM\cong$} (e1)
(e1.south west) edge node [auto,yshift=0.5em] {$\mu$} (f0)
;
\node[below of=a0, xshift=4em, yshift=2em] {\ding{192}};
\node[below of=a0, xshift=13em, yshift=2.5em] {\ding{193}};
\node[below of=a0, xshift=24em, yshift=2em] {\ding{194}};
\node[below of=a0, xshift=4em, yshift=-6em] {\ding{195}};
\node[below of=a0, xshift=14em, yshift=-6em] {\ding{196}};
\node[below of=a0, xshift=22em, yshift=-8em] {\ding{197}};
\node[below of=a0, xshift=32em, yshift=-1em] {\ding{198}};
\node[below of=a0, xshift=34.5em, yshift=-6em] {\ding{199}};
\node[below of=a0, xshift=32em, yshift=-8em] {\ding{200}};
\node[below of=a0, xshift=4em, yshift=-10em] {\ding{201}};
\node[below of=a0, xshift=24em, yshift=-12em] {\ding{203}};
\node[below of=a0, xshift=32em, yshift=-12em] {\ding{204}};
\node[below of=a0, xshift=6em, yshift=-14em] {\ding{202}};
\end{tikzpicture}
\end{equation*}
\ding{192} properties of strength,
\ding{193} 
\ding{194} 
\ding{195} naturality of $\tau$,
\ding{196} detailed below,
\ding{197} $M$ is commutative,
\ding{198} naturality of $\tau'$,
\ding{199} $\app$ is the counit of the adjunction,
\ding{200} definition of $q$,
\ding{201} properties of strength,
\ding{202} and
\ding{203} naturality of $\mu$,
\ding{204} monad laws.
\begin{equation*}
\begin{tikzpicture}
\node(a0){$M((E \otimes ME) \otimes A)$};
\node(a1)[right of=a0, xshift=6em]{$M(E \otimes (ME \otimes A))$};
\node(a2)[right of=a1, xshift=6em]{$M(E \otimes M(E \otimes A))$};
\node(a3)[right of=a2, xshift=7em]{$M(E \otimes MMA)$};
\node(b0)[below of=a0,yshift=0.5em]{$M(M(E \otimes E) \otimes A)$};
\node(b1)[below of=b0,yshift=0.5em]{$MM((E \otimes E) \otimes A)$};
\node(b2)[below of=b1,yshift=0.5em]{$MM(E \otimes (E \otimes A))$};
\node(b3)[below of=a3,yshift=0.5em]{$M(E \otimes MA)$};
\node(c3)[below of=b3,yshift=0.5em]{$MM(E \otimes A)$};
\node(d3)[below of=c3,yshift=0.5em]{$M(E\otimes A)$};
\node(e3)[below of=d3,yshift=0.5em]{$MMA$};
\node(g3)[below of=e3,yshift=0.5em]{$MA$};
\node(x0)[below of=b2,yshift=0.5em]{$M(E \otimes (E \otimes A))$};
\node(y0)[right of=b2, xshift=4.5em]{$MM(E \otimes MA)$};
\node(z0)[right of=y0, xshift=4.5em]{$MMM(E \otimes A)$};
\node(zz)[right of=z0, xshift=4em]{$MM(E \otimes A)$};
\node(x1)[below of=x0,yshift=0.5em]{$M(E \otimes MA)$};
\node(f2)[right of=x1,xshift=3.5em]{$MM(E \otimes A)$};
\node(f3)[right of=f2,xshift=2.5em]{$MMMA$};
\node(f4)[right of=f3,xshift=1em]{$MMA$};
\node(pp)[above of=f2,xshift=12em,yshift=-0.5em]{$MMMMA$};
\node(qq)[right of=pp,xshift=2em]{$MMMA$};
\path[arrows={-latex}, font=\scriptsize]
(a0) edge node [auto] {$M(\tau' \otimes \id)$} (b0)
(a0) edge node [auto,xshift=-0.1em] {$M\cong$} (a1)
(a1) edge node [auto,yshift=0.2em] {$M(\id \otimes \tau)$} (a2)
(a2) edge node [auto] {$M\tau'$} (b2)
(b0) edge node [auto] {$M\tau$} (b1)
(b1) edge node [auto] {$MM\cong$} (b2)
(a2) edge node [auto,yshift=0.2em] {$M(\id \otimes M\app)$} (a3)
(a3) edge node [auto] {$M(\id \otimes \mu)$} (b3)
(b2) edge node [auto] {$\mu$} (x0)
(x0) edge node [left] {$M(\id\otimes\app)$} (x1)
(x1) edge node [auto,yshift=0.0em] {$M\tau'$} (f2)
(f2) edge node [auto,yshift=0.2em] {$MM\app$} (f3)
(f3) edge node [auto,yshift=0.0em] {$M\mu$} (f4)
(f4) edge node [auto,yshift=0.0em] {$\mu$} (g3)
(b3) edge node [auto] {$M\tau'$} (c3)
(c3) edge node [auto] {$\mu$} (d3)
(d3) edge node [auto] {$M\app$} (e3)
(e3) edge node [auto] {$\mu$} (g3)
(b2) edge node [auto,yshift=0.2em] {$MM(\id \otimes \app)$} (y0)
(a3) edge node [auto] {$M\tau'$} (y0)
(y0) edge node [auto] {$\mu$} (x1)
(y0) edge node [auto] {$MM\tau'$} (z0)
(z0.north east) edge node [auto] {$M\mu$} (c3.west)
(z0) edge node [auto] {$\mu$} (f2)
(z0) edge node [auto,yshift=-0.3em] {$MMM\app$} (pp)
(pp) edge node [left] {$\mu$} (f3)
(z0) edge node [auto] {$\mu$} (zz)
(zz) edge node [auto] {$\mu$} (d3)
(pp) edge node [auto] {$\mu$} (qq)
(qq) edge node [auto] {$\mu$} (e3)
;
\node[below of=a0, xshift=8em, yshift=1em] {\ding{192}};
\node[below of=a0, xshift=18em, yshift=1em] {\ding{193}};
\node[below of=a0, xshift=26em, yshift=-1em] {\ding{194}};
\node[below of=a0, xshift=26em, yshift=-5em] {\ding{195}};
\node[below of=a0, xshift=3em, yshift=-8.25em] {\ding{196}};
\node[below of=a0, xshift=8em, yshift=-10em] {\ding{197}};
\node[below of=a0, xshift=15.5em, yshift=-10em] {\ding{198}};
\node[below of=a0, xshift=26em, yshift=-8.25em] {\ding{199}};
\node[below of=a0, xshift=26em, yshift=-11.5em] {\ding{200}};
\end{tikzpicture}
\end{equation*}
\ding{192} $M$ is a bistrong monad,
\ding{193} naturality of $\tau'$,
\ding{194} properties of strength,
\ding{195} monad laws,
\ding{196}
\ding{197}
\ding{198} 
\ding{199} naturality of $\mu$,
\ding{200} monad laws.

\subsection{Theorem~\ref{thm:revisitingrepresentation}}

We need to show that $\mladj s$ is a morphism between Eilenberg--Moore algebras. It is enough to show that the following diagram commutes:
\begin{equation*}
\begin{tikzpicture}
\node(a0){$MMA \otimes A$};
\node(a1)[below of=a0, yshift=-12em]{$MMA \otimes A$};
\node(b0)[below of=a0,xshift=6em,yshift=-4em]{$MMA \otimes MA$};
\node(c0)[below of=a0,xshift=16em,yshift=0em]{$M(MA \otimes A)$};
\node(b1)[below of=b0,xshift=0em,yshift=0em]{$MMA \otimes MMA$};
\node(x0)[right of=a0, xshift=28em]{$M(A \Rightarrow MA) \otimes A$};
\node(x1)[below of=x0, xshift=0em]{$M((A \Rightarrow MA) \otimes A)$};
\node(x2)[below of=x1, xshift=0em]{$MMA$};
\node(x3)[below of=x2, yshift=-4em]{$MA$};
\node(c1)[below of=c0, xshift=0em]{$M(MA \otimes MA)$};
\node(c2)[below of=c1, xshift=0em]{$M(MA \otimes MMA)$};
\node(c3)[below of=c2, xshift=-6em]{$MA \otimes MA$};
\node(y0)[right of=c2, xshift=6em]{$MM(MA \otimes MA)$};
\path[arrows={-latex}, font=\scriptsize]
(a0) edge node [auto] {$\mu \otimes \id$} (a1)
(a0) edge node [auto] {$\tau$} (c0)
(a0) edge node [auto,yshift=-0.6em] {$\id \otimes \eta$} (b0)
(b0) edge node [auto] {$\id \otimes \eta$} (b1)
(a0) edge node [auto] {$M \mladj s \otimes \id$} (x0)
(x0) edge node [auto] {$\tau$} (x1)
(x1) edge node [auto] {$M\app$} (x2)
(c0) edge node [auto] {$M(\mladj s \otimes \id)$} (x1)
(b0) edge node [auto] {$\tau$} (c1)
(c0) edge node [auto] {$M(\id \otimes \eta)$} (c1)
(c1) edge node [auto] {$Mm$} (x2)
(c1) edge node [auto] {$M(\id \otimes \eta)$} (c2)
(b1) edge node [auto] {$\tau$} (c2)
(x2) edge node [auto] {$\mu$} (x3)
(a1) edge node [auto] {$\id \otimes \eta$} (c3)
(c3) edge node [auto] {$m$} (x3)
(b1) edge node [auto,yshift=-0.3em] {$\mu \otimes \mu$} (c3)
(c2) edge node [auto] {$M\tau'$} (y0)
(y0) edge node [auto] {$\mu m$} (x3)
(c1) edge node [auto,yshift=-0.3em] {$M\eta$} (y0)
;
\node[below of=a0,xshift=1.25em,yshift=0em]{\ding{192}};
\node[below of=a0,xshift=7em,yshift=1em]{\ding{193}};
\node[below of=a0,xshift=18em,yshift=2.5em]{\ding{194}};
\node[below of=a0,xshift=24em,yshift=-1.5em]{\ding{195}};
\node[below of=a0,xshift=11em,yshift=-5.5em]{\ding{196}};
\node[below of=a0,xshift=19.7em,yshift=-6.9em]{\ding{197}};
\node[below of=a0,xshift=26em,yshift=-5.5em]{\ding{198}};
\node[below of=a0,xshift=18em,yshift=-10em]{\ding{199}};
\end{tikzpicture}
\end{equation*}
\ding{192} monad laws,
\ding{193} and
\ding{194} naturality of $\tau$,
\ding{195} $\app$ is the counit of the adjunction, and the definition of $s$
\ding{196}~naturality of $\tau$,
\ding{197} properties of strength,
\ding{198} monad laws,
\ding{199} coherence for $\tuple{MA,\mu,m,u}$.

Next, we need to  $\mladj s$ is a morphism between the monoid parts. It is easy to verify that it preserves the unit. As for the preservation of monoid multiplication, we first show that the following diagram commutes:
\begin{equation}\label{eq:start4}
  \begin{tikzpicture}
\node(a0){$MA \otimes MA$};
\node(a1)[below of=a0, yshift=-8em,yshift=1em]{$MA$};
\node(b0)[right of=a0, xshift=6em]{$M(MA \otimes A)$};
\node(b1)[below of=b0, xshift=0em]{$MA \otimes MMA$};
\node(b2)[below of=b1, xshift=0em,yshift=0.5em]{$MA \otimes MA$};
\node(c0)[right of=b0, xshift=6em]{$M(MA \otimes MA)$};
\node(c1)[below of=c0, yshift=-8em,yshift=1em]{$MMA$};
\path[arrows={-latex}, font=\scriptsize]
(a0) edge node [auto] {$\tau'$} (b0)
(b0) edge node [auto] {$M(\id \otimes \eta)$} (c0)
(a0) edge node [above,xshift=0.5em] {$\id \otimes M\eta$} (b1)
(b1) edge node [above] {$\tau'$} (c0)
(b1) edge node [auto] {$\id \otimes \mu$} (b2)
(b2) edge node [above] {$m$} (a1)
(a0) edge node [auto] {$m$} (a1)
(c0) edge node [auto] {$Mm$} (c1)
(c1) edge node [above] {$\mu$} (a1)
;
\node[below of=b0,yshift=2em]{\ding{192}};
\node[below of=b0,yshift=-2em,xshift=-6em]{\ding{193}};
\node[below of=b0,yshift=-2em,xshift=6em]{\ding{194}};
  \end{tikzpicture}
\end{equation}
\ding{192} naturality of $\tau'$,
\ding{193} monad laws,
\ding{194} the `twisted' coherence condition for $\tau'$ (see Remark~\ref{rem:symem}).

Now, to see that $\mladj s$ preserves the monoid multiplication, it is enough to show that the following diagram commutes, in which $E$ stands for $A \Rightarrow MA$:

\begin{equation*}
\begin{tikzpicture}
\node(a0){$(MA \otimes MA) \otimes A$};
\node(a1)[below of=a0, yshift=-38em]{$MA \otimes A$};
\node(a2)[right of=a1, xshift=4em]{$MA \otimes MA$};
\node(d0)[right of=a0, xshift=26em]{$(E \otimes E) \otimes A$};
\node(d1)[below of=d0, yshift=-6em]{$E \otimes (E \otimes A)$};
\node(d2)[below of=d1, yshift=-4em]{$E \otimes MA$};
\node(d3)[below of=d2, yshift=-4em]{$M(E \otimes A)$};
\node(d4)[below of=d3, yshift=-2em]{$MMA$};
\node(d5)[below of=d4, yshift=-6em]{$MA$};
\node(b1)[below of=a0, xshift=8em]{$(MA \otimes MA) \otimes MA$};
\node(b2)[below of=b1]{$MA \otimes (MA \otimes MA)$};
\node(b3)[below of=b2]{$MA \otimes M(MA \otimes A)$};
\node(b4)[below of=b3]{$MA \otimes M(MA \otimes MA)$};
\node(b5)[below of=b4]{$M(MA \otimes (MA \otimes MA))$};
\node(b6)[below of=b5]{$M(MA \otimes MA)$};
\node(b7)[below of=b6]{$MM(MA \otimes A)$};
\node(b8)[below of=b7]{$MM(MA \otimes MA)$};
\node(b9)[below of=b8, xshift=0em]{$MMMA$};
\node(c1)[right of=b1, xshift=8em]{$(E \otimes E) \otimes MA$};
\node(c2)[below of=c1]{$E \otimes (E \otimes MA)$};
\node(c3)[below of=c2]{$E \otimes M(E \otimes A)$};
\node(c4)[below of=c3]{$E \otimes MMA$};
\node(c5)[below of=c4]{$M(E \otimes MA)$};
\node(c6)[below of=c5]{$MM(E \otimes A)$};
\node(c7)[below of=c6]{$M(MA \otimes A)$};
\node(c8)[below of=c7]{$M(MA \otimes MA)$};
\node(c9)[below of=c8, xshift=0em]{$MMA$};
\path[arrows={-latex}, font=\scriptsize]
(a0) edge node [auto] {$(\mladj s \otimes \mladj s) \otimes \id$} (d0)
(a0) edge node [above, xshift=0.5em] {$\id \otimes \eta$} (b1)
(b1) edge node [auto] {$(\mladj s \otimes \mladj s) \otimes \id$} (c1)
(b2) edge node [auto] {$\mladj s \otimes (\mladj s \otimes \id)$} (c2)
(d0) edge node [above, xshift=-0.5em] {$\id \otimes \eta$} (c1)
(b1) edge node [auto] {$\cong$} (b2)
(c1) edge node [auto] {$\cong$} (c2)
(c2) edge node [auto] {$\id \otimes \tau'$} (c3)
(d0) edge node [auto] {$\cong$} (d1)
(d1) edge node [above,xshift=0.5em] {$\id \otimes (\id \otimes \eta)$} (c2)
(d1) edge node [above] {$\id \otimes \eta$} (c3)
(c3) edge node [auto] {$\id \otimes M\app$} (c4)
(d1) edge node [auto] {$\id \otimes \app$} (d2)
(d2) edge node [above] {$\id \otimes \eta$} (c4)
(c4) edge node [auto] {$\tau'$} (c5)
(c5) edge node [auto] {$M\tau'$} (c6)
(d2) edge node [above] {$\eta$} (c5)
(d2) edge node [auto] {$\tau'$} (d3)
(c6) edge node [auto] {$\mu$} (d3)
(c7) edge node [auto,xshift=1em] {$M(\mladj s \otimes \id))$} (d3)
(b2) edge node [auto] {$\id\otimes\tau'$} (b3)
(b3) edge node [auto,yshift=0.3em] {$\mladj s \otimes M(\mladj s \otimes \id)$} (c3)
(b3) edge node [auto] {$\id\otimes M(\id \otimes \eta)$} (b4)
(b4) edge node [auto] {$\mladj s \otimes Mm$} (c4)
(b4) edge node [auto] {$\tau'$} (b5)
(b5) edge node [auto] {$M(\mladj s \otimes m)$} (c5)
(b5) edge node [auto] {$M(\id \otimes m)$} (b6)
(b6) edge node [auto] {$M\tau'$} (b7)
(b7) edge node [auto] {$\mu$} (c7)
(c7) edge node [auto] {$M(\id \otimes \eta)$} (c8)
(b7) edge node [auto,,yshift=0.3em,xshift=2em] {$MM(\mladj s \otimes \id)$} (c6)
(b6) edge node [below,xshift=1em] {$M(\mladj s \otimes \id)$} (c5)
(c8) edge node [auto] {$Mm$} (d4)
(d3) edge node [auto] {$M\app$} (d4)
(d4) edge node [auto] {$\mu$} (d5)
(a0) edge node [auto] {$m \otimes \id$} (a1)
(a1) edge node [auto] {$\id \otimes \eta$} (a2)
(a2) edge node [auto] {$m$} (d5)
(b7) edge node [auto] {$MM(\id \otimes \eta)$} (b8)
(b8) edge node [auto] {$MMm$} (b9)
(b8) edge node [auto] {$\mu$} (c8)
(c9) edge node [auto] {$\mu$} (d5)
(b9) edge node [auto,yshift=-0.2em] {$\mu$} (d4)
(b9) edge node [auto,xshift=2em,yshift=-0.1em] {$M\mu$} (c9)
;
\draw [-latex, rounded corners=1em] (b6.west) -- ++(-2em,0) node[above, xshift=8em, yshift=-15em]{\scriptsize $Mm$} -- ++(0,-15em) -| (c9.south)
;
\node[below of=a0,xshift=2em,yshift=0em]{\ding{192}};
\node[right of=a0,xshift=10em,yshift=-1.5em]{\ding{193}};
\node[right of=a0,xshift=10em,yshift=-5.5em]{\ding{194}};
\node[right of=a0,xshift=21em,yshift=-5.5em]{\ding{195}};
\node[right of=a0,xshift=10em,yshift=-9.5em]{\ding{196}};
\node[right of=a0,xshift=19.2em,yshift=-10em]{\ding{197}};
\node[right of=a0,xshift=10em,yshift=-13.5em]{\ding{198}};
\node[right of=a0,xshift=21em,yshift=-13.5em]{\ding{199}};
\node[right of=a0,xshift=10em,yshift=-17.5em]{\ding{200}};
\node[right of=a0,xshift=19.2em,yshift=-18em]{\ding{201}};
\node[right of=a0,xshift=10em,yshift=-21em]{\ding{202}};
\node[right of=a0,xshift=10em,yshift=-24em]{\ding{203}};
\node[right of=a0,xshift=21em,yshift=-22em]{\ding{204}};
\node[right of=a0,xshift=16em,yshift=-26em]{\ding{205}};
\node[right of=a0,xshift=10em,yshift=-29.5em]{\ding{206}};
\node[right of=a0,xshift=21em,yshift=-29.5em]{\ding{207}};
\node[right of=a0,xshift=10em,yshift=-33.5em]{\ding{208}};
\node[right of=a0,xshift=21em,yshift=-36em]{\ding{209}};
\node[right of=a0,xshift=10em,yshift=-37.5em]{\ding{210}};
\end{tikzpicture}
\end{equation*}
\ding{192} see below,
\ding{193} $\otimes$ is a bifunctor,
\ding{194} and
\ding{195} naturality of $\cong$,
\ding{196}~naturality of $\tau'$,
\ding{197}~properties of strength,
\ding{198}~$\app$ is the counit of the adjunction,
\ding{199}~naturality of $\eta$,
\ding{200}~naturality of $\tau'$,
\ding{201}~properties of strength,
\ding{202}~$\otimes$~is a bifunctor,
\ding{203}~naturality of $\tau'$,
\ding{204}~naturality of $\eta$, and monad laws,
\ding{205}~and
\ding{206}~naturality of $\mu$, 
\ding{207}~$\app$ is the counit of the adjunction,
\ding{208}~naturality of $\mu$, 
\ding{209}~monad laws, 
\ding{210}~diagram~\eqref{eq:start4}.

Below, we detail \ding{192} from the diagram above.
\begin{equation*}
  \begin{tikzpicture}
\node(a0){$(MA \otimes MA) \otimes A$};
\node(b1)[below of=a0, xshift=0em]{$(MA \otimes MA) \otimes MA$};
\node(b2)[below of=b1]{$MA \otimes (MA \otimes MA)$};
\node(b3)[below of=b2]{$MA \otimes M(MA \otimes A)$};
\node(b4)[below of=b3]{$MA \otimes M(MA \otimes MA)$};
\node(b5)[below of=b4]{$M(MA \otimes (MA \otimes MA))$};
\node(b6)[right of=b5,xshift=8em]{$M(MA \otimes MA)$};  
\node(b7)[right of=b6,xshift=6em]{$MMA$};
\node(b8)[right of=b7,xshift=4em]{$MA$};
\node(c0)[right of=a0, xshift=26em]{$MA \otimes A$};
\node(c1)[below of=c0]{$MA \otimes MA$};
\node(x2)[above of=b7, yshift=8em]{$MA \otimes MA$};
\node(x1)[above of=b6, yshift=0em]{$MA \otimes MMA$};
\path[arrows={-latex}, font=\scriptsize]
(a0) edge node [auto] {$\id \otimes \eta$} (b1)
(b1) edge node [auto] {$\cong$} (b2)
(b2) edge node [auto] {$\id\otimes\tau'$} (b3)
(b3) edge node [auto] {$\id\otimes M(\id \otimes \eta)$} (b4)
(b4) edge node [auto] {$\tau'$} (b5)
(b5) edge node [auto] {$M(\id \otimes m)$} (b6)
(b6) edge node [auto] {$Mm$} (b7)
(b7) edge node [auto] {$\mu$} (b8)
(a0) edge node [auto] {$m \otimes \id$} (c0)
(b1) edge node [auto] {$m \otimes \id$} (c1)
(c0) edge node [auto] {$\id \otimes \eta$} (c1)
(c1) edge node [auto] {$m$} (b8)
(b2) edge node [auto] {$\id\otimes m$} (x2)
(x2) edge node [auto] {$m$} (b8)
(b4) edge node [auto] {$\id \otimes Mm$} (x1)
(x1) edge node [auto] {$\tau'$} (b6)
(x1) edge node [auto] {$\id \otimes \mu$} (x2)
;
\node[below of=a0,xshift=7em,yshift=2em]{\ding{192}};
\node[below of=a0,xshift=7em,yshift=-2em]{\ding{193}};
\node[below of=a0,xshift=7em,yshift=-8em]{\ding{194}};
\node[below of=a0,xshift=20em,yshift=-10em]{\ding{195}};
\node[below of=a0,xshift=7em,yshift=-13.5em]{\ding{196}};
  \end{tikzpicture}
\end{equation*}
\ding{192} $\otimes$ is a bifunctor,
\ding{193} associativity of monoid multiplication,
\ding{194} $(\id \otimes \pholder$)-image of the diagram~\eqref{eq:start4},
\ding{195}~the `twisted' coherence condition for $\tau'$ (see Remark~\ref{rem:symem}),
\ding{196} naturality of $\tau'$.

It is left to show that $\mladj s$ is a split mono. The retraction $r$ is given as follows:
\begin{equation*}
r = \Bigl(
(A \Rightarrow MA) \xrightarrow{\cong}
(A \Rightarrow MA) \otimes I \xrightarrow{\id \otimes u}
(A \Rightarrow MA) \otimes MA \xrightarrow{\tau'}
M((A \Rightarrow MA) \otimes A) \xrightarrow{M \app}
MMA \xrightarrow{\mu}
MA
\Bigr)
\end{equation*}
We verify that $r \cdot \mladj s = \id$:
\begin{equation*}
\begin{tikzpicture}
\node(a0){$MA$};
\node(a1)[below of=a0]{$A \Rightarrow MA$};
\node(a2)[below of=a1]{$(A \Rightarrow MA) \otimes I$};
\node(a3)[below of=a2]{$(A \Rightarrow MA) \otimes MA$};
\node(a4)[below of=a3]{$M((A \Rightarrow MA) \otimes A)$};
\node(a5)[below of=a4]{$MMA$};
\node(a6)[below of=a5]{$MA$};
\node(b0)[right of=a1, xshift=8em]{$MA\otimes I$};
\node(b1)[below of=b0, yshift=-4em]{$MA \otimes MA$};
\node(b2)[below of=b1]{$M(MA \otimes A)$};
\node(b3)[below of=b2]{$M(MA \otimes MA)$};
\node(c0)[right of=b2, xshift=4em, yshift=-4em]{$MA\otimes MMA$};
\node(c1)[below of=c0, yshift=0em, xshift=10em]{$MA\otimes MA$};
\path[arrows={-latex}, font=\scriptsize]
(a0) edge node [auto] {$\mladj s$} (a1)
(a1) edge node [auto] {$\cong$} (a2)
(a2) edge node [auto] {$\id \otimes u$} (a3)
(a3) edge node [auto] {$\tau'$} (a4)
(a4) edge node [auto] {$M\app$} (a5)
(a5) edge node [auto] {$\mu$} (a6)
(b0) edge node [auto] {$\id \otimes u$} (b1)
(b1) edge node [auto] {$\tau'$} (b2)
(b2) edge node [auto] {$M(\id \otimes \eta)$} (b3)
(a0) edge node [auto] {$\cong$} (b0)
(b0) edge node [above, xshift=-0.5em] {$\mladj s \otimes \id$} (a2)
(b1) edge node [above] {$\mladj s \otimes\id$} (a3)
(b2) edge node [above] {$M(\mladj s \otimes\id)$} (a4)
(b3) edge node [above] {$Mm$} (a5)
(c0) edge node [auto] {$\id \otimes \mu$} (c1)
(b1) edge node [auto] {$\id \otimes M\eta$} (c0)
(c1) edge node [above] {$m$} (a6)
(c0) edge node [above] {$\tau'$} (b3)
(b1) edge[bend left=30] node [above, xshift=0.2em] {$\id$} (c1)
;
\node[right of=a0,yshift=-3em]{\ding{192}};
\node[right of=a0,xshift=3em,yshift=-8em]{\ding{193}};
\node[right of=a0,xshift=3em,yshift=-13.5em]{\ding{194}};
\node[right of=a0,xshift=3em,yshift=-17.5em]{\ding{195}};
\node[right of=a0,xshift=12.5em,yshift=-18em]{\ding{196}};
\node[right of=a0,xshift=20em,yshift=-18em]{\ding{197}};
\node[right of=a0,xshift=8em,yshift=-22em]{\ding{198}};
\end{tikzpicture}
\end{equation*}
\ding{192} naturality of $\cong$,
\ding{193} $\otimes$ is a bifunctor,
\ding{194} naturality of $\tau'$,
\ding{195} $\app$ is the counit of the adjunction, and the definition of $s$,
\ding{196} naturality of $\tau'$,
\ding{197} monad laws,
\ding{198} the `twisted' coherence condition for $\tau'$ (see Remark~\ref{rem:symem})

\end{document}